\pgfplotsset{width=0.8\textwidth,compat=1.9}
\pgfplotsset{
    discard if not/.style 2 args={
        x filter/.code={
            \edef\tempa{\thisrow{#1}}
            \edef\tempb{#2}
            \ifx\tempa\tempb
            \else
                
            \fi
        }
    }
}
\tikzset{
farrow/.tip={stealth}
}
\crefname{lstlisting}{listing}{listings}
\Crefname{lstlisting}{Listing}{Listings}
\keywords{Multiverse Debugging, Embedded devices, WebAssembly}
\newtheorem{theorem}{Theorem}
\begin{document}

\title{MIO: Multiverse Debugging in the Face of Input/Output}
\subtitle{Extended Version with Additional Appendices}

\author{Tom Lauwaerts}
\orcid{0000-0003-1262-8893}
\affiliation{%
  \institution{Universiteit Gent}
  \city{Ghent}
  \country{Belgium}
}
\email{Tom.Lauwaerts@UGent.be}

\author{Maarten Steevens}
\orcid{0009-0002-6339-0467}
\affiliation{%
  \institution{Universiteit Gent}
  \city{Ghent}
  \country{Belgium}
}
\email{Maarten.Steevens@UGent.be}

\author{Christophe Scholliers}
\orcid{0000-0002-2837-4763}
\affiliation{%
  \institution{Universiteit Gent}
  \city{Ghent}
  \country{Belgium}
}
\email{Christophe.Scholliers@UGent.be}

\hyphenation{mi-cro-con-troller}
\hyphenation{mi-cro-con-trollers}

\definecolor{color5}{RGB}{105,214,255}
\definecolor{color4}{RGB}{37,187,255}
\definecolor{color3}{RGB}{0,153,248}
\definecolor{color2}{RGB}{0,100,210}
\definecolor{color1}{RGB}{0,30,130}

\definecolor{color0}{RGB}{242,48,90}

\begin{abstract}
Debugging non-deterministic programs on microcontrollers is notoriously challenging, especially when bugs manifest in unpredictable, input-dependent execution paths.
A recent approach, called multiverse debugging, makes it easier to debug non-deterministic programs by allowing programmers to explore all potential execution paths.
Current multiverse debuggers enable both forward and backward traversal of program paths, and some facilitate jumping to any previously visited states, potentially branching into alternative execution paths within the state space.

Unfortunately, debugging programs that involve input/output operations using existing multiverse debuggers can reveal inaccessible program states, i.e. states which are not encountered during regular execution.
This can significantly hinder the debugging process, as the programmer may spend substantial time exploring and examining inaccessible program states, or worse, may mistakenly assume a bug is present in the code, when in fact, the issue is caused by the debugger.

This paper presents a novel approach to multiverse debugging, which can accommodate a broad spectrum of input/output operations.
We provide the semantics of our approach and prove the correctness of our debugger, ensuring that despite having support for a wide range of input/output operations the debugger will only explore those program states which can be reached during regular execution.

We have developed a prototype, called MIO, leveraging the WARDuino WebAssembly virtual machine to demonstrate the feasibility and efficiency of our techniques.
As a demonstration of the approach we highlight a color dial built with a Lego Mindstorms motor, and color sensor, providing a tangible example of how our approach enables multiverse debugging for programs running on an STM32 microcontroller.
\end{abstract}

%%
%% The code below is generated by the tool at http://dl.acm.org/ccs.cfm.
%%
\begin{CCSXML}
<ccs2012>
   <concept>
       <concept_id>10010520.10010553.10010562.10010564</concept_id>
       <concept_desc>Computer systems organization~Embedded software</concept_desc>
       <concept_significance>300</concept_significance>
       </concept>
   <concept>
       <concept_id>10011007.10011006.10011039.10011311</concept_id>
       <concept_desc>Software and its engineering~Semantics</concept_desc>
       <concept_significance>500</concept_significance>
       </concept>
   <concept>
       <concept_id>10011007.10011074.10011099.10011102.10011103</concept_id>
       <concept_desc>Software and its engineering~Software testing and debugging</concept_desc>
       <concept_significance>500</concept_significance>
       </concept>
   <concept>
       <concept_id>10011007.10011006.10011066.10011069</concept_id>
       <concept_desc>Software and its engineering~Integrated and visual development environments</concept_desc>
       <concept_significance>300</concept_significance>
       </concept>
 </ccs2012>
\end{CCSXML}

\ccsdesc[300]{Computer systems organization~Embedded software}
\ccsdesc[500]{Software and its engineering~Semantics}
\ccsdesc[500]{Software and its engineering~Software testing and debugging}
\ccsdesc[300]{Software and its engineering~Integrated and visual development environments}

\maketitle

\section{Introduction}

Debugging non-deterministic programs is a challenging task, since bugs may only appear in very specific execution paths \cite{mcdowell89, gurdeep22}.
This is especially true for microcontroller programs, which typically interact heavily with the environment.
This makes reproducing bugs unreliable and time-consuming, a problem that traditional debuggers do not account for.
By contrast, multiverse debugging \cite{torres19} is a novel technique that solves this problem by allowing programmers to explore all possible execution paths.
A multiverse debugger allows users to move from one execution path to another, even jumping to arbitrary program states in parallel execution paths. This entails traveling both forwards and backwards in time, i.e. a multiverse debugger is also a time travel debugger.
So far, existing implementations work on abstract execution models to explore all execution paths of a program \cite{torres19,pasquier22,pasquier23,pasquier23a}. Within these semantics, only the internal state of the program is controlled.

Unfortunately, debugging programs that involve input/output (I/O) operations using existing multiverse debuggers can reveal inaccessible program states that are not encountered during regular execution.
This is known as the probe effect \cite{gait86}, and can occur in multiverse debuggers when they do not account for the effect of I/O operations on the external environment when changing the program state.
Encountering such states during the debugging session can significantly hinder the debugging process, as the programmer may mistakenly assume a bug is present in the code, when in fact, the issue is caused by the debugger.
In this paper, we investigate how we can scale multiverse debugging to programs running on a microcontroller which interacts with the environment through I/O operations.
This introduces three new challenges.

First, the effect of output operations on the environment can influence later states in the execution path, for example when a robot drives forward.
Therefore, when stepping backwards, the changes made to the environment by the program must be reverted to stay consistent with normal execution.
Otherwise, the debugger may enter inaccessible execution paths.
This far-reaching probe effect, is especially difficult to control when making arbitrary jumps in the execution tree.
Second, input operations from the external environment make it difficult to maintain reproducibility \cite{frattini16}.
There are often too many possible execution paths to explore due to an infinite range of inputs.
Furthermore, it is impractical for developers to enumerate all possible inputs, or to perfectly configure the environment to achieve a specific execution path every time.
Without a way to handle the large ranges of possible inputs, the reproducibility of non-deterministic bugs in the multiverse debugger becomes challenging.
Third, due to the hardware limitations of the microcontrollers that we target it is unfeasible to run the multiverse debugger entirely on the microcontroller.
We thus need to expand multiverse debugging so that it can be used even in such a restricted environment. 

\subsection{Overview}

In this paper, we present a new approach to multiverse debugging tackling the three main challenges posed in the introduction making several key scientific contributions.

First, we establish a well-defined set of I/O operations that are \emph{deterministically reversible}, which is a crucial requirement for the correctness proof of our multiverse debugger.
Second, we formalize our debugger as a \emph{remote debugger} with \emph{sparse snapshotting} semantics to accommodate the hardware limitations of microcontrollers.
Third, we define our semantics such that the multiverse debugger can instrument input operations, which is needed to allow the programmer to interactively explore the execution tree.
Finally, we formulate our semantics in a \emph{general and abstract} way with \emph{novel correctness criteria}, providing valuable guidelines to formalize debuggers in general. 

To demonstrate the practical applicability of our formal semantics we present MIO, a prototype multiverse debugger build on top of the WARDuino WebAssembly virtual machine~\cite{lauwaerts24a}. 
MIO can debug code on various microcontrollers, including the ESP32, STM32, and Raspberry Pi Pico. 

Our implementation also makes several technical contributions. 
It is, to our knowledge, the first \emph{concrete multiverse debugger} to operate on a real-world, multi-language ecosystem (i.e., WebAssembly), moving beyond existing multiverse debuggers which have primarily focused on \emph{abstract models}~\cite{torres19,pasquier22,pasquier23,pasquier23a}
Furthermore, MIO is a \emph{remote} multiverse debugger specifically tailored for resource-constrained microcontrollers. 
To address the significant performance cost of state capture on these devices, this work implements and benchmarks for the first time a \emph{sparse snapshotting} approach. 
Our results confirm that this technique is essential for achieving practical performance, mitigating the high overhead of traditional (full) snapshotting.
 
We start the article with an overview of multiverse debugging with I/O in \cref{sec:practice}, discussing the different parts of our prototype and the underlying challenges they overcome.
This is followed by \cref{sec:multiverse-debugger} with a detailed discussion of our formalism.
We start this section with a discussion of the requirements for the I/O operations, and a brief summary of the key takeaways of the formalism.
Next, in \cref{sec:implementation} we discuss our prototype implementation and show how the \emph{deterministically reversible} primitives are implemented.
In \cref{sec:evaluation} we provide a quantitative evaluation of our snapshotting approach and discuss a case study showing how the technique works in practice.
We conclude the article with the related work and conclusion in sections \ref{sec:related} and \ref{sec:conclusion}.    

\section{MIO: Multiverse Input/Output Debugger in Practice}\label{sec:practice}

Before we discuss the details of our contributions and the implementation, we give an overview of how our multiverse debugger, MIO, works in practice. We will use a simple example to illustrate the different concepts of our multiverse debugger.

\subsection{Example: A Light Sensor Program}

\begin{figure}
    \centering
    \includegraphics[width=0.8\textwidth]{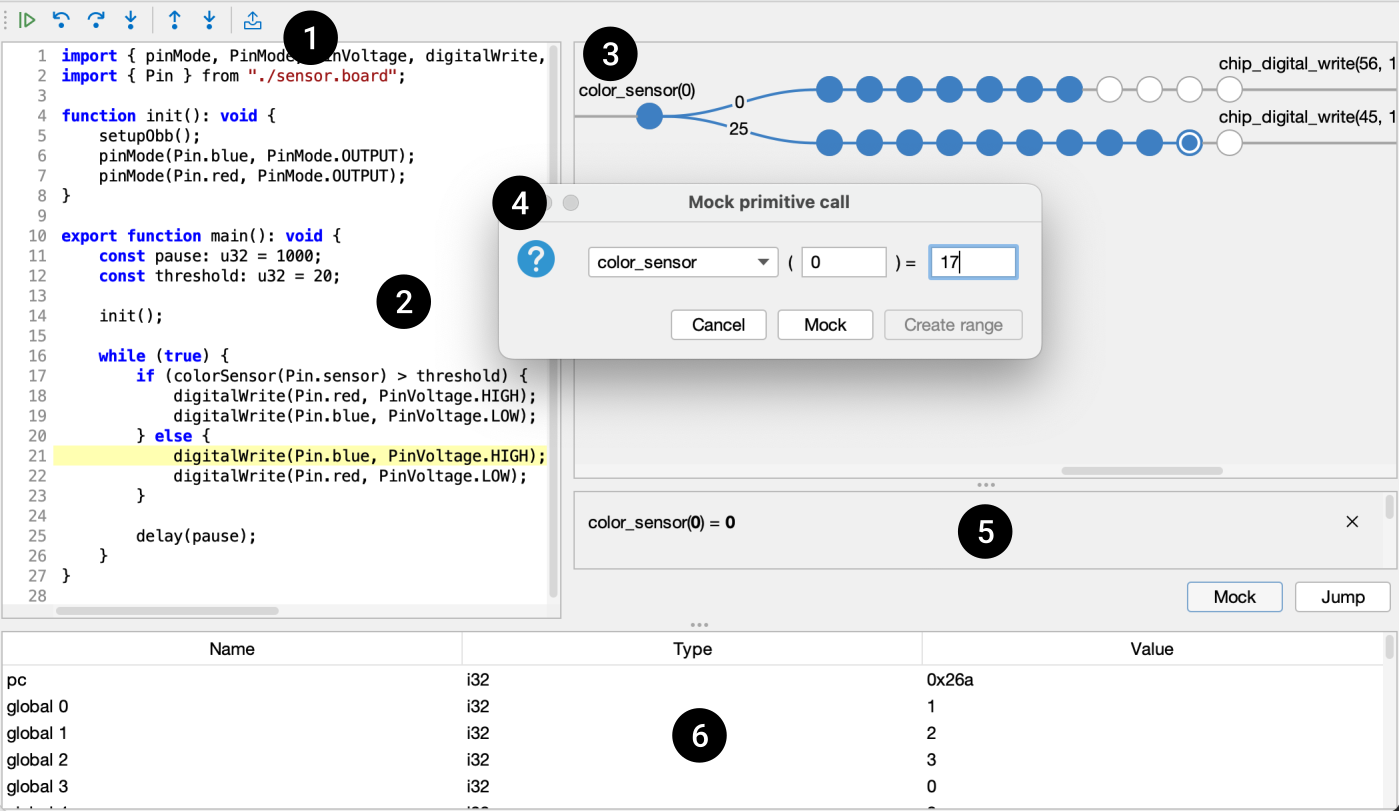}
    \caption{Screenshot of the MIO debugger debugging a small light sensor program.
        Top (1): the debug operations; pause or continue, step back, step over, step into, step to the previous line, step to the next line, and update software.
        Left pane (2): source code.
        Top-right pane (3): the multiverse tree. %, where each node is a WebAssembly instruction.
        Popup window (4): window to add mocked input.
        Bottom-right pane (5): the current mocked input.
        Bottom pane (6): general debugging information, such as the global and local variables, and the current program counter.
    }
    \label{fig:multiverse-debugger}
\end{figure}

Consider a simple program that reads a value from a color sensor, to measure the light intensity, and turns on an LED with a color corresponding to the value read.
The left side of \cref{fig:multiverse-debugger} (2) shows the example program written in AssemblyScript \cite{battagline21}.
Reading the value from the color sensor happens through the \emph{colorSensor} function, and changing the color of the LED happens through the \emph{digitalWrite} function.
In an infinite loop, the program reads a value from the color sensor.
If the value is below a threshold, the red LED is turned on, and otherwise the blue LED is turned on.
At the end of the loop the program waits for 1000 milliseconds before starting the next iteration.

\subsection{The Frontend of the Multiverse Debugger}

\Cref{fig:multiverse-debugger} shows the MIO multiverse debugger in action.
While debugging, the program and the debugger backend run on a microcontroller connected to a computer running the debugger frontend, shown in the screenshot.
The left pane (2) shows the program being debugged, in this case the light sensor program.
In the top left corner (1), the user can see the debug operations available, which are in order; pause or continue, step back, step over, step into, step to the previous line, step to the next line, and finally, update the code on the microcontroller.
The top-right pane (3) shows the multiverse tree, where each edge represents a WebAssembly instruction.
In MIO all nodes are considered unique states, which means that loops are unrolled and no states will ever collapse into each other, making this graph always a rooted tree.
MIO allows input values to be mocked using the \emph{Mock} button, which triggers a popup window (4) where users can specify the mocked return value for a given input primitive.
The \emph{Create range} button can be used to add a range of new branches to the multiverse tree.
The mocked primitives show up in the bottom-right pane (5), where they can also be removed again.
The bottom pane (6) shows the global and local WebAssembly variables, and other program state information.

While debugging the light sensor program, the user can pause the program at any moment.
At this point, the right pane will show the already explored paths of the multiverse tree.
It is important to note that the multiverse tree does not correspond to a control flow graph, but shows every succeeding program state.
Every edge in the tree represents the concrete execution of a single WebAssembly instruction, and every node represents the program state after that instruction.
The multiverse tree in \cref{fig:multiverse-debugger} labels every node before the execution of a primitive with the primitive name, and labels the outgoing edges with the associated return value.

\subsection{Debugging with the Multiverse Debugger}

A developer can use the multiverse debugger to explore the execution of the light sensor program, using the debug operations in the top left corner, following normal debugging conventions.
While the right panel continually updates to show where in the program's execution the user is.
However, the multiverse debugger allows for more than just stepping forwards and backwards through the program.
The user can also jump to any node in the multiverse tree by simply clicking on that node, and explore the program from that point onwards with the available debugging instructions.

\subsection{Exploring the multiverse tree}

The screenshot in \cref{fig:multiverse-debugger} shows the program paused at line 21, after reading a value from the color sensor.
This value was 0, and if the user steps forward, the program will turn on the blue LED (pin 56) as shown by the multiverse tree.
In an earlier stage, the user actually read a value of 25.
Using the debugger, it is possible to explore that execution path again, without having to recreate the situation where the sensor reads 25.
This is done by clicking on any node on the execution path where the sensor reads 25, which is shown as a blue node with a white circle in it.
If the user presses the \emph{Jump} button, the debugger will traverse the blue-indicated path in the multiverse tree, from the current node to the selected node.
When reading the sensor value, the debugger will use mock input actions, to override the normal behavior of the sensor and return the value 25.

The mocking mechanism is a crucial part of the multiverse debugger, since it allows the user to explore different execution paths without having to recreate the exact conditions that led to that path.
It is therefore not only a part of the automatic traversal of the multiverse tree, but also a part of the manual exploration of the tree.
While a user is stepping forwards through a program, and a non-deterministic input primitive is encountered, they can press the \emph{Mock} button below the right pane instead of the \emph{step} button.
When doing so, window (4) shown in \Cref{fig:multiverse-debugger} is used. This window allows the user to specify that the virtual machine should return a specific value when a primitive is called with certain arguments.
This allows users to easily reproduce bugs that might not occur with the sensor values read from the current environment.

Now that we have established how our multiverse debugger works in practice, and its terminology, we will discuss the challenges in creating a multiverse debugger for microcontrollers.

\subsection{Challenge C1: Inconsistent External State during Backwards Exploration}
When moving backwards in time, not only the internal, but also the external state has to be restored.
A multiverse debugger that allows for backwards exploration, but does not handle the external state, will be able to create impossible situations.
For instance, if the light sensor program is paused when the red LED has just been turned on, the debugger can move backwards in time to when the light intensity is measured.
Reading a different value as you step forwards again, this time, the blue LED could be turned on, even though the red LED is still on.
This is an impossible situation, since the LEDs can never be on at the same time in a normal execution of the program.
Stepping back over line 21, therefore, requires a compensating action, which turns off the red LED.
Luckily, such actions can be defined for many different output operations~\cite{schultz20,laursen18}.

\subsection{Challenge C2: Exploring Non-deterministic Input in Multiverse Debuggers}

When a program's execution path is determined by input from the environment, the multiverse debugger needs to explore different execution paths by traveling back in time and changing the input to the program.
However, it is impractical for developers to cover all possible inputs, or to configure the environment exactly right for a specific execution path every time.
To solve this problem, we propose a new approach to multiverse debugging that uses time travel debugging in combination with virtual machine instrumentation to mock input values.
It is important that no impossible values are used for mocking, since they can lead to inaccessible program states.

\subsection{Challenge C3: Keeping Track of the Program State Efficiently}

In order for multiverse debugging to work on a microcontroller it has to keep track of the program state and its output effects on the external environment.
However, tracking this information on microcontrollers is not feasible due to the limited memory capacity of these devices.
Therefore, the MIO debugger is a remote debugger, which enables minimal interference with the microcontroller, and allows information to be stored on a more powerful computer.
Even so, it is not feasible to take snapshots for every executed instruction as it would slow down the program significantly, and the size of the snapshot history would quickly become unmanageable.

In time travel debugging, this problem is usually solved by taking snapshots at regular intervals, called checkpoints.
When moving backwards in time, the debugger can then jump to the nearest checkpoint, and replay the program's execution from that point.
To make a checkpointing system work correctly for multiverse debugging, we had to adopt a slightly different approach.
Instead of only taking snapshots at regular intervals, we also take snapshots after each input or output action.
By carefully choosing when snapshots are taken we can ensure the correctness of the replay mechanism. Using this approach, we can significantly reduce the run-time overhead, making multiverse debugging practically usable on microcontrollers.

\section{A Multiverse Debugger for WebAssembly}\label{sec:multiverse-debugger}
In this section, we discuss the operation of our multiverse debugger through a small-step semantic defined over a stack-based language.
We use WebAssembly as an example language, since it has a full and rigorous language semantics~\cite{haas17,rossberg19,rossberg23}.
However, our small-step rules include very few details specific to WebAssembly.
Therefore, we believe that the principles of our multiverse debugger can be applied to any stack-based language, with minimal effort.

\subsection{Overview of the Formal Semantics}
This section provides a high-level overview of our formalization, abstracting away some of the (important) details. 
We present this overview in the hope that it makes it easier to apply our techniques to other programming languages. 
We follow the approach of \citet{torres19} in presenting a debugger formalisation recipe.
The formalization of our multiverse debugger for an arbitrary language, consists of the steps shown in~\cref{fig:recipe}. 
 For each step, we also indicate the relevant section demonstrating its specific application to WebAssembly.

\subsubsection*{Base Language Semantics}
First, the small-step operational semantics (SOS) for the base language are defined. 
This establishes a formal transition system that precisely describes how a program's configuration evolves one computational step at a time. 
The semantics are defined as a transition relation, relating the program state before taking a step $c$ to the program state after taking a step $c'$. 
We will use the following notation, $c \hookrightarrow c'$, where $c$ and $c'$ are program configurations (e.g., the state of the entire program including the stack, memory, etc.. ).  
This formal model serves as the base layer upon which all subsequent semantics are built. 

\begin{figure}
\noindent\rule[0.5ex]{\linewidth}{0.5pt}
\begin{enumerate}
    \item \textbf{Base Language Semantics:} First, define the small-step operational semantics for the base language (\Cref{sec:base_semantics}). 
    This provides the base model of program execution.

    \item \textbf{Well-Behaved I/O:} Extend the base language semantics to ensure all I/O primitives (\Cref{sec:ioprimitives}) conform to the requirements for reversible I/O (\Cref{sec:iorequirements}). This step is crucial for preventing the exploration of inaccessible states.

    \item \textbf{Debugger Semantics:} Define the \emph{remote} multiverse debugger in terms of the extended base language semantics. 
    This involves the following steps:
    \begin{enumerate}
        \item Define the debugger's state, which encapsulates the program's state in the extended base language (\Cref{sec:configuration}).
        \item Define forward execution steps for the debugger, incorporating state snapshotting to enable exploration (\Cref{sec:forwards}).
        \item Define backward execution steps that allow reverting to previously snapshotted states with compensation (\Cref{sec:backwards}).
        \item Define I/O mocking operations that allow the debugger to control and observe interactions with the environment (\Cref{sec:mocking}).
    \end{enumerate}

    \item \textbf{Correctness Proof:} Finally, prove the correctness of the debugger (\Cref{sec:correctness}). 
    Specifically, show that the debugger only traverses program states that are reachable during a regular, non-debugging execution.
\end{enumerate}
\noindent\rule[0.5ex]{\linewidth}{0.5pt}
  \caption{Overview of steps to formalize a MIO multiverse debugger for a given base language.}
  \label{fig:recipe}
\end{figure}

\subsubsection*{Well-Behaved I/O}
Unconstraint I/O operations present a challenge for reversibility, to address this the base semantics are extended by introducing \textbf{compensating actions}. 
The semantics of each I/O primitive are redefined to return not just a value $v$, but a pair $(v, r)$, where $r$ is the specific ``undo'' instruction for that action.
For instance, consider a primitive \texttt{set\_pin\_high(13)} that produces a return value $v$ (i.e. a code indicating that the action completed correctly). 
In our semantics this action is adjusted to also generates its unique compensating action $r$. 
This compensating action $r$ is a function that encapsulates the logic for the reverse operation, for example \texttt{set\_pin\_low(13)}.

\subsubsection*{Debugger Semantics}
The semantics of the \emph{remote} debugger can be modeled as another transitioning relation $ d_c \hookrightarrow_d d_c'$~\footnote{Note the $d$ subscript on the arrow} working over a debugging configuration $d_c$. 
This debugging configuration $d_c$ consists of the debugger state (i.e. paused or running), a message box, the underlying program state $c$ and a list of compensating actions. 
Interestingly the debugger's transition relation, $d_c \hookrightarrow_d d_c'$, is defined in terms of the underlying program's transition relation, $c \hookrightarrow c'$, which it invokes to drive the execution.
For example, consider a \texttt{step} command issued to the debugger under the form of an incoming message. 
The debugger, currently in a \texttt{paused} state with program state $c$, will use the underlying base semantics to compute its next state, i.e. it takes one step in the base language semantics $c \hookrightarrow c'$. Once this underlying transition is determined, the debugger completes its own transition, $d_c \hookrightarrow_d d_c'$, by updating its configuration to contain the new program state $c'$. 

During a standard program execution of an I/O action, the forward semantics ($c \hookrightarrow c'$) use the return value $v$ and simply discard the compensating action $r$~\footnote{This would of course be optimized in an actual implementation.}. 
In contrast, in the debugger's semantics these compensating actions are captured. 
This allows for valid backward transitions, to reverse the \texttt{set\_pin\_high(13)} operation from the example, the debugger executes its saved compensating action, restoring the system to its previous state by setting the pin low.

While this overview already gives a high-level summary of the core semantics, it omits some important details. 
In the following sections we present the precise inference rules for forward execution with snapshotting (\Cref{sec:forwards}) and backward execution using compensation (\Cref{sec:backwards}) which make our approach scalable for use on microcontrollers.  
Finally, we will define the semantics for I/O mocking (\Cref{sec:mocking}).
We start with detailing the requirements for I/O operations.

\subsection{Requirements for I/O operations}
\label{sec:iorequirements}
While our multiverse debugger can deal with a large set of I/O operations, there are some limitations that make it impossible to support certain I/O operations.
Intuitively, our multiverse debugger supports non-deterministic input primitives as long as the range of possible input values is known.
Output primitives are supported as long as they are atomic, and are deterministically reversible, i.e. after reversing an output operation the environment will be in the same state as before applying the operation. In \cref{sec:Discussion}, we discuss the limitations and opportunities for future improvements that these requirements impose on our debugger.

\paragraph{Input primitives}
Input primitives are allowed to be non-deterministic as long as the \textit{range} of the input primitives is known, i.e. a temperature sensor might have a range between -20 degrees till 160 degrees. Knowing the range of our input primitives is important so that the debugger can be instrumented to only sample values that can actually be observed during normal execution.

\paragraph{Output primitives}
First, we require output primitives to be synchronous and atomic, i.e.  all side effects from the operation have to be fully completed during the call of the operation in the virtual machine. 
Second, for a given execution of the I/O operation, there must be a \textit{deterministic compensating action}. 
This compensating action reverses the effects of the forward execution, bringing the environment back to a state before executing the actions. 

\paragraph{Predictable dependencies}
We assume \textit{predictable} dependencies of the I/O operations are known, for example consider a setup where an LED is directly pointed towards a light sensor. During regular execution, turning the LED on will directly influence the possible values which can be read, i.e. there is a dependency between the output pin and the possible sensor values which can be read. 
Our MIO debugger, has initial support for expressing such simple dependencies, in the semantics however, we take abstraction and assume that sensor values are independent. 

\subsection{WebAssembly Language Semantics}
\label{sec:base_semantics}
WebAssembly is a low-level, portable binary code format designed for safe and efficient execution on various platforms.
Its formal semantics are grounded in a stack-based virtual machine, where instructions and values operate on a single stack with strict typing to guarantee fast static validation.
We base our formalisation on the semantics from the original WebAssembly paper by \citet{haas17}, where the core semantics include structured control flow (blocks, loops, and conditionals) and memory management via linear memory.
WebAssembly intentionally excludes external interface definitions, including I/O operations, to minimize its dependence on platform-specific details.
This design choice enables us to deliberately sculpt the I/O operations so that they are deterministically reversible, a necessary condition for multiverse debugging.

The execution of a WebAssembly program is defined by a small-step reduction relation, denoted as $\hookrightarrow_{i}$ where, $i$ refers to the index of the currently executing module. 
The relation $\hookrightarrow_{i}$ is defined over a configuration $\{s;v^*;e^*\}$, with global store $s$, local values $v^*$, and the current stack of instructions $e^*$.
The reduction rules take the form $\{s;v^*;e^*\} \hookrightarrow_{i} \{s';v'^*;e'^*\}$.
Important for our semantics, the global store $s$ contains instances of modules, tables, and memories.
The global store allows access to any function within a module instance, denoted as $s_{\textit{func}}(i,j)$, where $i$ represents the module index and $j$ corresponds to the function index.

\subsection{Extending WebAssembly with Primitive I/O Operations}\label{sec:webassembly}

Since multiverse debuggers explore all possible execution paths, they have to be able to reproduce the program's execution, even when non-determinism is involved.
It is therefore important to consider where non-determinism is introduced in the program, and how it can be handled.
Since the WebAssembly semantics on their own are fully deterministic, we can choose precisely where non-determinism is introduced in our system.
In the context of microcontrollers, non-deterministic input is unavoidable. % since the external environment is inherently non-deterministic.
Our system therefore limits non-determinism exclusively to the input\footnote{In this work we do not consider parallelism as a source of non-determinism, since this has been examined thoroughly by the original paper on multiverse debugging by \citet{torres19}.}.
This means that each branch in the execution tree can be traced to a different input value.
The output primitives in MIO, on the other hand, are deterministic in terms of the program state.
\Cref{sec:mocking} discuss how the debugger can reproduce non-deterministic input reliably through input mocking.

\subsubsection{Primitive I/O operations}
\label{sec:ioprimitives}

\begin{figure*}
        \[
	\begin{array}{ l l c l }
            \emph{(WebAssembly Program state)} & K                         & \Coloneqq  & \{ s, v^*, e^* \} \\
            \emph{(Global store)}              & s                         & \Coloneqq  & \{ \textsf{inst } \textit{inst}^*, \textsf{tab } \textit{tabinst}^*, \textsf{mem } \textit{meminst}^*, \colorbox{lightgray}{\textsf{prim} $P$} \} \\
            %\emph{(Instances)}                 & \textit{inst}             & \Coloneqq  & \{ \textsf{func } \textit{cl}^*, \textsf{glob } v^*, \textsf{tab } i^?, \textsf{mem } i^?, \colorbox{lightgray}{\textsf{prim} $P$} \} \\
            \emph{(Primitive table)}           & \colorbox{lightgray}{$P$} & \Coloneqq  & \colorbox{lightgray}{$p^*$} \\
            \\
            \hline \\
            \emph{(Primitive)}                 & \colorbox{lightgray}{p}   & \coloneq  & \colorbox{lightgray}{$f : v^* \rightarrow \{ \textsf{ret } v , \textsf{cps } r \}$} \\
            \emph{(Compensating action)}       & \colorbox{lightgray}{r}   & \coloneq  & \colorbox{lightgray}{$f : \epsilon \rightarrow \epsilon$} \\
        \end{array}
	\]
    \caption{The configuration for the reversible primitives embedded in the WebAssembly semantics from the original paper by \citet{haas17}, the differences are highlighted in gray. \emph{Top:} The WebAssembly semantics extended with a primitive table. \emph{Bottom:} The signatures of primitive and their compensating actions.}
	\label{fig:prim-def}
\end{figure*}

We extend WebAssembly with a set of primitives $P$ that fulfil the prerequisites outlined above.
\Cref{fig:prim-def} shows the definition of the primitives in WebAssembly.
Similar to functions in WebAssembly, each primitive can be identified by a unique index $j$.
Looking up primitives is done through the global $P(j)$, and calling a primitive returns both the return value $v$ and the compensating action $r$.
The function $r$ compensates, or reverses, the effects of the primitive, but takes no arguments and returns nothing as indicated by its type $\epsilon \rightarrow \epsilon$.
There is no need for any arguments since the compensating action is generated uniquely for each execution of the primitive.

\begin{figure}
	\begin{mathpar}
                \inferrule[(\textsc{input-prim})]
       	            { 
                        P(j) = p \\
                        p \in P^{In} \\
                        v \in \lfloor p(v_0^*)_{ret} \rfloor \\
                    }
                    {
                      \{ s ;v^*; v^*_0 (call \; j) \} 
                      \hookrightarrow_{i}
                      \{ s ;v^*; v \} }

                                    \inferrule[(\textsc{output-prim})]
       	            { 
                        P(j) = p \\
                        p \in P^{Out} \\
                        \lfloor p(v_0^*)_{ret} \rfloor = v \\
                    }
                    {
                      \{ s ;v^*; v^*_0 (call \; j) \} 
                      \hookrightarrow_{i}
                      \{ s ;v^*; v \} }
	\end{mathpar}
        \caption{Extension of the WebAssembly language with non-deterministic input primitives.}
	\label{fig:language}
\end{figure}

\subsubsection{Forwards Execution of Primitives}

Given the definition of the primitives in $P$, we can define the forwards execution of the primitives in WebAssembly, as shown in \cref{fig:language}.
Non-determinism is introduced exclusively through the \textsc{input-prim} rule, which is used to evaluate input primitives.
The evaluation of the primitive $p$ non-deterministically returns a value $v$ from the codomain of the primitive function, $\lfloor p(v^*_0)_{ret} \rfloor$.
Here, the rule simply discards the compensating action, and places the return value of the primitive on the stack.
The \textsc{output-prim} rule works analogously, except the evaluation produces its return value deterministically.
Note that compensating actions $p(v^*_0)_{cps}$ are not used for the regular forward execution, but are crucial when moving backwards in time.
In the next sections, we show how the compensating actions are used during multiverse debugging.

\subsection{Configuration of the Multiverse Debugger} \label{sec:configuration}
\begin{figure*}
        \[
	\begin{array}{ l l c l }
            \emph{(Debugger state)}         & \textit{dbg}               & \Coloneqq  & \langle es, msg, mocks, K_n \; | \; S^* \rangle \\
            \emph{(Execution state)}        & es                         & \Coloneqq  & \textsc{play} \; | \; \textsc{pause} \\
            \emph{(Incoming messages)}      & msg                        & \Coloneqq  & \varnothing \; | \; \textit{step} \; | \; \textit{stepback} \; | \; \textit{pause} \; | \; \textit{play} \; | \\
                                            &                            &            & \textit{mock} \; | \; \textit{unmock} \\
            \emph{(Program state)}          & K                          & \Coloneqq  & \{ s, v^*, e^* \} \\
            \emph{(Overrides)}              & mocks                      & \Coloneqq  & \varnothing \\
                                            &                            &            & \textit{mocks}, (j,v^*) \mapsto v \\
            \emph{(A snapshot)}             & S_n                        & \Coloneqq  & \{ K_m , p_{cps} \} \\
            \emph{(Snapshots list)}         & S^*                        & \Coloneqq  & S_0 \cdot ... \cdot S_{n-1} \cdot S_n \\
            \emph{(Starting state)}         & dbg_{start}                & \Coloneqq  & \langle \textsc{pause}, \varnothing, \varnothing, K_0 \; | \; \{ K_0 , E \} \rangle \\
            \emph{(Empty action)}           & r_{nop}                    & \Coloneqq  & \lambda () . \textsf{nop} \\
        \end{array}
	\]
        \caption{The multiverse debugger state for WebAssembly with input and output primitives.}
	\label{fig:configurations}
\end{figure*}

Using the recipe for defining debugger semantics from \citet{torres19}, we can define our multiverse debugger on top of the extended WebAssembly semantics presented in \cref{sec:webassembly}.
\Cref{fig:configurations} shows the configuration of the multiverse debugger for WebAssembly with input and output primitives.
The program state in the underlying language semantics is labeled with an iteration index $n$ corresponding to the number of steps in the underlying semantic since the start of the execution, or the depth in the multiverse tree.
The debugger state $dbg$ contains the execution state of the program, incoming debug message $msg$, mocked input $mocks$, the program state $K_n$, and the snapshot list $S^*$.
The snapshots $S^*$ are a cons list of snapshots $S_n$, containing the program state $K_n$ and the compensating action $p_{cps}$.
The rules presented in the following sections show how the snapshot list is extended—and how the snapshots are used to travel back in time.

Mocked inputs are stored as a key value pairs, where the index identifying the input primitive $j$, and the list of argument values $v^*$ are mapped to the overriding return value $v$.
The key value map is represented here as a partial function, which compares lists of values $v^*$ element-wise.
For any key that is not defined in the map, we write $mocks(j,v^*) = \varepsilon$.

The starting state of the debugger $\textit{dbg}_{\textit{start}}$ is defined as the paused state with no incoming or outgoing messages, an empty mocks environment, the initial program state $K_0$, and a snapshot list containing only the initial snapshot $S_0 = \{ K_0, r_{nop} \}$.
Here $r_{nop}$ is the empty action, which takes no arguments and returns nothing.
This function indicates that no compensating action is needed.

\subsection{Forwards Exploration in Multiverse Debuggers}\label{sec:forwards}
\begin{figure}
        \begin{mathpar}
                \inferrule[(\textsc{run})]
       	            { 
                        \textsf{non-prim } K_n \\
                        K_n \hookrightarrow_{i} K_{n+1}
                    }
                    { \langle \textsc{play}, \varnothing, mocks, K_n \; | \; S^* \rangle
                      \hookrightarrow_{d,i}
                  \langle \textsc{play}, \varnothing, mocks, K_{n+1} \; | \; S^* \rangle }

                 \inferrule[(\textsc{step-forwards})]
       	            { 
                        \textsf{non-prim } K_n \\
                        K_n \hookrightarrow_{i} K_{n+1}
                    }
                    { \langle \textsc{pause}, step, mocks, K_n \; | \; S^* \rangle
                      \hookrightarrow_{d,i}
                  \langle \textsc{pause}, \varnothing, mocks, K_{n+1} \; | \; S^* \rangle }

                  \inferrule[(\textsc{pause})]
       	            { 
                    }
                    { \langle \textsc{play}, pause, mocks, K_n \; | \; S^* \rangle
                      \hookrightarrow_{d,i}
                  \langle \textsc{pause}, \varnothing, mocks, K_n \; | \; S^* \rangle }

                  \inferrule[(\textsc{play})]
       	            { 
                    }
                    { \langle \textsc{pause}, play, mocks, K_n \; | \; S^* \rangle
                      \hookrightarrow_{d,i}
                  \langle \textsc{play}, \varnothing, mocks, K_n \; | \; S^* \rangle }
	\end{mathpar}
        \caption{The small-step rules describing forwards exploration in the multiverse debugger for WebAssembly instructions without primitives.}
	\label{fig:forwards}
\end{figure}

\Cref{fig:forwards} shows the basic small-step rules for stepping forwards in the multiverse debugger, without input and output primitives.
These rules allow the debugger to explore traditional WebAssembly programs, without any non-deterministic input or output.
For clarity, we use several shorthand notations in the rules.
We use the notation $(K_n \hookrightarrow_{i} K_{n+1})$ to say that the program state $K$ takes a step to the program state $K'$ in the underlying language semantics, where $K' = K_{n+1}$.
The notation $(\textsf{non-prim } K)$ is used to indicate that the program state $K$ is not a primitive call, or more fully, it is not the case that $K = \{ s ;v^*; v^*_0 (call \; j) \} \wedge P(j) = p$.
We describe the rules in detail below.

\begin{description}
    \item[\textsc{run}] The rule for running the program forwards in the underlying language semantics. The debugger takes a step in the underlying language semantics ($K_n \hookrightarrow_{i} K_{n+1}$) as long as the execution state is \textsc{play}, there are no incoming or outgoing messages, and the program state is not a primitive call.
        While running in this way, the snapshot list $S^*$ remains unchanged.
    \item[\textsc{step-forwards}] When the debugger receives the $step$ message, it takes one more step ($K_n \hookrightarrow_{i} K_{n+1}$), and transitions to the \textsc{pause} state if it was not already paused.

    \item[\textsc{pause}] When the debugger receives a $pause$ message in the \textsc{play} state, it transitions to the \textsc{pause} state. Note that afterwards, all the run rules are no longer applicable.

    \item[\textsc{play}] The rule for continuing the execution. When the debugger receives the $play$ message in the \textsc{pause} state, the execution state transitions to the \textsc{play} state.
\end{description}

\begin{figure}
        \begin{mathpar}
                \inferrule[(\textsc{run-prim-in})]
       	            { 
                        K_n = \{ s ;v^*; v^*_0 (call \; j) \} \\
                        P(j) = p \\
                        p \in P^{In} \\
                        mocks(j, v^*_0) = \varepsilon \\
                        K_n \hookrightarrow_{i} K_{n+1} \\
                    }
                    { \langle \textsc{play}, \varnothing, mocks, K_n \; | \; S^* \rangle
                      \hookrightarrow_{d,i}
                  \langle \textsc{play}, \varnothing, mocks, K_{n+1} \; | \; S^* \cdot \{K_{n+1} , r_{nop}\} \rangle }

                \inferrule[(\textsc{run-prim-out})]
       	            { 
                        K_n = \{ s ;v^*; v^*_0 (call \; j) \} \\
                        P(j) = p \\
                        p \in P^{Out} \\
                        p(v^*_0) = \{ \textsf{ret } v, \textsf{cps } r \} \\
                        K_{n+1} = \{ s ;v^*; v \} \\
                    }
                    { \langle \textsc{play}, \varnothing, mocks, K_n \; | \; S^* \rangle
                      \hookrightarrow_{d,i}
                  \langle \textsc{play}, \varnothing, mocks, K_{n+1} \; | \; S^* \cdot \{K_{n+1} , r\} \rangle }
	\end{mathpar}
        \caption{The small-step rules describing forwards exploration for input and output primitives in the multiverse debugger for WebAssembly, without input mocking.}
	\label{fig:forwards-prim}
\end{figure}

The rules in \cref{fig:forwards-prim} are the minimal set of rules for stepping forwards when the next instruction in the program state $K_n$ is a primitive call.
The rules also describe the snapshotting behavior of the multiverse debugger, which is needed to travel back in time.
We describe the run rules in detail below.
The step rules are identical to the run rules, but they transition to the paused state after taking a step, and are triggered by the $step$ message.
These rules can be found in \cref{app:rules}.

\begin{description}
    \item[\textsc{run-prim-in}] The rule for calling an input primitive. When the program state is a primitive call, the call will not be mocked, there are no incoming messages, and the execution state is \textsc{play}, the debugger takes a step forwards in the underlying language semantics ($K_n \hookrightarrow_{i} K_{n+1}$), and adds a new snapshot to the snapshot list $\{K_{n+1} , r_{nop}\}$.
        The compensating action $r_{nop}$ is used to indicate that no compensating action is needed for input primitives.
    \item[\textsc{run-prim-out}] The rule for calling an output primitive. When the program state is an output primitive call, there are no incoming messages, and the execution state is \textsc{play}, the debugger will perform the primitive call similar to the \textsc{output-prim} rule in the underlying language semantics.
        It adds the return value of the primitive to the stack, moving the state to $K_{n+1}$, the same state reached by the underlying language semantics.
        However, it will not discard the compensating action $p(v^*_0)_{cps} = r$, but add a new snapshot to the snapshot list $\{K_{n+1} , r\}$.
\end{description}

The rules described here form the backbone of the multiverse debugger, and already allow for a multiverse debugger that can explore the execution of a program forwards in time.
It is important to note that the debugger always takes snapshots when a primitive call is made.
Only when this primitive is an output primitive, can the state of the environment change.
On the other hand, only input primitives can introduce new branches to the multiverse tree.
In the next section, we discuss how the multiverse debugger can explore the execution of a program backwards in time.

\subsection{Backwards Exploration with Checkpointing}\label{sec:backwards}

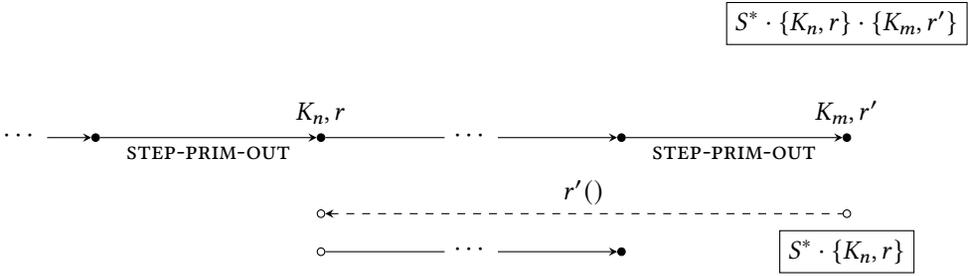
\begin{figure}
\begin{tikzpicture}
    % Nodes (Top Row)
    \node (start) at (-5,0) {$\cdots$};
    \node[draw, circle, fill=black, inner sep=1pt] (A) at (-4,0) {};
    \node[draw, circle, fill=black, inner sep=1pt, label=above:{$K_{n}, r$}] (B) at (-1,0) {};
    \node (C) at (1,0) {$\cdots$};
    \node[draw, circle, fill=black, inner sep=1pt] (D) at (3,0) {};
    \node[draw, circle, fill=black, inner sep=1pt, label=above:{$K_m, r'$}] (E) at (6,0) {};

    % Nodes (Bottom Row)
    \node[draw, circle, inner sep=1pt] (F1) at (-1,-1.00) {};
    \node[draw, circle, inner sep=1pt] (F2) at (-1,-1.5) {};
    \node[draw, circle, inner sep=1pt] (G1) at (6,-1.00) {};
    \node (M) at (1,-1.5) {$\cdots$};
    \node[draw, circle, fill=black, inner sep=1pt] (G2) at (3,-1.5) {};

    % Paths (Top Row) with custom Stealth arrows
    \draw[-farrow] (start) -- (A);
    \draw[-farrow] (A) -- node[below] {\textsc{step-prim-out}} (B);
    \draw[-] (B) -- (C);
    \draw[-farrow] (C) -- (D);
    \draw[-farrow] (D) -- node[below] {\textsc{step-prim-out}} (E);

    % Parallel paths (Bottom Row)
    \draw[farrow-, dashed] (F1) -- node[above] {$r'()$} (G1);
    \draw[-] (F2) -- (M);
    \draw[-farrow] (M) -- (G2);

    % Snapshot lists
    \node[draw, rectangle] at (6,1.5) {$S^* \cdot \{K_n , r\} \cdot \{K_m , r'\}$};
    \node[draw, rectangle] at (6,-1.5) {$S^* \cdot \{K_n , r\}$};

\end{tikzpicture}
\caption{Schematic of the \textsc{step-back-compensate} rule applied in state $K_m$. The dotted arrow shows how the debugger jumps to the previous state $K_n$, and compensates the output primitive with $r'()$, while the full arrows show the normal execution.
    Top right: the snapshots before. Bottom right: the snapshots after.}
\label{schematic:backwards}
\end{figure}

The multiverse debugger is also a time travel debugger, which means it can move backwards in time.
It does this by restoring the program state from a previous snapshot, and then replaying the program's execution from that point.
\Cref{schematic:backwards} shows schematically the way the debugger steps back in time using the snapshots.
In the situation depicted by the figure, the debugger has just stepped over an output primitive and added a snapshot to the snapshot list.
Since the debugger will now step back over the execution of this output primitive, its effects must be reversed with the compensating action $r''$ in the last snapshot.
This is shown with the dashed arrow.
As part of this jump back in time, the snapshot containing the compensating action $r''$ is removed from the snapshot list and the program state $K_n$ from the next snapshot is restored.

Since snapshots are only added when the program performs a primitive call, the second to last snapshot in the list was taken after the previous primitive call resulting in state $K_n$.
This means, that after restoring the internal virtual machine state, the program is now at the point right after the previous primitive call.
Starting from this point, the debugger can replay the program's execution forwards to $K_{m-1}$, which will not include any primitive calls.
This means the steps will be deterministic, and will not change the external environment.
This corresponds with the full arrow at the bottom of the figure.
Next to it is shown the snapshot list after the step back, which now only contains the snapshot of the state $K_n$.

In the outlined scenario the last transition in the chain, from $K_{m-1}$ to $K_m$, performs an output action.
However, if this transition is a standard WebAssembly instruction instead, no compensating action will be performed.
Instead, the debugger will immediately restore the virtual machine state to $K_n$ and replay the program's execution forwards to $K_{m-1}$.
In this case, none of the snapshots will be removed.
This enables the debugger to continue stepping back in time.

\begin{figure}
	\begin{mathpar}
                \inferrule[(\textsc{step-back})]
       	            {
                        K_n \hookrightarrow_{i}^{m-n-1} K_{m-1} \\
                        m > n \\
                    }
                    { \langle \textsc{pause}, step back, mocks, K_m \; | \; S^* \cdot \{K_{n} , r\} \} \rangle
                      \hookrightarrow_{d,i}
                  \langle \textsc{pause}, \varnothing, mocks, K_{m-1} \; | \; S^* \cdot \{K_{n} , r\} \rangle }

                \inferrule[(\textsc{step-back-compensate})]
       	            {
                        \textsf{first }  r'() \textsf{ then } K_n \hookrightarrow_{i}^{m-n-1} K_{m-1} \\
                    }
                    { \langle \textsc{pause}, step back, mocks, K_m \; | \; S^* \cdot \{K_{n} , r\} \cdot \{K_{m} , r'\} \rangle
                      \hookrightarrow_{d,i}
                  \langle \textsc{pause}, \varnothing, mocks, K_{m-1} \; | \; S^* \cdot \{K_{n} , r\} \rangle }
	\end{mathpar}
        \caption{The small-step reduction rule for stepping backwards in the multiverse debugger.}
	\label{fig:backwards}
\end{figure}

Each of the two outlined scenarios correspond with a rule in the multiverse debugger semantics, shown in \cref{fig:backwards}.
The first scenario where the effects of an output primitive is reversed, is described by the \textsc{step-back-compensate} rule.
The second scenario where the last transition is a standard WebAssembly instruction, is described by the \textsc{step-back} rule.
We describe the rules in detail below.

\begin{description}
    \item[\textsc{step-back}] The rule for stepping back in time. When the debugger receives a \emph{step back} message, the debugger restores the external state from the last snapshot in the snapshot list, which is not the current state.
        The debugger then replays the program's execution from that point to exactly one step ($K_n \hookrightarrow_{i}^{m-n-1} K_{m-1}$) before the starting state.
        Since the restored snapshot remains in the past, it is kept in the snapshot list, to allow for further backwards exploration.
    \item[\textsc{step-back-compensate}] The rule for stepping back in time when the last transition was a primitive call.
        This is always the case when the current state is $K_m$ is part of the last snapshot.
        When the debugger receives a \emph{step back} message, the debugger performs the compensating action $r'$ from the last snapshot in the snapshot list, which reversed the effects of the last primitive call.
        Then, the debugger restores the external state $K_n$ from the second to last snapshot in the snapshot list.
        The debugger then replays the program's execution from that point to exactly one step before the starting state.
        The last snapshot is removed from the snapshot list, since it now lies in the future.
\end{description}

In the case, where no primitive call has yet been made, the snapshot list contains exactly $\{K_0,r_{nop}\}$, as defined by $dbg_{start}$, which the \textsc{step-back} rule can jump to.
If the current state is $K_0$, stepping back is not possible.
Specifically, the \textsc{step-back} rule is not be applicable, since $m$ and $n$ are both zero, and the \textsc{step-back-compensate} rule requires the snapshot list to contain at least two snapshots.

\subsection{Instrumenting Non-deterministic Input in Multiverse Debuggers}\label{sec:mocking}

\begin{figure}
	\begin{mathpar}
                \inferrule[(\textsc{register-mock})]
       	            {
                        msg = mock(j, v^*, v) \\
                        P(j) = p \\
                        p \in P^{In} \\
                        v \in \lfloor p \rfloor \\
                        %p_{code} = (\textsf{func} \; a^n \rightarrow v \; \textsf{local} \; t^* e^*) \\
                        mocks' = mocks, (j,v^*) \mapsto v \\
                    }
                    { \langle rs, msg, mocks, K_n \; | \; S^* \rangle
                      \hookrightarrow_{d,i}
                      \langle rs, \varnothing, mocks', K_n \; | \; S^* \rangle }

                \inferrule[(\textsc{unregister-mock})]
       	            {
                        msg = unmock(j, v^*) \\
                        %p_{code} = (\textsf{func} \; a^n \rightarrow v \; \textsf{local} \; t^* e^*) \\
                        mocks' = mocks \setminus (j,v^*) \mapsto v \\
                    }
                    { \langle rs, msg, mocks, K_n \; | \; S^* \rangle
                      \hookrightarrow_{d,i}
                      \langle rs, \varnothing, mocks', K_n \; | \; S^* \rangle }

                \inferrule[(\textsc{step-mock})]
       	            { 
                        K_n = \{ s ;v^*; v^*_0 (call \; j) \} \\
                        P(j) = p \\
                        p \in P^{In} \\
                        mocks(j, v^*_0) = v \\
                        K'_{n+1} = \{ s';v'^*;v \} \\
                    }
                    { \langle \textsc{pause}, step, mocks, K_n \; | \; S^* \rangle
                      \hookrightarrow_{d,i}
                  \langle \textsc{pause}, \varnothing, mocks, K'_{n+1} \; | \; S^* \cdot \{K'_{n+1}, r_{nop}\} \rangle }
	\end{mathpar}
        \caption{The small-step rule for mocking input in the MIO debugger, only including the step rule. The analogous rule for when the debugger is not paused (\textsc{run-mock}) is shown in \cref{app:rules}.}
	\label{fig:mocking}
\end{figure}

In order to replay execution paths in the multiverse tree accurately, the multiverse debugger needs to be able to override the input to the program.
This is done through the \emph{mocking} of input, described by the rules in \cref{fig:mocking}.
Mocking of input happens through the key value map $mocks$ shown in \cref{fig:configurations}.
New values can be added to the map using the \textsc{register-mock} rule, and existing values can be removed using the \textsc{unregister-mock} rule.
Whenever the debugger encounters an input primitive call, it will always check the $mocks$ map for an overriding value.
If a value is found, the debugger will replace the call to the primitive with the mock value $v$. This is done by the \textsc{step-mock} rule.

\begin{description}
    \item[\textsc{register-mock}] The rule for registering a new mock value in the multiverse debugger. When the debugger receives a message $mock(j, v^*, v)$, it updates the entry for $(j,v^*)$ in the $mocks$ environment to $v$.
        If an entry already exists, the rule will override the existing value.
    \item[\textsc{unregister-mock}] The rule for unregistering a mock value in the multiverse debugger. When the debugger receives a message $unmock(j, v^*)$, it removes the mock value from the $mocks$ map.
        If no value is found in the environment, the rule will have no effect. %, and the messages will simply be removed.
    \item[\textsc{step-mock}] The \textsc{step-mock} rule for stepping forwards in the multiverse debugger when an input primitive call is encountered. If the input primitive call is found in the $mocks$ map, the debugger will replace the call with the mock value $v$.
The program state is then updated to the new program state $K_{n+1}$, and a new snapshot is added to the snapshot list.
The snapshot includes the new program state and the empty compensating action $r_{nop}$, since no compensating action is needed for input primitives.
\end{description}

\subsection{Arbitrary Exploration of the multiverse tree}\label{sec:exploration}

With the semantics of input mocking in place, we now have the entire multiverse debugger semantics for WebAssembly with input and output primitives.
In this section, we discuss how the multiverse debugger can be used to explore different universes.
This can be done by \cref{alg:jumping}.
When the debugger jumps from a state $K_m$ to a state $K_n$, the debugger will find the smallest common ancestor of $K_m$ and $K_n$, or the join.
The debugger will then step backwards from $K_m$ to the join.
We use the notation $\hookrightarrow_{r}$ to indicate that the debugger is reversing the execution, it is equivalent to a debugging step $\hookrightarrow_{d,i}$ that only uses the \textsc{step-back} and \textsc{step-back-compensate} rules.
In the final step of the algorithm, execution is replayed from the join to $K_n$ using the \textsc{step-mock} rule whenever it encounters a non-deterministic primitive call.

\begin{algorithm}[htb]
    \caption{The algorithm for traveling to any position in the multiverse tree.}
    \label{alg:jumping}
    \KwData{The current program state $K_m$, the target program state $K_n$, and the snapshot list $S^*$.}

    $K_{\text{join}} \gets \text{find\_join}(K_m, K_n)$\;
    \While{$dbg_{current}[K] \neq K_{\text{join}}$}{
        $dbg_{current} \hookrightarrow_{r} dbg_{next}$\;
        $dbg_{current} \gets dbg_{next}$\;
    }

    \While{$dbg_{current}[K] \neq K_n$}{
        $dbg_{current} \hookrightarrow_{d,i} dbg_{next}$\;
        $dbg_{current} \gets dbg_{next}$\;
    }
\end{algorithm}

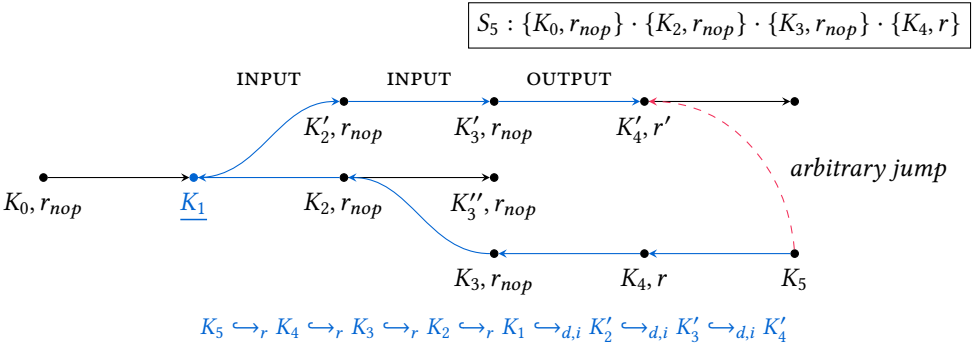
\begin{figure}
    \centering
    \begin{tikzpicture} %[label/.style={draw,rectangle,fill=white}]
        % primitives labels
        \node at (3,1.30) {\textsc{input}};
        \node at (5,1.30) {\textsc{input}};
        \node at (7,1.30) {\textsc{output}};

        % trunk
        \node[draw, circle, fill=black, inner sep=1pt] (1) at (0,0) {};
        \node[below=0.1 of 1, fill=white, text=black, inner sep=1pt] {$K_{0},r_{nop}$};

        \node[draw, circle, fill=color2, draw=color2, inner sep=1pt] (2) at (2,0) {};
        \node[below=0.1 of 2, fill=white, text=color2, inner sep=1pt] {\underline{$K_{1}$}};

        % top branch
        \node[draw, circle, fill=black, inner sep=1pt] (5) at (4,1) {};
        \node[below=0.1 of 5, fill=white, text=black, inner sep=1pt] {$K'_{2},r_{nop}$};

        \node[draw, circle, fill=black, inner sep=1pt] (6) at (6,1) {};
        \node[below=0.1 of 6, fill=white, text=black, inner sep=1pt] {$K'_{3},r_{nop}$};

        \node[draw, circle, fill=black, inner sep=1pt] (T) at (8,1) {};
        \node[below=0.1 of T, fill=white, text=black, inner sep=1pt] {$K'_{4},r'$};

        \node[draw, circle, fill=black, inner sep=1pt] (10) at (10,1) {};

        % middle branch
        \node[draw, circle, fill=black, inner sep=1pt] (3) at (4,0) {};
        \node[below=0.1 of 3, fill=white, text=black, inner sep=1pt] {$K_{2},r_{nop}$};

        \node[draw, circle, fill=black, inner sep=1pt] (4) at (6,0) {};
        \node[below=0.1 of 4, fill=white, text=black, inner sep=1pt] {$K''_{3},r_{nop}$};

        % bottom branch
        \node[draw, circle, fill=black, inner sep=1pt] (7) at (6,-1) {};
        \node[below=0.1 of 7, fill=white, text=black, inner sep=1pt] {$K_{3},r_{nop}$};

        \node[draw, circle, fill=black, inner sep=1pt] (8) at (8,-1) {};
        \node[below=0.1 of 8, fill=white, text=black, inner sep=1pt] {$K_{4},r$};

        \node[draw, circle, fill=black, inner sep=1pt] (9) at (10,-1) {};
        \node[below=0.1 of 9, fill=white, text=black, inner sep=1pt] {$K_{5}$};

        % label for snapshots
        \node[draw, align=left] at (9,2.0) {$S_5: \{ K_0 , r_{nop} \} \cdot \{K_2, r_{nop}\} \cdot \{K_3, r_{nop}\} \cdot \{K_4,r\}$};

        %\draw[-farrow] (start) -- (1);
        \draw[-farrow] (1) -- (2);
        \draw[farrow-, draw=color2] (2) -- (3);
        \draw[-farrow, draw=color2] (2) to [out=0,in=180] (5);
        \draw[-farrow, draw=color2] (5) -- (6);
        \draw[-farrow, draw=color2] (6) -- (T);
        \draw[-farrow] (T) -- (10);
        \draw[-farrow] (3) -- (4);
        \draw[farrow-, draw=color2] (3) to [out=0,in=180](7);
        \draw[farrow-, draw=color2] (7) -- (8);
        \draw[farrow-, draw=color2] (8) -- (9);

        % step back
        \draw[dashed,-farrow, draw=color0] (9) to [out=95,in=0] node[near start, above right] {\textit{arbitrary jump}} (T);

        % path
        \node at (6,-2.0) {\small \textcolor{color2}{$K_5 \hookrightarrow_{r} K_4 \hookrightarrow_{r} K_3 \hookrightarrow_{r} K_2 \hookrightarrow_{r} K_1 \hookrightarrow_{d,i} K'_2 \hookrightarrow_{d,i} K'_3 \hookrightarrow_{d,i} K'_4$}};
    \end{tikzpicture}
    \caption{Schematic of how the multiverse debugger can jump to any arbitrary state in the past, using the \textsc{step-back} and \textsc{step-mock} rules.
    For the arbitrary jump from state $K_5$ to $K_4'$, the join $K_1$ is underlined and shown in blue.
    Top right: the list of snapshots before the arbitrary jump. Bottom: the execution path from $K_5$ to $K_4'$.
    Steps with the \textsc{step-back} and \textsc{step-back-compensate} rules are shown as $\hookrightarrow_r$.}
    \label{schematic:arbitrary-jump}
\end{figure}

\Cref{schematic:arbitrary-jump} illustrates the algorithm for jumping to an arbitrary state, when the user clicks on a node on another branch in the multiverse tree.
The figure shows a possible multiverse tree for a program where the second and third instruction are input primitives.
The program has executed two input primitives in a row, and the debugger has explored some of the possible inputs.
Each node in the figure is labeled with the program state and possible compensating action, where $r_{nop}$ indicates that no compensating action is needed.
For clarity, the external state are also numbered.
The figure shows clearly that the external state only changes after a primitive call.
The current state is $K_5$, and the debugger wants to jump to $K_4'$.
Per the algorithm, the debugger finds the join of the two states, which is $K_1$.
The debugger then replays the execution from $K_5$ to $K_1$ in reverse order, using the \textsc{step-back} and \textsc{step-back-compensate} rules.
It is important that the debugger steps back one instruction at a time, to ensure that the external state is correctly restored.
From the join $K_1$, the debugger replays the execution to $K_4'$ in the forward order, using the \textsc{step-mock} rule whenever it encounters a non-deterministic primitive call.
This ensures that the jump deterministically follows the exact execution path, thereby ensuring that the external state is correctly restored.

\subsection{Correctness of the Multiverse Debugger Semantics}\label{sec:correctness}

Given the small-step semantics, we can prove the correctness of the MIO debugger in terms of soundness and completeness.
The soundness theorem states that for any debugging session ending in a certain state, there also exists a forwards execution path in the underlying language semantics to that state.
A debugging session is seen as any number of debugging steps starting from the initial debugging state $dbg_{start}$.
The completeness of the debugger means that the debugger can always find a path in the multiverse tree that corresponds to a path in the underlying language semantics.
Together, these properties ensure that the debugger is correct in terms of its observation of the underlying language, and will never observe any inaccessible states.
For brevity, we only provide a sketch of the proofs here, but the full proofs can be found in \cref{app:proofs}.

\newcommand{\theoremdebuggersoundness}{%
    Let $K_0$ be the start WebAssembly configuration, and $dbg$ the debugging configuration containing the WebAssembly configuration $K_n$.
    Let the debugger steps $\hookrightarrow^*_{d,i}$ be the result of a series of debugging messages.
    Then:
    $$\forall dbg : dbg_{start} \hookrightarrow^*_{d,i} dbg \Rightarrow K_{0} \hookrightarrow^*_i K_n$$
}

\begin{theorem}[Debugger soundness]\label{theorem:debugger-soundness}
    \theoremdebuggersoundness
\end{theorem}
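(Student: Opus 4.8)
The plan is to proceed by induction on the length of the debugging derivation $dbg_{start} \hookrightarrow^*_{d,i} dbg$. The naive invariant ``the current program state $K_n$ satisfies $K_0 \hookrightarrow^*_i K_n$'' is not strong enough to be inductive, because the backward rules \textsc{step-back} and \textsc{step-back-compensate} do not extend the existing forward chain: they discard the current state and instead rebuild a state by restoring a snapshot and replaying from it. To reach such a restored state we must already know that the snapshotted state is itself reachable. I would therefore strengthen the induction hypothesis to a statement about the whole configuration. Concretely, I would prove the invariant: for every reachable configuration $\langle es, msg, mocks, K_n \mid S^* \rangle$, (i) $K_0 \hookrightarrow^*_i K_n$, and (ii) every snapshot $\{K_m, r\} \in S^*$ satisfies $K_0 \hookrightarrow^*_i K_m$. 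Clause (i) is exactly the theorem; clause (ii) is the extra leverage needed for the backward rules. The base case is immediate: $dbg_{start}$ has current state $K_0$ and the single snapshot $\{K_0, r_{nop}\}$, and $K_0 \hookrightarrow^*_i K_0$ holds by the empty reduction sequence.

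For the inductive step I would do a case analysis on the last rule applied. The purely forward rules (\textsc{run}, \textsc{step-forwards}, \textsc{run-prim-in}, \textsc{run-prim-out}) each carry a premise $K_n \hookrightarrow_i K_{n+1}$, so clause (i) is preserved by appending one step to the chain given by the hypothesis; the snapshot-adding rules additionally record $\{K_{n+1}, \cdot\}$, whose state is the freshly reachable $K_{n+1}$, preserving (ii). The administrative rules (\textsc{pause}, \textsc{play}, \textsc{register-mock}, \textsc{unregister-mock}) leave $K_n$ and $S^*$ untouched. For \textsc{step-mock}, the mocked transition replaces a primitive call by the value $v$; since \textsc{register-mock} only admits $v \in \lfloor p \rfloor$ with $p \in P^{In}$, this corresponds exactly to a legal non-deterministic firing of the \textsc{input-prim} rule, so $K_n \hookrightarrow_i K'_{n+1}$ holds in the base semantics and both clauses follow as in the forward case.

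Finally, for \textsc{step-back} and \textsc{step-back-compensate}, the rule premise already supplies the replay reduction $K_n \hookrightarrow_i^{m-n-1} K_{m-1}$ for the snapshot state $K_n$; combining this with $K_0 \hookrightarrow^*_i K_n$ from clause (ii) of the hypothesis yields $K_0 \hookrightarrow^*_i K_{m-1}$, giving (i), while (ii) holds because the resulting snapshot list is a prefix of (or equal to) the previous one and hence still consists of reachable states. Note that the compensating action $r'$ has type $\epsilon \rightarrow \epsilon$ and acts only on the external environment, so it does not perturb the program state and can be ignored in the reachability argument. This case, despite being the conceptual heart of the debugger, turns out to be routine because its inference rule exposes the replay reduction as an explicit premise; the real work is front-loaded into choosing the snapshot-reachability invariant (ii).

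The main obstacle I anticipate is the \textsc{step-mock} case, and more precisely reconciling the range conditions: the base-language \textsc{input-prim} rule fires for $v \in \lfloor p(v_0^*)_{ret} \rfloor$, whereas \textsc{register-mock} validates mocks against $\lfloor p \rfloor$. To close the proof I must argue these coincide for the relevant arguments, i.e. that admissible mock values are genuinely observable outputs of the primitive at $v_0^*$, which is exactly the purpose of the well-behaved-I/O requirement that the input range be known and respected. Establishing this correspondence, together with identifying the strengthened invariant, constitutes the substance of the argument; everything else reduces to a mechanical traversal of the rules.
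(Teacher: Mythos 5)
Your proposal is correct and follows essentially the same route as the paper: an induction over the debugging derivation with a case analysis on the last rule, where the backward rules are handled via the reachability of snapshotted states and \textsc{step-mock} via the observation that registered mock values lie in the primitive's codomain and hence correspond to a legal firing of \textsc{input-prim}. The only difference is packaging --- you fold the snapshot-reachability property into a strengthened induction invariant, whereas the paper factors it out as a separate lemma (snapshot soundness, proved by its own induction) together with a mocking non-interference lemma; your remark about reconciling $\lfloor p \rfloor$ with $\lfloor p(v_0^*)_{ret} \rfloor$ is in fact a point the paper glosses over.
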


The proof for debugger soundness proceeds by induction over the number of steps in the debugging session.
In the base case, where the debugging session consists of a single step, the proof is trivial since the step starts from the initial state.
In the inductive case, the proof proceeds very similarly, the only non-trivial cases are those for stepping backwards and mocking.

\newcommand{\theoremdebuggercompleteness}{%
    Let $K_{0}$ be the start WebAssembly configuration for which there exists a series of transition $\hookrightarrow^*_i$ to another configuration $K_n$. Let the debugging configuration with $K_n$ be dbg.
    Then:
    $$\forall K_n : K_{0} \hookrightarrow^*_i K_n \Rightarrow dbg_{start} \hookrightarrow^*_{d,i} dbg$$
}

\begin{theorem}[Debugger completeness]\label{theorem:debugger-completeness}
    \theoremdebuggercompleteness
\end{theorem}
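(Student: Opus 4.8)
The plan is to prove completeness by induction on the length of the underlying execution path $K_0 \hookrightarrow^*_i K_n$. Completeness is essentially the converse of \Cref{theorem:debugger-soundness}: soundness shows the debugger never leaves the set of base-reachable states, whereas here I must show that every base-reachable state is in fact realised by some sequence of debugger messages. I read the phrase ``\emph{let the debugging configuration with $K_n$ be dbg}'' as an existential claim, namely that there exists a configuration $dbg$ reachable from $dbg_{start}$ whose program-state component is $K_n$. For the base case (path of length $0$) we have $K_n = K_0$, and the empty session $dbg_{start} \hookrightarrow^*_{d,i} dbg_{start}$ suffices, since $dbg_{start}$ already carries $K_0$ by definition.

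For the inductive step, decompose the path as $K_0 \hookrightarrow^*_i K_{n-1} \hookrightarrow_i K_n$. By the induction hypothesis there is a session $dbg_{start} \hookrightarrow^*_{d,i} dbg_{n-1}$ with $dbg_{n-1}$ carrying $K_{n-1}$. It then suffices to extend this session so as to realise the single underlying step $K_{n-1} \hookrightarrow_i K_n$ at the debugger level. Since a session is driven by messages of our choosing, I may freely insert \textsc{play}/\textsc{pause} transitions, and a mock-registration step, before the decisive forward step in order to bring the debugger into whatever execution state and message the relevant forward rule demands. The argument then proceeds by case analysis on which base rule justifies $K_{n-1} \hookrightarrow_i K_n$.

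If the step is a non-primitive instruction, then $\textsf{non-prim } K_{n-1}$ holds and a single \textsc{step-forwards} (or \textsc{run}) step, whose premise is exactly $K_{n-1} \hookrightarrow_i K_n$, lands the debugger in a configuration carrying $K_n$. If the step is an output primitive, it is deterministic, and \textsc{run-prim-out} (or its step variant) reproduces the exact configuration produced by \textsc{output-prim}, additionally appending a snapshot $\{K_n, r\}$ to $S^*$; this extra snapshot is immaterial, since the claim constrains only the program-state component. The interesting case is the input primitive: here \textsc{input-prim} chooses the return value $v$ \emph{non-deterministically}, so to land on the specific $K_n$ we must reproduce that very value. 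This is precisely what mocking achieves: because $K_{n-1} \hookrightarrow_i K_n$ via \textsc{input-prim} guarantees $v \in \lfloor p(v_0^*)_{ret}\rfloor$, which the range condition of \Cref{sec:iorequirements} places inside the observable range $\lfloor p \rfloor$, the side condition of \textsc{register-mock} is satisfied. I therefore first register $(j, v_0^*) \mapsto v$ and then apply \textsc{step-mock}, which deterministically places $v$ on the stack (leaving the store and locals untouched) and so reaches exactly the required $K_n$.

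The main obstacle is exactly this reproduction of non-deterministic input: the proof must establish that whatever value witnesses the base-semantics input step is a legal mock value, so that the debugger can force it. This hinges on the range condition, which ties the validity side condition of \textsc{register-mock} to the non-determinism of \textsc{input-prim}; without it, the debugger could fail to replay a value that the base semantics was free to produce. Everything else is routine bookkeeping: the auxiliary debugger state — the snapshot list, execution state, and message box — never obstructs the construction, because completeness only demands agreement on the underlying program state $K_n$, and these components can be steered arbitrarily through the chosen message stream.
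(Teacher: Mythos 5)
Your proposal is correct and takes essentially the same approach as the paper: the paper's proof also simulates each base-semantics step with the corresponding debugger rule (\textsc{step-forwards} for non-primitives, \textsc{step-prim-out} for outputs, and \textsc{register-mock} followed by \textsc{step-mock} for inputs), merely stating the induction over the path less explicitly than you do. Your additional observation that the range side condition of \textsc{register-mock} is what guarantees the witnessing input value is a legal mock is a correct and worthwhile sharpening of the paper's terser argument.
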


The proof for completeness follows almost directly from the fact that for every transition in the underlying language semantics, the debugger can take a corresponding step. For non-deterministic input primitives, we can step to the same state with the \textsc{register-mock} and \textsc{step-mock} rules.

Together the debugger soundness and completeness theorems ensure that the multiverse debugger is correct in terms of its observation of the underlying language semantics.
However, it gives us no guarantees about the correctness of the compensating actions, and the consistency of external effects during a debugging session.
Due to the way effects on the external environment are presented in the MIO debugger semantics, we can define the entire effect of a debugging session of regular execution, both as ordered lists of steps that have external effects.
There are only two options, the output primitive rules, and the rule that applies the compensating action.

\begin{definition}[External state effects]
    The function $external$ returns the steps affecting external state for any series of rules in the debugging or underlying language semantics.
    $$external(p) = \left\{\begin{array}{ll}
            (s \text{ for } s \text{ in } p \text{ where } s = \textsc{step-prim-out} & \text{if } p = dbg \hookrightarrow^*_{d,i} dbg' \\
                                        \vee \; s = \textsc{ step-back-compensate})      & \\
            (s \text{ for } s \text{ in } p \text{ where } s = \textsc{output-prim} ) & \text{if } p = K \hookrightarrow^*_{i} K' \\
    \end{array}\right.
    $$
\end{definition}

Using this definition, we can prove that the external effects of any debugging session ending in a certain state, are the same as the effects of the regular execution of the program ending in that same state.
The definition for the equivalence of external effects ($\equiv$) is given in \cref{app:proofs}.

\newcommand{\theoremcompensatesoundness}{%
        Let $K_0$ be the start WebAssembly configuration, and $dbg$ the debugging configuration containing the WebAssembly configuration $K_n$.
    Let the debugger steps $\hookrightarrow^*_{d,i}$ be the result of a series of debugging messages.
    Then:
    $$\forall dbg : external(dbg_{start} \hookrightarrow^*_{d,i} dbg) \equiv external(K_{0} \hookrightarrow^*_i K_n)$$
}

\begin{theorem}[Compensation soundness]\label{theorem:compensate-soundness}
    \theoremcompensatesoundness
\end{theorem}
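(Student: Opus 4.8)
The plan is to prove a strengthened statement by induction on the number of debugging steps, rather than the bare equation, because the equation alone is not closed under a single debugging transition: to handle a backward step one must know \emph{how} the current snapshot list relates to the forward path reaching the current state. Concretely, I would carry along the invariant that for every reachable debugging configuration $\langle es, msg, mocks, K_n \mid S^* \rangle$, (a) $external(dbg_{start} \hookrightarrow^*_{d,i} dbg) \equiv external(K_0 \hookrightarrow^*_i K_n)$, and (b) the snapshot list $S^*$ faithfully records the primitive calls along the forward path $K_0 \hookrightarrow^*_i K_n$, in order, with each stored compensating action being exactly the deterministic inverse produced by the corresponding forward output (and $r_{nop}$ for inputs). \Cref{theorem:debugger-soundness} already guarantees that such a forward path exists, and since output primitives are deterministic in the program state, the sequence of outputs along it---hence the right-hand side of the equation---is well defined.

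Part (b) of the invariant supplies the two facts that the backward rules need. First, because a snapshot is taken after \emph{every} primitive call, the state $K_n$ stored in the last snapshot is the state immediately following the most recent primitive call, so the internal replay segment $K_n \hookrightarrow^{m-n-1}_i K_{m-1}$ contains no further primitive calls and therefore produces no external effect. Second, when the current state itself sits in the last snapshot, the compensating action stored there is precisely the inverse of the output that produced it. Both facts are re-established routinely in the forward cases, since \textsc{run-prim-out}/\textsc{step-prim-out} append $\{K_{n+1}, r\}$ with the genuine $r = p(v^*_0)_{cps}$, and \textsc{run-prim-in}/\textsc{step-mock} append $\{K_{n+1}, r_{nop}\}$.

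The induction then proceeds by a case split on the last rule. The base case is immediate: $dbg_{start}$ has $K_0$ with an empty $external$ list on both sides. For the forward cases, \textsc{run}, \textsc{step-forwards}, \textsc{run-prim-in} and \textsc{step-mock} add no external-affecting step and extend the forward path by a non-output step, so both sides of the equation are unchanged; \textsc{run-prim-out} and \textsc{step-prim-out} append the \emph{same} output to both the debugging $external$ list and the forward path, so equivalence is preserved by congruence of $\equiv$ under concatenation. The \textsc{step-back} case is equally direct: the reverted transition was a non-output WebAssembly instruction, the replay segment is output-free by invariant (b), so no external-affecting step is added, while the forward path to $K_{m-1}$ carries exactly the same outputs as the forward path to $K_m$.

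The only substantive case---and the main obstacle---is \textsc{step-back-compensate}. Here the forward path to $K_m$ factors as the forward path to $K_{m-1}$ followed by the single output $o$ that the reverted transition performed, so $external(K_0 \hookrightarrow^*_i K_m)$ equals $external(K_0 \hookrightarrow^*_i K_{m-1})$ extended by $o$. On the debugger side the step appends the compensating action $r'$ of the last snapshot, which by invariant (b) is exactly the deterministic inverse of $o$. Using the requirement from \cref{sec:iorequirements} that an output followed by its compensating action has no net effect on the external environment, the pair $o \cdot r'$ is $\equiv$-equivalent to the empty list; appending it to $external(K_0 \hookrightarrow^*_i K_{m-1})$ therefore leaves the latter unchanged, which is exactly the right-hand side for the new state $K_{m-1}$. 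The whole argument hinges on $\equiv$ being a congruence for concatenation and on the adjacency of $o$ and $r'$ at the end of the list; the former I would take from the appendix definition of $\equiv$, and the latter is guaranteed by invariant (b) together with the fact that $K_{m-1} \hookrightarrow_i K_m$ is the output step itself. Arbitrary tree jumps (\cref{alg:jumping}) require no separate treatment, since they are sequences of \textsc{step-back}/\textsc{step-back-compensate} and forward steps and are thus absorbed by the per-step induction.
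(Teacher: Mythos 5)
Your proof is correct, but it takes a genuinely different route from the paper. The paper argues globally: it views the debugging session as a walk on the multiverse tree, invokes \cref{theorem:debugger-soundness} and \cref{theorem:debugger-completeness} to obtain a step-back-free session $P$ whose external effects coincide with those of the underlying path, and then reduces an arbitrary session to $P$ by excising closed walks, each of which is an external no-op because every forward traversal of an edge is paired with a backward traversal. You instead run a local induction over debugging steps, carrying a strengthened invariant that (a) the equivalence holds at every reachable configuration and (b) the snapshot list faithfully records the primitive calls of the forward path together with their exact compensating actions. Your invariant (b) is in effect a strengthening of \cref{lemma:snapshot-soundness} to also cover the stored compensations --- something the paper's walk-cancellation argument uses only implicitly when it asserts that closed walks are null operations. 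What your approach buys is precision and mechanizability: the adjacency of an output and its compensation in the $external$ list, and the output-freeness of the replay segment $K_n \hookrightarrow^{m-n-1}_i K_{m-1}$, are established explicitly rather than absorbed into a graph-theoretic picture. What the paper's approach buys is brevity and reuse of the two earlier theorems. Two small points you should tighten: first, \textsc{step-back-compensate} also fires when the last snapshot was produced by an input primitive, in which case $r' = r_{nop}$ and there is no output $o$ to cancel --- the case is trivial (both the compensation and the reverted step are external no-ops under \cref{def:external-effects}) but it is a separate branch of your case split; second, the congruence of $\equiv$ under concatenation and the cancellation law $o \cdot r' \equiv \epsilon$ are properties you must extract from \cref{def:external-effects} rather than assume, since that definition is stated asymmetrically in terms of matching each \textsc{step-prim-out} to either its next \textsc{step-back-compensate} or an \textsc{output-prim} in the underlying run.
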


The proof of this theorem is based on the fact that our multiverse debugger is a rooted acyclic graph, and a debugging session is a walk in this tree starting from the root, which can include the same edge several times.
Any such walk in a tree can be constructed by adding any number of random closed walks to the path from the root to the final node.
Such closed walks are null operations in terms of their effect on the external state.
This leaves only the forward steps of the minimal path to be considered, meaning the external effects of a debugging session are always the same as those of the regular execution of the program.

\section{The MIO debugger}\label{sec:implementation}

\label{sec:warduino}
We have implemented the multiverse debugger described above in a prototype debugger, called the MIO debugger.
The MIO debugger is built on top of a WebAssembly runtime for microcontrollers, called WARDuino \cite{lauwaerts24a}.
WARDuino is written in C++, includes primitives for controlling hardware peripherals, and has support for traditional remote debugging.  
Our prototype implementation builds further on its virtual machine and the remote debugging facilities but needed to be extended significantly in order to support all the basic operations for multiverse debugging: smart snapshotting, mocking of primitives and reversible actions.  
Additionally, we created a high-level interface which implements the message passing interface described in \cref{sec:multiverse-debugger} as messages in the remote debugger of WARDuino. On top of this interface we built a Kotlin application for debugging AssemblyScript programs on microcontrollers running WARDuino.
This application keeps track of the program states, and shows them as part of the multiverse tree, as shown in \cref{fig:multiverse-debugger} from \cref{sec:practice}.
\Cref{alg:jumping} for arbitrary jumping, is implemented at the level of the Kotlin application using the message passing interface of the remote debugger.
Finally, the MIO prototype also has support for expressing simple dependencies, so that the mocking of the various sensor values can be limited depending on state of the output pins. 

\subsection{Output: Reversible Primitives}\label{sec:rotate}

Primitives in WARDuino are implemented in the virtual machine using C macros.
In order to implement reversible primitives, we have extended the existing macros with two new macros; one defines how the external state effected by the primitive can be captured, and the other defines the compensating action given this captured state.
When stepping back over a primitive, the compensating action looks at the state captured after the previous primitive call, and restores this external state.
This is the same as undoing the effects of the last primitive call.

\definecolor{mRed}{rgb}{1,0.4,0.4}
\definecolor{mBlue}{rgb}{0.4,0.4,1}
\definecolor{mGreenGraph}{RGB}{102,255,110}
\definecolor{mGreen}{rgb}{0,0.6,0}
\definecolor{mGray}{rgb}{0.5,0.5,0.5}
\definecolor{mDarkGray}{rgb}{0.2,0.2,0.4}
\definecolor{mPurple}{rgb}{0.58,0,0.82}
\definecolor{backgroundColour}{rgb}{1,1,1}

\lstdefinestyle{CStyle}%
{     basicstyle=\footnotesize\ttfamily\linespread{0.7}%
, captionpos=b%
, identifierstyle=%
, backgroundcolor=\color{backgroundColour}
, commentstyle=\color{mGray}
, keywordstyle=\bfseries\color{black}
, numberstyle=\tiny\ttfamily\color{mGray}
, stringstyle=\color{mRed}
, keywordstyle=\color{mBlue}\bfseries% reserved keywords
, keywordstyle=[2]\color{mRed}% traits
, keywordstyle=[3]\color{mBlue}% primitive types
, keywordstyle=[4]\color{mRed}% type and value constructors
, keywordstyle=[5]\color{mBlue}% macros
, columns=spaceflexible%
, keepspaces=true%
, showspaces=false%
, showtabs=false%
, showstringspaces=false%
, numbers=left%
, numbersep=5pt%
}

\begin{figure}
    \begin{minipage}[t]{.44\textwidth}
        \begin{lstlisting}[language=C++, style=CStyle,escapechar=',xleftmargin=4mm]
def_prim(rotate, threeToNoneU32) {
  int32_t speed = arg0.int32;
  int32_t degrees = arg1.int32;
  int32_t motor = arg2.int32;
  pop_args(3);
  auto encoder = encoders[motor];
  encoder->set_angle('\label{line:encode}'
    encoder->get_angle() + degrees'\label{line:relative}'
  );
  return drive(motor, encoder, speed);'\label{line:drive}'
}\end{lstlisting}
    \end{minipage}
    \hfill
    \begin{minipage}[t]{.50\textwidth}
        \begin{lstlisting}[language=C++, style=CStyle,escapechar=',firstnumber=10]
def_prim_serialize(rotate) {
  for (int m = 0; m < MOTORS; i++) {
    external_states.push_back(
    new MotorState(m, encoders[m]->angle()));'\label{line:serialize}'
}}

def_prim_reverse(rotate) {
  for (IOState s : external_states) {
    if (isMotorState(s)) {
      int motor = stoi(s.key);
      auto encoder = encoders[motor];
      encoder->set_angle(s.degrees);'\label{line:set-angle}'
      drive(motor, encoder, STD_SPEED);
}}}\end{lstlisting}
    \end{minipage}
    \caption{\emph{Left:} The implementation of the \emph{rotate} primitive. \emph{Right:} The implementation of the compensating action for the \emph{rotate} primitive, in the MIO debugger.}
    \label[listing]{fig:motor-impl}
\end{figure}

To illustrate the implementation of reversible primitives, we will use the example of the \emph{rotate} primitive, which rotates a servo motor for a given number of degrees.
The forwards implementation is shown on the left side of \cref{fig:motor-impl}.
%The servo motors are controlled by pulse-width modulation (PWM) signals.
To move the motor a given number of degrees the primitive first sets the target angle of the motor encoder, this happens on line~\ref{line:encode}.
The motor encoder is used to track the current motor angle, as well as the absolute target angle, which can be set with the \emph{set\_angle} method.
To rotate the motor a number of degrees relative to its current position, the primitive adds the degrees to the current motor angle (line~\ref{line:relative}).
Once the target angle is set, the primitive drives the motor to that angle using the \emph{drive} method, as shown on line~\ref{line:drive}.

The implementation of the compensating action for the \emph{rotate} primitive is shown on the right side of \cref{fig:motor-impl}.
First, the \emph{def\_prim\_serialize} macro captures the external state.
For each motor, the current angle of the motor is stored along with its index, as shown on line~\ref{line:serialize}.
Second, the \emph{def\_prim\_reverse} macro compensates the primitive by moving all motors back to the angles captured in the previous snapshot.
The angles captured by the \emph{def\_prim\_serialize} macro are absolute target angles. The compensating action moves the motors back to these angles by first setting the target angle, as shown on line~\ref{line:set-angle}.
It then uses the same \emph{drive} function to move the motor.

\subsection{Input: Mocking of Primitives}

The input mocking is implemented analogous to the debugger semantics, by adding a map to the in the virtual machine state.
This map is used to store the mocked values for the input primitives, which are received by a new debug message in the remote debugger.
In line with the semantics, there is also a new debug message to remove a mocked value from the map.
Currently, the map only supports registering primitive calls with their first argument.
This is sufficient for the current input primitives to be mocked, without any changes to their implementation.

The virtual machine will check the map of mocked values for every primitive call.
The prototype includes two input primitives that can be mocked in this way, the \emph{digitalRead} primitive which reads the value of a digital pin, and the \emph{colorSensor} primitive which reads a value from a uart color sensor.
The digitalRead primitive enables the user to mock the value of a digital pin, and thereby the behavior of a wide range of possible peripherals.
However, the range of possible input values is not always known statically, as it may be influenced by the output effects of the program.
To handle this, the MIO debugger includes initial support for predictable dependencies that can be defined as simple conditions, for example, \emph{"when the value of a digital pin $n$ is $x$, then input primitive $p$ with arguments $m$ will return the value $c$"}.

\subsection{Performance: Checkpointing}

To reduce memory usage, the MIO debugger only stores snapshots at certain checkpoints.
The semantics of MIO only takes snapshots after a primitive call, the prototype implementation follows this checkpointing policy precisely.
As shown by the debugger semantics and the proof, this is the minimum number of snapshots needed to enable backwards and forwards exploration of the multiverse tree.
To further reduce the performance impact on the microcontroller, snapshots are received and tracked by the desktop frontend of the MIO debugger.
To have minimal traffic between the backend and frontend, snapshots after primitive calls are sent automatically to the frontend.
Alternatively, the frontend can request snapshots at will through the remote debugger interface.

\section{Evaluation}\label{sec:evaluation}

To validate that our checkpointing strategy is performant enough for apply multiverse debugging on microcontrollers we performed a number of experiments. 
All experiments were performed on an STM32L496ZG microcontroller running at 80 MHz.
This microcontroller was connected to a laptop running the MIO debugger frontend that communicates with the microcontroller.

\begin{figure}
    \centering
    \begin{tikzpicture}
        \begin{groupplot}[
            group style={
            group size=2 by 1, % Two plots side by side
            horizontal sep=0.06\textwidth, % Space between plots
            },
            width=0.42\textwidth, % Width of each plot
            xlabel={Instructions executed}, 
            ylabel={Overhead (rel. \emph{no snapshotting})},
            table/col sep=comma,
            xmin=250, xmax=1250,
            xtick={250, 500, 750, 1000, 1250},
            tick label style={font=\scriptsize},
            label style={font=\small},
            ymin=0,
            scale only axis,
            grid=major,
            legend to name={sharedlegend},
            legend columns=3,
            legend style={font=\small}
            ]

            % First plot with all policies
            \nextgroupplot[
            ]
            % Define the baseline (x-axis at y=0) for the fill
            \path[name path=horizon] (axis cs:250,0) -- (axis cs:1250,0);

            \addplot[
            color=color0,
            mark=*,
            mark options={scale=0.7, fill=color0, draw=none},
            smooth,
            name path=1,
            ] 
            table[
            x={Instructions executed},
            y expr=\thisrow{Overhead},
            col sep=comma,
            restrict expr to domain={\thisrow{Interval}}{0:0}
            ] {benchmarks/forward-execution-checkpointing.csv};

            \addplot[
            color=color1,
            mark=*,
            mark options={scale=0.7, fill=color1, draw=none},
            smooth,
            name path=2,
            ] 
            table[
            x={Instructions executed},
            y expr=\thisrow{Overhead},
            col sep=comma,
            restrict expr to domain={\thisrow{Interval}}{1:1}
            ] {benchmarks/forward-execution-checkpointing.csv};
            \addlegendentry{every instruction}

            \addplot[
            color=color2,
            mark=*,
            mark options={scale=0.7, fill=color2, draw=none},
            smooth,
            name path=3,
            ] 
            table[
            x={Instructions executed},
            y expr=\thisrow{Overhead},
            col sep=comma,
            restrict expr to domain={\thisrow{Interval}}{5:5}
            ] {benchmarks/forward-execution-checkpointing.csv};

            \addplot[
            color=color3,
            mark=*,
            mark options={scale=0.7, fill=color3, draw=none},
            smooth,
            name path=4,
            ] 
            table[
            x={Instructions executed},
            y expr=\thisrow{Overhead},
            col sep=comma,
            restrict expr to domain={\thisrow{Interval}}{10:10}
            ] {benchmarks/forward-execution-checkpointing.csv};

            \addplot[
            color=color4,
            mark=*,
            mark options={scale=0.7, fill=color4, draw=none},
            smooth,
            name path=5,
            ] 
            table[
            x={Instructions executed},
            y expr=\thisrow{Overhead},
            col sep=comma,
            restrict expr to domain={\thisrow{Interval}}{50:50}
            ] {benchmarks/forward-execution-checkpointing.csv};

            \addplot[
            color=color5,
            mark=*,
            mark options={scale=0.7, fill=color5, draw=none},
            smooth,
            name path=6,
            ] 
            table[
            x={Instructions executed},
            y expr=\thisrow{Overhead},
            col sep=comma,
            restrict expr to domain={\thisrow{Interval}}{100:100}
            ] {benchmarks/forward-execution-checkpointing.csv};

            \addplot [
            color0,
            fill opacity=0.6,
            smooth,
            forget plot,       % No legend entry for the fill
            ] fill between [
            of = {1 and horizon},
            ];

            \addplot [
            color1,
            fill opacity=0.6,
            smooth,
            forget plot,       % No legend entry for the fill
            ] fill between [
            of = {2 and 3},
            ];

            \addplot [
            color2,
            fill opacity=0.6,
            smooth,
            forget plot,       % No legend entry for the fill
            ] fill between [
            of = {3 and 4},
            ];

            \addplot [
            color3,
            fill opacity=0.4,
            smooth,
            forget plot,       % No legend entry for the fill
            ] fill between [
            of = {4 and 5},
            ];

            \addplot [
            color4,
            fill opacity=0.4,
            smooth,
            forget plot,       % No legend entry for the fill
            ] fill between [
            of = {5 and 6},
            ];

            \coordinate (a) at (axis cs:1250,0);
            \coordinate (b) at (axis cs:1250,20);

            % Second plot (excluding "Snapshot at every instruction")
            \nextgroupplot[
            ylabel={},
            ]
            % Define the baseline (x-axis at y=0) for the fill
            \path[name path=horizon] (axis cs:250,0) -- (axis cs:1250,0);

            \addplot[
            color=color0,
            mark=*,
            mark options={scale=0.7, fill=color0, draw=none},
            smooth,
            name path=7,
            ] 
            table[
            x={Instructions executed},
            y expr=\thisrow{Overhead},
            col sep=comma,
            restrict expr to domain={\thisrow{Interval}}{0:0}
            ] {benchmarks/forward-execution-checkpointing.csv};
            \addlegendentry{no snapshotting}

            \addplot[
            color=color1,
            mark=*,
            mark options={scale=0.7, fill=color1, draw=none},
            smooth,
            name path=2,
            ] coordinates {(0,0) (0,0)};
            \addlegendentry{every instruction}

            \addplot[
            color=color2,
            mark=*,
            mark options={scale=0.7, fill=color2, draw=none},
            smooth,
            name path=8,
            ] 
            table[
            x={Instructions executed},
            y expr=\thisrow{Overhead},
            col sep=comma,
            restrict expr to domain={\thisrow{Interval}}{5:5}
            ] {benchmarks/forward-execution-checkpointing.csv};
            \addlegendentry{5 instructions}

            \addplot[
            color=color3,
            mark=*,
            mark options={scale=0.7, fill=color3, draw=none},
            smooth,
            name path=9,
            ] 
            table[
            x={Instructions executed},
            y expr=\thisrow{Overhead},
            col sep=comma,
            restrict expr to domain={\thisrow{Interval}}{10:10}
            ] {benchmarks/forward-execution-checkpointing.csv};
            \addlegendentry{10 instructions}

            \addplot[
            color=color4,
            mark=*,
            mark options={scale=0.7, fill=color4, draw=none},
            smooth,
            name path=10,
            ] 
            table[
            x={Instructions executed},
            y expr=\thisrow{Overhead},
            col sep=comma,
            restrict expr to domain={\thisrow{Interval}}{50:50}
            ] {benchmarks/forward-execution-checkpointing.csv};
            \addlegendentry{50 instructions}

            \addplot[
            color=color5,
            mark=*,
            mark options={scale=0.7, fill=color5, draw=none},
            smooth,
            name path=11,
            ] 
            table[
            x={Instructions executed},
            y expr=\thisrow{Overhead},
            col sep=comma,
            restrict expr to domain={\thisrow{Interval}}{100:100}
            ] {benchmarks/forward-execution-checkpointing.csv};
            \addlegendentry{100 instructions}

            \addplot [
            color0,
            fill opacity=0.6,
            smooth,
            forget plot,       % No legend entry for the fill
            ] fill between [
            of = {7 and horizon},
            ];

            \addplot [
            color2,
            fill opacity=0.6,
            smooth,
            forget plot,       % No legend entry for the fill
            ] fill between [
            of = {8 and 9},
            ];

            \addplot [
            color3,
            fill opacity=0.6,
            smooth,
            forget plot,       % No legend entry for the fill
            ] fill between [
            of = {9 and 10},
            ];

            \addplot [
            color4,
            fill opacity=0.4,
            smooth,
            forget plot,       % No legend entry for the fill
            ] fill between [
            of = {10 and 11},
            ];

            \addplot [
            color5,
            fill opacity=0.4,
            smooth,
            forget plot,       % No legend entry for the fill
            ] fill between [
            of = {11 and 7},
            ];

            \coordinate (aa) at (axis cs:250,0);
            \coordinate (bb) at (axis cs:250,19.70);
        \end{groupplot}

        \coordinate (c) at (current bounding box.north);
        \node at (c) [anchor=south] {\pgfplotslegendfromname{sharedlegend}};

        \draw [black!60, dotted, shorten <=1pt] (a) -- (aa);
        \draw [black!60, dotted, shorten >=1.5pt,shorten <=1.7pt] (b) -- (bb);

    \end{tikzpicture}
    \caption{Comparison of execution time of snapshotting: \textit{never}, at \textit{every instruction}, and checkpointing intervals; \textit{5, 10, 50, and 100 instructions}.
        The performance overhead is shown as execution time relative to the execution time when taking no snapshots.
    Left: Comparison of all policies.
    Right: The performance of the policies excluding snapshotting every instruction (zoom-in). The averages are taken over 10 runs of the same program.}
    \label{fig:snapshotting-performance}
\end{figure}
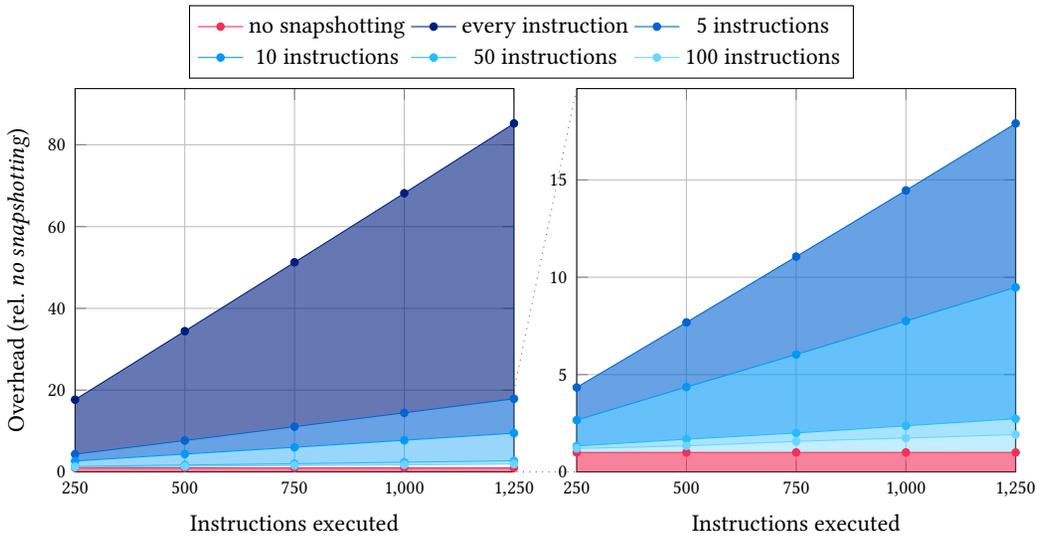

\subsection{Forward Execution with Checkpointing}

The first experiment evaluates the performance impact of checkpointing on the execution speed. % overhead of taking snapshots at different intervals in comparison to taking snapshots at every instruction or taking no snapshots at all.
We measured the execution time of a fixed number of instructions, when taking no snapshots, taking a snapshot every instruction, and for snapshotting after different intervals (5, 10, 50, or 100 instructions), as shown in \cref{fig:snapshotting-performance}. % for various different snapshot policies.
To reduce the impact of variable unknown factors, the program executed by the virtual machine includes no primitive calls. Specifically, this program checks for each integer from 1 to 13,374,242 if they are prime. Because this program has no primitive calls, the VM will only take snapshots at fixed intervals which are determined by the frontend.

The left plot shown in \cref{fig:snapshotting-performance}, gives the time it took to execute up to 1250 instructions for each snapshot policy relative to taking no snapshots.
Since snapshotting at every instruction is so much slower, we show the same zoomed in results in the right plot, but omit the results for snapshotting at every instruction.
For such small numbers of instructions, the execution time without any debugger intervention, remains roughly the same, taking on average 222.7ms. These results are shown in red.
In contrast, when taking snapshots after every executed instruction, the execution time increases dramatically.
For 1250 instructions it takes on average 19 seconds, which is around 85 times slower. % than without snapshotting.
For only 250 instructions the execution time increases seventeen-fold, to 3.9 seconds.

Once checkpointing is used the overhead reduces significantly.
When taking snapshots every five instructions, the virtual machine only needs 4 seconds to execute 1250 instructions.
Per instruction this results in an execution that is only 17.9 times slower.
Taking a snapshot every 10 instructions results in a total execution time of 2.1 seconds.
This results in a slowdown of factor 9.5.
When taking snapshots every 50 instructions, the slowdown lowers to a factor of 2.7.
Going up to a hundred instructions every snapshot, this becomes only a factor 1.9.

This initial benchmark of the checkpointing strategy shows that the performance overhead can be greatly reduced by reducing the number of snapshots taken.
Yet, execution times are still significantly slower than without any snapshotting.
This is due to the fact that the current prototype has not yet been optimized for performance.
The prototype only uses simple run-length encoding of WebAssembly memory to reduce the size of the snapshots.
In future, the snapshot sizes could be reduced greatly by only communicating the changes compared to the previous snapshot.
%This shows that, as the time between checkpoints is increased, the slowdown can be reduced significantly. %to around a factor 2 which is much better than the original 100 times slower execution.
However, in practice the performance is already sufficient to provide a responsive debugger interface as we illustrate in the online demo videos, which can be found \href{https://youtube.com/playlist?list=PLaz61XuoBNYVcQqHMAAXQNf8fz5IAMahe&si=HNrKY9YzqDFadATN}{here}\footnote{Full link: \href{https://youtube.com/playlist?list=PLaz61XuoBNYVcQqHMAAXQNf8fz5IAMahe&si=HNrKY9YzqDFadATN}{https://youtube.com/playlist?list=PLaz61XuoBNYVcQqHMAAXQNf8fz5IAMahe\&si=HNrKY9YzqDFadATN}}.
The example we highlight later in \cref{sec:usecase}, requires on average snapshot every 37 instructions.
This reduces overhead sufficiently to have a responsive debugging experience for users.
Additionally, the I/O operations by comparison typically take much longer to execute, a single action easily taking several seconds.

\begin{figure}
    \centering
    \begin{tikzpicture}
        \begin{axis}[
            xlabel={Amount of re-executed instructions},
            ylabel={Average time to step back (ms)},
            table/col sep=comma,
            xmin=0, xmax=30000,
            ymin=0, ymax=1200,
            xtick={0,5000,10000,15000,20000,25000,30000},
            axis lines=box,
            xtick pos=bottom,
            ytick pos=left,
            axis line style={-},
            tick align=inside,
            scaled x ticks=base 10:-3,
            ytick={0,100,200,300,400,500,600,700,800,900,1000},
            legend pos=south east,
            grid=both,
            height=6.5cm,
            ]

            % Define the baseline (x-axis at y=0) for the fill
            \path[name path=axis] (axis cs:0,0) -- (axis cs:30000,0);

            \addplot[
            color=color2,
            mark=*,
            mark options={scale=0.7, fill=color2, draw=none},
            name path=chart,
            ]
            table[
                x=t, y=avg_time
            ] {benchmarks/step-back-reexecute.csv};

            \addplot [
            color2,
            fill opacity=0.4,
            smooth,
            forget plot,       % No legend entry for the fill
            ] fill between [
            of = {chart and axis},
            ];

            \addplot[
            dashed,
            color=color2,
            forget plot
            ]
            coordinates {(0, 468.0) (30000, 468.0)};

            \node at (axis cs:28500, 468) [anchor=south] {\scriptsize 468 ms};
        \end{axis}
    \end{tikzpicture}
    \caption{Plot showing the average time to step back as the number of instructions requiring re-execution increases in increments of one thousand. Averages are calculated over 10 runs of the same program.}
    \label{fig:stepping-back-performance}
\end{figure}
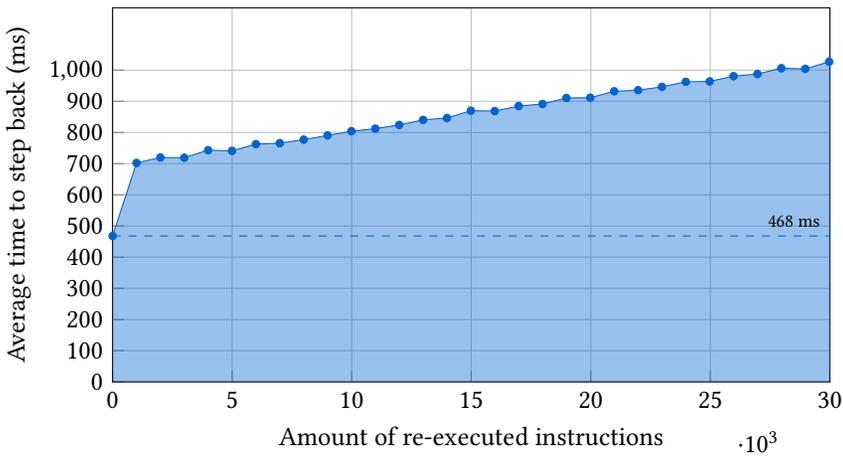

\subsection{Backwards execution}

The graphs in \cref{fig:snapshotting-performance} only show part of the picture, where the less snapshots are taken, the better the performance.
Unfortunately, there is no such thing as a free lunch, and while only taking one snapshot at the start of the program and never again, would result in the lowest possible overhead for forwards execution, this is not the case for backwards execution.
In that case stepping back would always have to re-execute the entire program.
Clearly, the further apart the snapshots, the longer it will take to step back.
To illustrate this trade-off, we examined the impact of the number of re-executed instructions on stepping back speed.

\Cref{fig:stepping-back-performance} shows the average time it takes to step back as the number of instructions requiring re-execution increases in increments of one thousand.
The averages are calculated over 10 runs of the same program used in the previous section.
When executing only a handful of instructions, the time to step back is dominated by the communication latency between the microcontroller and the debugger frontend.
On average, this results in a minimal time of 468ms to step back.
Between one thousand and 30 thousand re-executed instructions, the time to step back increases linearly by roughly 11ms per a thousand instructions.

Our analysis of the checkpoint strategy's impact on stepping back shows that the overhead is minimal.
The prototype is able to re-execute 30 thousand non-I/O instructions in around one second.
Compared to the overhead of checkpointing on forwards execution (see \cref{fig:snapshotting-performance}), we can safely conclude that in practice the overhead on backwards execution is negligible. % introduced by reducing the number of snapshots 
This is further evidenced in our \href{https://youtube.com/playlist?list=PLaz61XuoBNYVcQqHMAAXQNf8fz5IAMahe&si=HNrKY9YzqDFadATN}{demo videos}, where developers mostly have to wait for physical I/O actions to complete, and stepping back is otherwise instantaneous.

\subsection{Use case: Lego Mindstorms Color Dial}\label{sec:usecase}

To illustrate the practical potential of MIO and its new debugger approach, we present a simple reversible robot application using Lego Mindstorms components.
However, not just microcontroller applications may benefit from our novel approach, there are many application domains where output is entirely in the form of digital graphics, which are more easily reversible---such as video games, simulations, etc.
Nevertheless, to highlight the potential of the approach we demonstrate the MIO debugger using small physical robots and other microcontroller applications, as this is a more challenging environment for multiverse debugging.
Using the digital input and motor primitives described in \cref{sec:implementation}, we developed a color dial, as a simplified application.
We developed this example alongside a few others to further demonstrate the usability of the MIO debugger\footnote{Code for all examples can be found \emph{[link to repository removed for double-blind review]}}, and have created demo videos for a few of the examples, which can be found \href{https://youtube.com/playlist?list=PLaz61XuoBNYVcQqHMAAXQNf8fz5IAMahe&si=HNrKY9YzqDFadATN}{online}\footnote{Full link: \href{https://youtube.com/playlist?list=PLaz61XuoBNYVcQqHMAAXQNf8fz5IAMahe&si=HNrKY9YzqDFadATN}{https://youtube.com/playlist?list=PLaz61XuoBNYVcQqHMAAXQNf8fz5IAMahe\&si=HNrKY9YzqDFadATN}}.

\lstdefinelanguage{AssemblyScript}{
sensitive,
morecomment=[l]{//},
morekeywords={import, export, let, const, while, class, function, as, from, enum},
morekeywords=[2]{true, void, u32, i32, boolean, Pin, Color, string, Options},
morekeywords=[3]{@external},
morestring=[b]{"},
morestring=[b]` % Interpolation strings.
}

The color dial application works as follows, the robot has a color sensor that can detect the color of objects.
Depending on the color seen by the sensor, a single motor will move the needle on the dial to the location indicating the color seen by the sensor.
We built the dial using LEGO Mindstorms components~\cite{ferreira24} as shown on the left of \Cref{fig:robot}.
The right-hand side of \cref{fig:robot} shows the infinite loop that controls the robot, written in AssemblyScript.
In this loop, the robot will continually read sensor values from the color sensor. While doing so it will move the needle of the dial to the correct position indicating the current color seen by the sensor.
The needle is only moved if the color sensor sees a value different from what the dial is currently indicating.
The relative amount that the needle needs to move is calculated by taking the difference between the current color the needle is pointing at and the new color.

\begin{figure}
    \begin{minipage}{.40\textwidth}
    \centering
    \includegraphics[height=5.4cm]{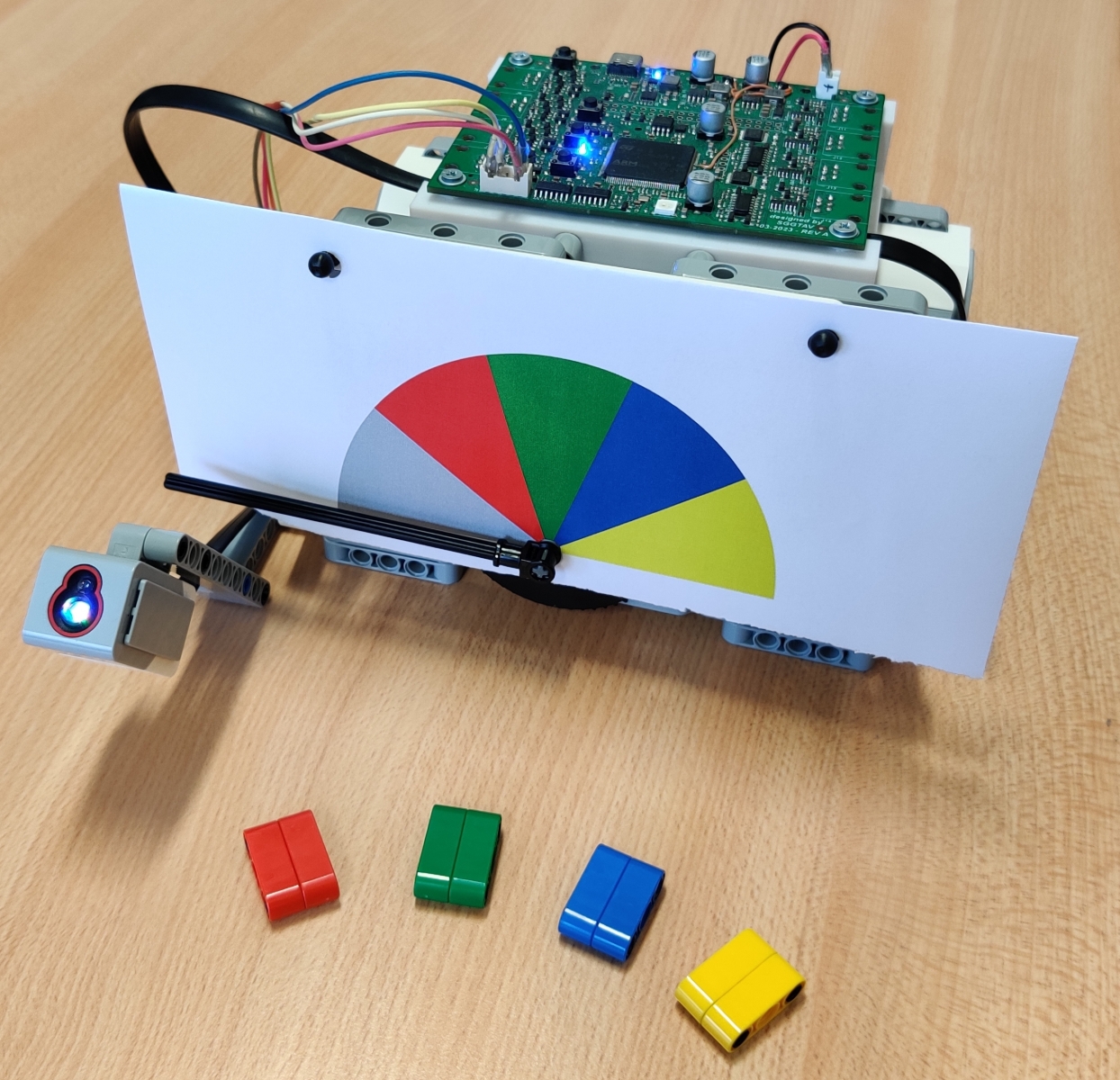}
    \end{minipage}%
\hfill%
\begin{minipage}{.54\textwidth}
        \begin{lstlisting}[language=AssemblyScript, style=CStyle,escapechar=']
enum Color { none = 0, red = 1, green = 2,
             blue = 3, yellow = 4 }

const sensor = colorSensor(Pin.IO2);
let current: Color = Color.none;
while (true) {
  let next: Color = sensor.read();
  if (next != current) {'\label{line:target}'
    // turn the needle if the color changed
    rotate(Pin.IO1,
      (next - current) * angle, speed);
  }
  current = next;
}
\end{lstlisting}
\end{minipage}
    \caption{Left: Lego color dial that recognizes the color of objects. Right: The main loop controlling the color dial. The dial has a single motor connected to pin IO1, and the color sensor is connected to pin IO2.}
    \label{fig:robot}
\end{figure}

The program for the color dial uses the reversible primitive \emph{rotate}, used as an example in \cref{sec:implementation}, to rotate the needle of the dial.
By using only reversible output primitives, the program written for this robot automatically becomes reversible. %without any additional input from the programmer.
This means that while debugging the application, the color the needle is pointing towards will always correspond to variable \emph{current} in the program.
Concretely, if the debugger steps back through the program from the end of a loop iteration to line~\ref{line:target}, it will move the needle back to the previous color without having to read a new sensor value.
This makes it easy to test certain state transitions where the needle is pointing at one particular color and now has to move to a different color.

Aside from using time travel debugging which keeps external state in mind, users of our debugger are also able to leverage multiverse debugging capabilities to deal with the non-deterministic nature of this color sensor.
This allows them to easily simulate various sensor values, and explore the different paths the robot can take without needing to use any real, correctly-colored objects.
This example touches on a few common aspects of robotics applications, such as processing non-deterministic input, controlling motors and making decisions based on sensor values.
Using the I/O primitives supported by MIO, various other applications could be build; such as a binary LED counter, a smart curtain, an analogue clock, a maze solving robot, and so on.

\subsection{Discussion and Future Work}\label{sec:Discussion}

Our debugger's formal guarantees are built upon a constrained subset of general I/O operations i.e. deterministically reversible operations.
These operations set the boundaries of where we can guarantee that our debugger is operating correctly.  
The primary limitations in comparison with general I/O operations concern inputs with unbounded or unknown ranges, such as raw network sockets or file systems, as these prevent the exhaustive instrumentation required by our debugger. 
Furthermore, our system excludes outputs that are asynchronous (e.g., an asynchronous web-api) or physically irreversible (e.g., firing a missile), as these violate the core requirements for atomic execution and deterministically reversible actions.

Future research could substantially extend or relax these boundaries. 
For instance, the challenge of unbounded inputs could be addressed through user-defined sampling strategies that allow developers to specify relevant input subsets for exploration. 
These sampling strategies could potentially provide a practical and usable system, even though they don't cover every possibility.
The constraint of physical reversibility could be relaxed by operating in a more controllable environment where reversibility is pushed further. 
The requirement for having external output to be synchronous was mostly introduced to keep the formalization of the system from growing too complex.
Independent asynchronous output actions could thus potentially be supported but would require a substantial formalization effort. 
Finally, a formal language for declaring I/O interdependencies would allow the debugger to reason about more complex, realistic hardware setups, moving beyond our current simple interdependency assumptions.

Despite these potential extensions, some operations remain fundamentally incompatible with the reversible multiverse debugging paradigm. 
The ability to reverse an execution is essential, and this is impossible for operations causing an irreversible state change in an uncontrolled \emph{external} environment. 
Actions such as communicating with a public web service or safety-critical physical action cannot be truly rewound, i.e. firing a missile. 
Such operations, therefore, fall outside the theoretical scope of our approach and should be addressed through a controllable test environment.

Finally, while we have not applied our debugging recipe to other languages we believe our guidelines provide a clear path. 
As an example the same ideas could be applied in a vastly different environment, i.e. C in combination with GDB. 
First, all I/O primitives would have to be rewritten in a dedicated C library, making them easy to identify.
Each I/O primitive would also need to be accompanied by a compensating action.  
GDB could then place breakpoints on each of these I/O primitives to intercept execution at the exact moments required for taking state snapshots.
The snapshotting mechanism could be implemented with the memory read/write commands to save and later restore the program's state.
While more involved, implementing the compensating actions is also feasible. 
GDB can use its memory write capability to directly manipulate the program's memory and stack, effectively executing the "undo" logic for an I/O operation. 
Our formal semantics provide the precise blueprint for this process, dictating exactly when to take snapshots and how to apply these compensating actions when stepping backwards.

\section{Related work}\label{sec:related}

Our work builds directly on WebAssembly~\cite{haas17,rossberg19,rossberg23} and WARDuino~\cite{lauwaerts24a}, as we have discussed in \cref{sec:webassembly} and \cref{sec:warduino}.
In this section, we present an overview of further related work.

\paragraph{Multiverse debuggers}

Multiverse debugging has emerged as a powerful technique to debug non-deterministic program behavior, by allowing multiple execution paths to be explored simultaneously.
It was proposed by \citet{torres19} to debug actor-based programs with a prototype called Voyager~\cite{gurdeep19a}, that worked directly on the operational semantics of the language defined in PLT Redex~\cite{felleisen09}.
Several works have expanded on multiverse debugging; \citet{pasquier22} introduced user-defined reduction rules to shrink the search space for multiverse-wide breakpoint lookup, and \citet{pasquier23} introduced temporal breakpoints that allow users to reason about the future execution of a program using linear temporal logic.
In contrast to MIO, existing multiverse debuggers do not consider I/O operations, or their effects on the external environment. Additionally, previous multiverse debuggers only work on a model of the program execution.

\paragraph{Multiverse Analysis}
The idea of exploring the multiverse of possibilities, is more widely known as multiverse analysis.
Within statistical analysis, it is a method that considers all possible combinations of datasets and analysis simultaneously~\cite{steegen16}.
Within software development, there are several frameworks for exploratory programming~\cite{kery17a} that allow developers to interact with the multiverse of source code versions~\cite{steinert12}.
This way programmers can actively explore the behavior of a program by experimenting with different code.
The approach has led to \emph{programming notebooks}~\cite{perez07,kery18}, and dedicated \emph{explore-first IDEs}~\cite{steinert12,kery17}.
Explore-first IDEs, such as the original by \citet{steinert12}, allow exploration of different source code versions in parallel.
While these editors consider the variations in the program code itself, multiverse debuggers focus on variations of program execution caused by non-deterministic behavior for a single code base.
Combining the two techniques could lead to a powerful development environment, and represents interesting future work.

\paragraph{Exploring Execution Trees}
Many automatic verification and other analysis tools also explore the execution tree of a program, such as software \emph{model checkers}~\cite{godefroid97,jhala09}, \emph{symbolic execution}~\cite{king76,cadar11,baldoni18}, and \emph{concolic execution}~\cite{godefroid05,sen06,marques22}.
These techniques are great at automatically detecting program faults, however, they rely on a precise description of the problem or program specification, often in the form of a formal model.
This is in stark contrast with debuggers, which are tools to help developers find mistakes for which no precise formula exists, and for which the causes are often unknown.
Despite the major differences, static analysis techniques could greatly help improve debuggers by providing the developers with more information.
For multiverse debugging the techniques could help guide developers through large and complicated execution trees.
Additionally, the techniques for handling the state explosion problem~\cite{valmari98,kurshan98,kahlon09} developed for these analysis tools, can help reduce the number of redundant execution paths in multiverse debugging.

\paragraph{Reversible debuggers}
Reversible debugging, also called back-in-time debugging, has existed for more than fifty years~\cite{balzer69}, and has been implemented with various strategies~\cite{engblom12}.
In spite of all the different implementations, few reversible debuggers reverse output effects.
The most common approach is \emph{record-replay debugging}~\cite{agrawal91,feldman88,ronsse99,boothe00,burg13,ocallahan17} that allows offline debugging with checkpoint-based traces.
The recent RR framework~\cite{ocallahan17} is the most advanced record-replay debugger to date.
While replaying it does not reverse I/O operations, in fact, the operations are not performed at all,
instead the external effects are recorded and replayed within the debugger.
One of the earliest works, the Igor debugger~\cite{feldman88}, featured so-called \emph{prestart routines}, which could perform certain actions after stepping back, such as updating the screen with the current frame buffer.
This is one of the first attempts at dealing with external state, however, the solution was purely ad-hoc, and required significant user intervention. %; for instance, supplying the name, mode, and file pointer for each file currently opened during execution.
Additionally, dealing with I/O in a structured way through the prestart routines was still too costly at the time.
There is also no proof of soundness, or any characterization of which prestart routines lead to correct debugging behavior.
Another approach, \emph{omniscient debugging}~\cite{lewis03,pothier07}, records the entire execution of a program, allowing free offline exploration of the entire history, and enabling advanced queries on causal relationships in the execution~\cite{pothier07}.
A third approach is based on \emph{reversible programming languages}~\cite{giachino14,lanese18,lanese18a}.
While not applicable in all scenarios, since it requires a fully reversible language, this approach can enable more advanced features, such as reversing only parts of a concurrent process, while still remaining consistent with the forwards execution~\cite{lanese18a}.
The reversible LISP debugger by \citet{lieberman97} not only redraws the graphical output, but also links graphics with their responsible source code.
Reversible debuggers for the \emph{graphical programming language} Scratch~\cite{maloney10}, namely Blink~\cite{strijbol24} and NuzzleBug~\cite{deiner24}, also redraw the graphical output when stepping back.
However, in all these debuggers, the output effects are internal to the system.
For the Scratch debuggers, the visual output is actually part of the execution model~\cite{maloney10}.

\paragraph{Reversible programming languages}
The concept of reversible computation has a longstanding history in computer science~\cite{zelkowitz73,bennett88,mezzina20}, with the most notable models for reversibility being reversible Turing machines~\cite{axelsen16}, and reversible circuits~\cite{saeedi13}.
Furthermore, the design of reversible languages has evolved into its own field of study~\cite{gluck23}, with languages for most programming paradigms, such as the imperative, and first reversible language, Janus~\cite{lutz86,yokoyama08,lami24}, several functional languages~\cite{yokoyama12,matsuda20}, object-oriented languages~\cite{schultz16,haulund17,hay-schmidt21}, monadic computation~\cite{heunen15}, and languages for concurrent systems~\cite{danos04,schordan16,hoey18}.
Several works have investigated how reversible languages can help reversible debuggers~\cite{chen01,engblom12,lanese18}, however, full computational reversibility is not necessary for back-in-time debugging~\cite{engblom12}.
Moreover, these reversible languages do not consider output effects on the external world, with a few notable exceptions in the space of proprietary languages for industrial robots.

\paragraph{Reverse execution of industrial robots}
While numerous examples can be imagined where actions affecting the environment cannot be easily reversed, there are sufficient scenario's where this is possible, for reverse execution to be widely used in industry.
The reversible language by \citet{schultz20} is particularly interesting.
The work proposes a system for error handling in robotics applications through reverse execution, and identifies two types of reversibility; direct and indirect.
Through our compensating actions, MIO is able to handle both directly and indirectly reversible actions.
\Citet{laursen18} propose a reversible domain-specific language for robotic assembly programs, SCP-RASQ. While we do not focus on a single specific application domain, this work does show how reversible output primitives are possible for advanced robotics applications.
SCP-RASQ uses a similar system of user-defined compensating actions, to reverse indirectly reversible operations.
Using these kinds of languages, we believe that the MIO debugger could be extended to support more complex output primitives, which could control industrial robots.

\paragraph{Reversibility}
The concept of reversibility is well understood on a theoretical level, for both sequential context~\cite{leeman86}, and concurrent systems.
The latter is much more complex, and has lead to two major definitions; causal-consistent reversibility~\cite{danos04,lanese14}, and time reversibility~\cite{weiss75,kelly81}.
Causal-consistent reversibility is the idea that an action can only be reversed after all subsequent dependent actions have been reversed~\cite{lanese14}.
This ensures that all consequences of an action have been undone before reversing, and the system always returns to a past consistent state.
On the other hand, time reversibility only considers the stochastic behavior when time is reversed~\cite{weiss75,kelly81,bernardo23}.
However, it has recently been shown that causal-consistency implies time reversibility~\cite{bernardo23}.
Our debugger works on a single-threaded language, where the non-determinism is introduced by the input operations.
In our work, the undo actions are causally consistent in the single-threaded world.
We believe that we can extend MIO to support concurrent languages, and that the existing literature~\cite{lanese18,giachino14} can help to ensure it stays causally consistent.

\paragraph{Remote Debugging on Microcontrollers}
In remote debugging~\cite{rosenberg96}, a debugger frontend is connected to a remote debugger backend running the program being debugged.
The MIO debugger uses remote debugging to mitigate some limitations of microcontrollers, an approach that has been adopted for many embedded systems~\cite{potsch17,skvar-c24,soderby24}.
These debuggers fall in two categories; \emph{stub} and \emph{on-chip}~\cite{li09}.
A stub is a small piece of software that runs on the microcontroller, instrumenting the software being debugged, this is the approach taken in this work.
On-chip debugging uses additional hardware to debug the embedded device, a common example are JTAG~\cite{shortm} hardware debuggers.
These debuggers can interface with different software, such as the popular OpenOCD debugger~\cite{hogl06}.
Since the debugger front-end needs to communicate with the backend on the microcontroller it exacerbates the probe effect~\cite{gait86} and slows down the debugger.
To address these limitations, a new form of remote debugging was proposed called out-of-place debugging~\cite{marra21:practical,marra18,lauwaerts22}.
This technique moves part of the debugging process to a more powerful machine, which can reduce debugging interference and speedup performance.
The MIO debugger is already sufficiently fast, but a speed-up can likely be achieved by adopting out-of-place debugging.

\paragraph{Environment Modeling}
There are many environment interactions that can influence the possible input values and thereby the possible execution paths of a program.
We have elided these interactions from the formal model and assume that I/O operations are independent, while our prototype does support defining simple \emph{predictable dependencies} between I/O operations.
Modeling the interactions between I/O operations is also hugely important for testing, and \emph{environment modeling} has therefore been widely studied in this area~\cite{blackburn98}.
Environment models are often used for automatic test generation~\cite{dalal99,auguston05}, and have also been applied to real-time embedded software~\cite{iqbal15}.

\paragraph{Formalizing Debuggers}
Previous efforts to define debuggers formally have been incredibly varied in their depth and approach, and have not yet reached a consensus on any standard method.
An early attempt used PowerEpsilon~\cite{zhu91, zhu92} to define a denotational semantic describing the source mapping needed to debug a toy language~\cite{zhu01}.
In 2012, the work by \citet{li12} focussed on automatic debuggers, and defined operational semantics for tracing, and for backwards searching based on those traces.
In 1995, \citet{bernstein95a} defined a debugger in terms of an underlying language semantic for the first time, an approach we adopted in this work as well.
In fact, the approach is followed by a number of later works~\cite{ferrari01, torres17, lauwaerts24a, holter24}, including the work by Torres Lopez et al.~\cite{torres19} in 2019, which defines the correctness of their debugger in terms of the non-interference with the underlying semantic.
The correctness criteria requires each execution observed by either semantic is observed by the other, which is similar to our soundness and completeness theorems rolled into one.
A more recent work presented a novel abstract debugger, that uses static analysis to allow developers to explore abstract program states rather than concrete ones~\cite{holter24}.
The work defines operational semantics for their abstract debugger, and for a concrete debugger.
The soundness of the abstract debugger is defined in terms this concrete debugger, where every debugging session in the concrete world is guaranteed to correspond to a session in the abstract world.
The opposite direction cannot hold since the static analysis relies on an over-approximation, which means there can always be sessions in the abstract world which are impossible in the concrete world.
This is in stark contrast with the soundness theorem in our work, which states that any path in the debugging semantics can be observed in the underlying language semantics.

\section{Conclusion}\label{sec:conclusion}
While existing multiverse debuggers have shown promise in abstract settings, they struggled to adapt to concrete programming languages and I/O operations. 
In this article, we address these limitations by presenting a novel approach that seamlessly integrates multiverse debugging with a full-fledged WebAssembly virtual machine. 
This is the first implementation that enables multiverse debugging for microcontrollers.
Our approach improves current multiverse debuggers by being able to provide multiverse debugging in the face of a set of well-defined I/O primitives. 
We have formalized our approach and give a soundness proof.  
We have implemented our approach and have given various examples showcasing how our approach can deal with a wide range of specialized I/O primitives, ranging from non-deterministic input sensors, to I/O pins and even steering motors. Our sparse snapshotting approach delivers reasonable performance even on a restricted microcontroller platform.
Our initial implementation provides a substantial benefit over existing approaches, but we believe there are further opportunities to relax the constraints on I/O primitives further. For example, our current implementation only supports simple dependencies between I/O actions, but we believe this could be relaxed further by introducing an explicit rule language so that programmers can define more complex dependencies between the I/O actions.

\section*{Acknowledgements}

We thank Francisco Ferreira Ruiz for providing the Open Bot Brain hardware, Jonas Sys for testing the GUI, the members of the VUB DisCo group for fruitful discussions, and all the contributors of the open source WARDuino VM.
\textit{Funding:} Tom Lauwaerts is funded by a project from the Research Foundation Flanders (FWO) [grant number FWOOPR2020008201].

\section*{Data-Availability Statement}

The debugger technique proposed in this article is implemented and available as an open-source project (\href{https://github.com/TOPLLab/MIO}{github.com/TOPLLab/MIO}) developed on top of the open-source WARDuino virtual machine.
The software is also available as research artifact archived and available on Zenodo~\cite{steevens25:mio}.

\bibliographystyle{ACM-Reference-Format}
\bibliography{bibliography}

%%% -*-BibTeX-*-
%%% Do NOT edit. File created by BibTeX with style
%%% ACM-Reference-Format-Journals [18-Jan-2012].

\begin{thebibliography}{99}

%%% ====================================================================
%%% NOTE TO THE USER: you can override these defaults by providing
%%% customized versions of any of these macros before the \bibliography
%%% command.  Each of them MUST provide its own final punctuation,
%%% except for \shownote{} and \showURL{}.  The latter two
%%% do not use final punctuation, in order to avoid confusing it with
%%% the Web address.
%%%
%%% To suppress output of a particular field, define its macro to expand
%%% to an empty string, or better, \unskip, like this:
%%%
%%% \newcommand{\showURL}[1]{\unskip}   % LaTeX syntax
%%%
%%% \def \showURL #1{\unskip}           % plain TeX syntax
%%%
%%% ====================================================================

\ifx \showCODEN    \undefined \def \showCODEN     #1{\unskip}     \fi
\ifx \showISBNx    \undefined \def \showISBNx     #1{\unskip}     \fi
\ifx \showISBNxiii \undefined \def \showISBNxiii  #1{\unskip}     \fi
\ifx \showISSN     \undefined \def \showISSN      #1{\unskip}     \fi
\ifx \showLCCN     \undefined \def \showLCCN      #1{\unskip}     \fi
\ifx \shownote     \undefined \def \shownote      #1{#1}          \fi
\ifx \showarticletitle \undefined \def \showarticletitle #1{#1}   \fi
\ifx \showURL      \undefined \def \showURL       {\relax}        \fi
% The following commands are used for tagged output and should be
% invisible to TeX
\providecommand\bibfield[2]{#2}
\providecommand\bibinfo[2]{#2}
\providecommand\natexlab[1]{#1}
\providecommand\showeprint[2][]{arXiv:#2}

\bibitem[sho(2013)]%
        {shortm}
 \bibinfo{year}{2013}\natexlab{}.
\newblock \showarticletitle{{{IEEE Standard}} for {{Test Access Port}} and
  {{Boundary-Scan Architecture}}}.
\newblock \bibinfo{journal}{\emph{IEEE Std 1149.1-2013 (Revision of IEEE Std
  1149.1-2001)}} (\bibinfo{date}{May} \bibinfo{year}{2013}),
  \bibinfo{pages}{1--444}.
\newblock
\href{https://doi.org/10.1109/IEEESTD.2013.6515989}{doi:\nolinkurl{10.1109/IEEESTD.2013.6515989}}


\bibitem[Agrawal et~al\mbox{.}(1991)]%
        {agrawal91}
\bibfield{author}{\bibinfo{person}{H. Agrawal}, \bibinfo{person}{R.A.
  De~Millo}, {and} \bibinfo{person}{E.H. Spafford}.}
  \bibinfo{year}{1991}\natexlab{}.
\newblock \showarticletitle{An Execution-Backtracking Approach to Debugging}.
\newblock \bibinfo{journal}{\emph{IEEE Software}} \bibinfo{volume}{8},
  \bibinfo{number}{3} (\bibinfo{date}{May} \bibinfo{year}{1991}),
  \bibinfo{pages}{21--26}.
\newblock
\showISSN{1937-4194}
\href{https://doi.org/10.1109/52.88940}{doi:\nolinkurl{10.1109/52.88940}}


\bibitem[Auguston et~al\mbox{.}(2005)]%
        {auguston05}
\bibfield{author}{\bibinfo{person}{Mikhail Auguston},
  \bibinfo{person}{James~Bret Michael}, {and} \bibinfo{person}{Man-Tak Shing}.}
  \bibinfo{year}{2005}\natexlab{}.
\newblock \showarticletitle{Environment Behavior Models for Scenario Generation
  and Testing Automation}.
\newblock \bibinfo{journal}{\emph{SIGSOFT Softw. Eng. Notes}}
  \bibinfo{volume}{30}, \bibinfo{number}{4} (\bibinfo{date}{May}
  \bibinfo{year}{2005}), \bibinfo{pages}{1--6}.
\newblock
\showISSN{0163-5948}
\href{https://doi.org/10.1145/1082983.1083284}{doi:\nolinkurl{10.1145/1082983.1083284}}


\bibitem[Axelsen and Gl{\"u}ck(2016)]%
        {axelsen16}
\bibfield{author}{\bibinfo{person}{Holger~Bock Axelsen} {and}
  \bibinfo{person}{Robert Gl{\"u}ck}.} \bibinfo{year}{2016}\natexlab{}.
\newblock \showarticletitle{On Reversible {{Turing}} Machines and Their
  Function Universality}.
\newblock \bibinfo{journal}{\emph{Acta Informatica}} \bibinfo{volume}{53},
  \bibinfo{number}{5} (\bibinfo{date}{Aug.} \bibinfo{year}{2016}),
  \bibinfo{pages}{509--543}.
\newblock
\showISSN{1432-0525}
\href{https://doi.org/10.1007/s00236-015-0253-y}{doi:\nolinkurl{10.1007/s00236-015-0253-y}}


\bibitem[Baldoni et~al\mbox{.}(2018)]%
        {baldoni18}
\bibfield{author}{\bibinfo{person}{Roberto Baldoni}, \bibinfo{person}{Emilio
  Coppa}, \bibinfo{person}{Daniele~Cono D'elia}, \bibinfo{person}{Camil
  Demetrescu}, {and} \bibinfo{person}{Irene Finocchi}.}
  \bibinfo{year}{2018}\natexlab{}.
\newblock \showarticletitle{A {{Survey}} of {{Symbolic Execution Techniques}}}.
\newblock \bibinfo{journal}{\emph{ACM Comput. Surv.}} \bibinfo{volume}{51},
  \bibinfo{number}{3} (\bibinfo{date}{May} \bibinfo{year}{2018}),
  \bibinfo{pages}{50:1--50:39}.
\newblock
\showISSN{0360-0300}
\href{https://doi.org/10.1145/3182657}{doi:\nolinkurl{10.1145/3182657}}


\bibitem[Balzer(1969)]%
        {balzer69}
\bibfield{author}{\bibinfo{person}{R.~M. Balzer}.}
  \bibinfo{year}{1969}\natexlab{}.
\newblock \showarticletitle{{{EXDAMS}}: Extendable Debugging and Monitoring
  System}. In \bibinfo{booktitle}{\emph{Proceedings of the {{May}} 14-16, 1969,
  Spring Joint Computer Conference}} \emph{(\bibinfo{series}{{{AFIPS}} '69
  ({{Spring}})})}. \bibinfo{publisher}{Association for Computing Machinery},
  \bibinfo{address}{New York, NY, USA}, \bibinfo{pages}{567--580}.
\newblock
\showISBNx{978-1-4503-7902-1}
\href{https://doi.org/10.1145/1476793.1476881}{doi:\nolinkurl{10.1145/1476793.1476881}}


\bibitem[Battagline(2021)]%
        {battagline21}
\bibfield{author}{\bibinfo{person}{R. Battagline}.}
  \bibinfo{year}{2021}\natexlab{}.
\newblock \bibinfo{booktitle}{\emph{The Art of {{WebAssembly}}: {{Build}}
  Secure, Portable, High-Performance Applications}}.
\newblock \bibinfo{publisher}{No Starch Press}.
\newblock
\showISBNx{978-1-71850-144-7}


\bibitem[Bennett(1988)]%
        {bennett88}
\bibfield{author}{\bibinfo{person}{Charles~H. Bennett}.}
  \bibinfo{year}{1988}\natexlab{}.
\newblock \showarticletitle{Notes on the History of Reversible Computation}.
\newblock \bibinfo{journal}{\emph{IBM Journal of Research and Development}}
  \bibinfo{volume}{32}, \bibinfo{number}{1} (\bibinfo{date}{Jan.}
  \bibinfo{year}{1988}), \bibinfo{pages}{16--23}.
\newblock
\showISSN{0018-8646}
\href{https://doi.org/10.1147/rd.321.0016}{doi:\nolinkurl{10.1147/rd.321.0016}}


\bibitem[Bernardo et~al\mbox{.}(2023)]%
        {bernardo23}
\bibfield{author}{\bibinfo{person}{Marco Bernardo}, \bibinfo{person}{Ivan
  Lanese}, \bibinfo{person}{Andrea Marin}, \bibinfo{person}{Claudio~A.
  Mezzina}, \bibinfo{person}{Sabina Rossi}, {and} \bibinfo{person}{Claudio
  Sacerdoti~Coen}.} \bibinfo{year}{2023}\natexlab{}.
\newblock \showarticletitle{Causal {{Reversibility Implies Time
  Reversibility}}}. In \bibinfo{booktitle}{\emph{Quantitative {{Evaluation}} of
  {{Systems}}}}, \bibfield{editor}{\bibinfo{person}{Nils Jansen} {and}
  \bibinfo{person}{Mirco Tribastone}} (Eds.). \bibinfo{publisher}{Springer
  Nature Switzerland}, \bibinfo{address}{Cham}, \bibinfo{pages}{270--287}.
\newblock
\showISBNx{978-3-031-43835-6}
\href{https://doi.org/10.1007/978-3-031-43835-6_19}{doi:\nolinkurl{10.1007/978-3-031-43835-6_19}}


\bibitem[Bernstein and Stark(1995)]%
        {bernstein95a}
\bibfield{author}{\bibinfo{person}{Karen~L. Bernstein} {and}
  \bibinfo{person}{Eugene~W. Stark}.} \bibinfo{year}{1995}\natexlab{}.
\newblock \showarticletitle{Operational {{Semantics}} of a {{Focusing
  Debugger}}}.
\newblock \bibinfo{journal}{\emph{Electronic Notes in Theoretical Computer
  Science}}  \bibinfo{volume}{1} (\bibinfo{date}{Jan.} \bibinfo{year}{1995}),
  \bibinfo{pages}{13--31}.
\newblock
\showISSN{1571-0661}
\href{https://doi.org/10.1016/S1571-0661(04)80002-1}{doi:\nolinkurl{10.1016/S1571-0661(04)80002-1}}


\bibitem[Blackburn(1998)]%
        {blackburn98}
\bibfield{author}{\bibinfo{person}{M.R. Blackburn}.}
  \bibinfo{year}{1998}\natexlab{}.
\newblock \showarticletitle{Using Models for Test Generation and Analysis}. In
  \bibinfo{booktitle}{\emph{17th {{DASC}}. {{AIAA}}/{{IEEE}}/{{SAE}}. {{Digital
  Avionics Systems Conference}}. {{Proceedings}} ({{Cat}}. {{No}}.{{
  98CH36267}})}}, Vol.~\bibinfo{volume}{1}. \bibinfo{pages}{C45/1--C45/8
  vol.1}.
\newblock
\href{https://doi.org/10.1109/DASC.1998.741501}{doi:\nolinkurl{10.1109/DASC.1998.741501}}


\bibitem[Boothe(2000)]%
        {boothe00}
\bibfield{author}{\bibinfo{person}{Bob Boothe}.}
  \bibinfo{year}{2000}\natexlab{}.
\newblock \showarticletitle{Efficient Algorithms for Bidirectional Debugging}.
  In \bibinfo{booktitle}{\emph{Proceedings of the {{ACM SIGPLAN}} 2000
  Conference on {{ Programming}} Language Design and Implementation}}
  \emph{(\bibinfo{series}{{{PLDI}} '00})}. \bibinfo{publisher}{Association for
  Computing Machinery}, \bibinfo{address}{New York, NY, USA},
  \bibinfo{pages}{299--310}.
\newblock
\showISBNx{978-1-58113-199-4}
\href{https://doi.org/10.1145/349299.349339}{doi:\nolinkurl{10.1145/349299.349339}}


\bibitem[Burg et~al\mbox{.}(2013)]%
        {burg13}
\bibfield{author}{\bibinfo{person}{Brian Burg}, \bibinfo{person}{Richard
  Bailey}, \bibinfo{person}{Amy~J. Ko}, {and} \bibinfo{person}{Michael~D.
  Ernst}.} \bibinfo{year}{2013}\natexlab{}.
\newblock \showarticletitle{Interactive Record/Replay for Web Application
  Debugging}. In \bibinfo{booktitle}{\emph{Proceedings of the 26th Annual
  {{ACM}} Symposium on {{User}} Interface Software and Technology}}
  \emph{(\bibinfo{series}{{{UIST}} '13})}. \bibinfo{publisher}{Association for
  Computing Machinery}, \bibinfo{address}{New York, NY, USA},
  \bibinfo{pages}{473--484}.
\newblock
\showISBNx{978-1-4503-2268-3}
\href{https://doi.org/10.1145/2501988.2502050}{doi:\nolinkurl{10.1145/2501988.2502050}}


\bibitem[Cadar et~al\mbox{.}(2011)]%
        {cadar11}
\bibfield{author}{\bibinfo{person}{Cristian Cadar}, \bibinfo{person}{Patrice
  Godefroid}, \bibinfo{person}{Sarfraz Khurshid}, \bibinfo{person}{Corina~S.
  P~{\u a}s{\u a}reanu}, \bibinfo{person}{Koushik Sen},
  \bibinfo{person}{Nikolai Tillmann}, {and} \bibinfo{person}{Willem Visser}.}
  \bibinfo{year}{2011}\natexlab{}.
\newblock \showarticletitle{Symbolic Execution for Software Testing in
  Practice: Preliminary Assessment}. In \bibinfo{booktitle}{\emph{Proceedings
  of the 33rd {{International Conference}} on {{ Software Engineering}}}}
  \emph{(\bibinfo{series}{{{ICSE}} '11})}. \bibinfo{publisher}{Association for
  Computing Machinery}, \bibinfo{address}{New York, NY, USA},
  \bibinfo{pages}{1066--1071}.
\newblock
\showISBNx{978-1-4503-0445-0}
\href{https://doi.org/10.1145/1985793.1985995}{doi:\nolinkurl{10.1145/1985793.1985995}}


\bibitem[Chen et~al\mbox{.}(2001)]%
        {chen01}
\bibfield{author}{\bibinfo{person}{Shyh-Kwei Chen}, \bibinfo{person}{W.K.
  Fuchs}, {and} \bibinfo{person}{Jen-Yao Chung}.}
  \bibinfo{year}{2001}\natexlab{}.
\newblock \showarticletitle{Reversible Debugging Using Program
  Instrumentation}.
\newblock \bibinfo{journal}{\emph{IEEE Transactions on Software Engineering}}
  \bibinfo{volume}{27}, \bibinfo{number}{8} (\bibinfo{date}{Aug.}
  \bibinfo{year}{2001}), \bibinfo{pages}{715--727}.
\newblock
\showISSN{1939-3520}
\href{https://doi.org/10.1109/32.940726}{doi:\nolinkurl{10.1109/32.940726}}


\bibitem[Dalal et~al\mbox{.}(1999)]%
        {dalal99}
\bibfield{author}{\bibinfo{person}{S.~R. Dalal}, \bibinfo{person}{A. Jain},
  \bibinfo{person}{N. Karunanithi}, \bibinfo{person}{J.~M. Leaton},
  \bibinfo{person}{C.~M. Lott}, \bibinfo{person}{G.~C. Patton}, {and}
  \bibinfo{person}{B.~M. Horowitz}.} \bibinfo{year}{1999}\natexlab{}.
\newblock \showarticletitle{Model-Based Testing in Practice}. In
  \bibinfo{booktitle}{\emph{Proceedings of the 21st International Conference on
  {{Software} } Engineering}} \emph{(\bibinfo{series}{{{ICSE}} '99})}.
  \bibinfo{publisher}{Association for Computing Machinery},
  \bibinfo{address}{New York, NY, USA}, \bibinfo{pages}{285--294}.
\newblock
\showISBNx{978-1-58113-074-4}
\href{https://doi.org/10.1145/302405.302640}{doi:\nolinkurl{10.1145/302405.302640}}


\bibitem[Danos and Krivine(2004)]%
        {danos04}
\bibfield{author}{\bibinfo{person}{Vincent Danos} {and} \bibinfo{person}{Jean
  Krivine}.} \bibinfo{year}{2004}\natexlab{}.
\newblock \showarticletitle{Reversible {{Communicating Systems}}}. In
  \bibinfo{booktitle}{\emph{{{CONCUR}} 2004 - {{Concurrency Theory}}}},
  \bibfield{editor}{\bibinfo{person}{Philippa Gardner} {and}
  \bibinfo{person}{Nobuko Yoshida}} (Eds.). \bibinfo{publisher}{Springer},
  \bibinfo{address}{Berlin, Heidelberg}, \bibinfo{pages}{292--307}.
\newblock
\showISBNx{978-3-540-28644-8}
\href{https://doi.org/10.1007/978-3-540-28644-8_19}{doi:\nolinkurl{10.1007/978-3-540-28644-8_19}}


\bibitem[Deiner and Fraser(2024)]%
        {deiner24}
\bibfield{author}{\bibinfo{person}{Adina Deiner} {and} \bibinfo{person}{Gordon
  Fraser}.} \bibinfo{year}{2024}\natexlab{}.
\newblock \showarticletitle{{{NuzzleBug}}: {{Debugging Block-Based Programs}}
  in {{Scratch}}}. In \bibinfo{booktitle}{\emph{Proceedings of the 46th
  {{IEEE}}/{{ACM International Conference }} on {{Software Engineering}}}}.
  \bibinfo{publisher}{ACM}, \bibinfo{address}{Lisbon, Portugal},
  \bibinfo{pages}{1--13}.
\newblock
\showISBNx{9798400702174}
\href{https://doi.org/10.1145/3597503.3623331}{doi:\nolinkurl{10.1145/3597503.3623331}}


\bibitem[Engblom(2012)]%
        {engblom12}
\bibfield{author}{\bibinfo{person}{Jakob Engblom}.}
  \bibinfo{year}{2012}\natexlab{}.
\newblock \showarticletitle{A Review of Reverse Debugging}. In
  \bibinfo{booktitle}{\emph{Proceedings of the 2012 {{System}}, {{Software}},
  {{SoC}} and { {Silicon Debug Conference}}}}. \bibinfo{pages}{1--6}.
\newblock
\showISSN{2114-3684}


\bibitem[Feldman and Brown(1988)]%
        {feldman88}
\bibfield{author}{\bibinfo{person}{Stuart~I. Feldman} {and}
  \bibinfo{person}{Channing~B. Brown}.} \bibinfo{year}{1988}\natexlab{}.
\newblock \showarticletitle{{{IGOR}}: A System for Program Debugging via
  Reversible Execution}.
\newblock \bibinfo{journal}{\emph{ACM SIGPLAN Notices}} \bibinfo{volume}{24},
  \bibinfo{number}{1} (\bibinfo{date}{Nov.} \bibinfo{year}{1988}),
  \bibinfo{pages}{112--123}.
\newblock
\showISSN{0362-1340}
\href{https://doi.org/10.1145/69215.69226}{doi:\nolinkurl{10.1145/69215.69226}}


\bibitem[Felleisen et~al\mbox{.}(2009)]%
        {felleisen09}
\bibfield{author}{\bibinfo{person}{Matthias Felleisen},
  \bibinfo{person}{Robert~Bruce Findler}, {and} \bibinfo{person}{Matthew
  Flatt}.} \bibinfo{year}{2009}\natexlab{}.
\newblock \bibinfo{booktitle}{\emph{Semantics Engineering with {{PLT}} Redex}}.
\newblock \bibinfo{publisher}{Mit Press}.
\newblock


\bibitem[Ferrari and Tuosto(2001)]%
        {ferrari01}
\bibfield{author}{\bibinfo{person}{GianLuigi Ferrari} {and}
  \bibinfo{person}{Emilio Tuosto}.} \bibinfo{year}{2001}\natexlab{}.
\newblock \showarticletitle{A Debugging Calculus for Mobile Ambients}. In
  \bibinfo{booktitle}{\emph{Proceedings of the 2001 {{ACM}} Symposium on
  {{Applied}} Computing}}. \bibinfo{publisher}{ACM}, \bibinfo{address}{Las
  Vegas Nevada USA}.
\newblock
\showISBNx{978-1-58113-287-8}
\href{https://doi.org/10.1145/372202.380701}{doi:\nolinkurl{10.1145/372202.380701}}


\bibitem[Ferreira~Ruiz et~al\mbox{.}(2024)]%
        {ferreira24}
\bibfield{author}{\bibinfo{person}{Francisco Ferreira~Ruiz},
  \bibinfo{person}{Ben Collins}, {and} \bibinfo{person}{{\relax et}. {al}}.}
  \bibinfo{year}{2024}\natexlab{}.
\newblock \bibinfo{title}{Open {{Bot Brain}}}.
\newblock


\bibitem[Frattini et~al\mbox{.}(2016)]%
        {frattini16}
\bibfield{author}{\bibinfo{person}{Flavio Frattini}, \bibinfo{person}{Roberto
  Pietrantuono}, {and} \bibinfo{person}{Stefano Russo}.}
  \bibinfo{year}{2016}\natexlab{}.
\newblock \showarticletitle{Reproducibility of {{Software Bugs}}}.
\newblock In \bibinfo{booktitle}{\emph{Principles of {{Performance}} and
  {{Reliability Modeling}} and {{Evaluation}}: {{Essays}} in {{Honor}} of
  {{Kishor Trivedi}} on His 70th {{Birthday}}}},
  \bibfield{editor}{\bibinfo{person}{Lance Fiondella} {and}
  \bibinfo{person}{Antonio Puliafito}} (Eds.). \bibinfo{publisher}{Springer
  International Publishing}, \bibinfo{address}{Cham},
  \bibinfo{pages}{551--565}.
\newblock
\showISBNx{978-3-319-30599-8}
\href{https://doi.org/10.1007/978-3-319-30599-8_21}{doi:\nolinkurl{10.1007/978-3-319-30599-8_21}}


\bibitem[Gait(1986)]%
        {gait86}
\bibfield{author}{\bibinfo{person}{Jason Gait}.}
  \bibinfo{year}{1986}\natexlab{}.
\newblock \showarticletitle{A Probe Effect in Concurrent Programs}.
\newblock \bibinfo{journal}{\emph{Software: Practice and Experience}}
  \bibinfo{volume}{16}, \bibinfo{number}{3} (\bibinfo{year}{1986}),
  \bibinfo{pages}{225--233}.
\newblock
\showISSN{1097-024X}
\href{https://doi.org/10.1002/spe.4380160304}{doi:\nolinkurl{10.1002/spe.4380160304}}


\bibitem[Giachino et~al\mbox{.}(2014)]%
        {giachino14}
\bibfield{author}{\bibinfo{person}{Elena Giachino}, \bibinfo{person}{Ivan
  Lanese}, {and} \bibinfo{person}{Claudio~Antares Mezzina}.}
  \bibinfo{year}{2014}\natexlab{}.
\newblock \showarticletitle{Causal-{{Consistent Reversible Debugging}}}. In
  \bibinfo{booktitle}{\emph{Fundamental {{Approaches}} to {{Software
  Engineering}}}} \emph{(\bibinfo{series}{Lecture {{Notes}} in {{Computer
  Science}}})}, \bibfield{editor}{\bibinfo{person}{Stefania Gnesi} {and}
  \bibinfo{person}{Arend Rensink}} (Eds.). \bibinfo{publisher}{Springer},
  \bibinfo{address}{Berlin, Heidelberg}, \bibinfo{pages}{370--384}.
\newblock
\showISBNx{978-3-642-54804-8}
\href{https://doi.org/10.1007/978-3-642-54804-8_26}{doi:\nolinkurl{10.1007/978-3-642-54804-8_26}}


\bibitem[Gl{\"u}ck and Yokoyama(2023)]%
        {gluck23}
\bibfield{author}{\bibinfo{person}{Robert Gl{\"u}ck} {and}
  \bibinfo{person}{Tetsuo Yokoyama}.} \bibinfo{year}{2023}\natexlab{}.
\newblock \showarticletitle{Reversible Computing from a Programming Language
  Perspective}.
\newblock \bibinfo{journal}{\emph{Theoretical Computer Science}}
  \bibinfo{volume}{953} (\bibinfo{date}{April} \bibinfo{year}{2023}),
  \bibinfo{pages}{113429}.
\newblock
\showISSN{0304-3975}
\href{https://doi.org/10.1016/j.tcs.2022.06.010}{doi:\nolinkurl{10.1016/j.tcs.2022.06.010}}


\bibitem[Godefroid(1997)]%
        {godefroid97}
\bibfield{author}{\bibinfo{person}{Patrice Godefroid}.}
  \bibinfo{year}{1997}\natexlab{}.
\newblock \showarticletitle{Model Checking for Programming Languages Using
  {{VeriSoft}}}. In \bibinfo{booktitle}{\emph{Proceedings of the 24th {{ACM
  SIGPLAN-SIGACT}} Symposium on {{ Principles}} of Programming Languages}}
  \emph{(\bibinfo{series}{{{POPL}} '97})}. \bibinfo{publisher}{Association for
  Computing Machinery}, \bibinfo{address}{New York, NY, USA},
  \bibinfo{pages}{174--186}.
\newblock
\showISBNx{978-0-89791-853-4}
\href{https://doi.org/10.1145/263699.263717}{doi:\nolinkurl{10.1145/263699.263717}}


\bibitem[Godefroid et~al\mbox{.}(2005)]%
        {godefroid05}
\bibfield{author}{\bibinfo{person}{Patrice Godefroid}, \bibinfo{person}{Nils
  Klarlund}, {and} \bibinfo{person}{Koushik Sen}.}
  \bibinfo{year}{2005}\natexlab{}.
\newblock \showarticletitle{{{DART}}: Directed Automated Random Testing}.
\newblock \bibinfo{journal}{\emph{SIGPLAN Not.}} \bibinfo{volume}{40},
  \bibinfo{number}{6} (\bibinfo{date}{June} \bibinfo{year}{2005}),
  \bibinfo{pages}{213--223}.
\newblock
\showISSN{0362-1340}
\href{https://doi.org/10.1145/1064978.1065036}{doi:\nolinkurl{10.1145/1064978.1065036}}


\bibitem[Gurdeep~Singh(2022)]%
        {gurdeep22}
\bibfield{author}{\bibinfo{person}{Robbert Gurdeep~Singh}.}
  \bibinfo{year}{2022}\natexlab{}.
\newblock \emph{\bibinfo{title}{Taming Nondeterminism : Programming Language
  Abstractions and Tools for Dealing with Nondeterministic Programs}}.
\newblock \bibinfo{thesistype}{Ph.\,D. Dissertation}. \bibinfo{school}{Ghent
  University}.
\newblock


\bibitem[Gurdeep~Singh et~al\mbox{.}(2019)]%
        {gurdeep19a}
\bibfield{author}{\bibinfo{person}{Robbert Gurdeep~Singh},
  \bibinfo{person}{Carmen Torres~Lopez}, \bibinfo{person}{Stefan Marr},
  \bibinfo{person}{Elisa Gonzalez~Boix}, {and} \bibinfo{person}{Christophe
  Scholliers}.} \bibinfo{year}{2019}\natexlab{}.
\newblock \showarticletitle{Multiverse {{Debugging}}: {{Non-Deterministic
  Debugging}} for {{ Non-Deterministic Programs}} ({{Artifact}})}.
\newblock \bibinfo{journal}{\emph{Dagstuhl Artifacts Series}}
  \bibinfo{volume}{5}, \bibinfo{number}{2} (\bibinfo{year}{2019}),
  \bibinfo{pages}{4:1--4:3}.
\newblock
\showISSN{2509-8195}
\href{https://doi.org/10.4230/DARTS.5.2.4}{doi:\nolinkurl{10.4230/DARTS.5.2.4}}


\bibitem[Haas et~al\mbox{.}(2017)]%
        {haas17}
\bibfield{author}{\bibinfo{person}{Andreas Haas}, \bibinfo{person}{Andreas
  Rossberg}, \bibinfo{person}{Derek~L. Schuff}, \bibinfo{person}{Ben~L.
  Titzer}, \bibinfo{person}{Michael Holman}, \bibinfo{person}{Dan Gohman},
  \bibinfo{person}{Luke Wagner}, \bibinfo{person}{Alon Zakai}, {and}
  \bibinfo{person}{{\relax JF} Bastien}.} \bibinfo{year}{2017}\natexlab{}.
\newblock \showarticletitle{Bringing the Web up to Speed with {{WebAssembly}}}.
  In \bibinfo{booktitle}{\emph{Proceedings of the 38th {{ACM SIGPLAN
  Conference}} on {{ Programming Language Design}} and {{Implementation}}}}
  \emph{(\bibinfo{series}{{{PLDI}} 2017})}. \bibinfo{publisher}{Association for
  Computing Machinery}, \bibinfo{address}{New York, NY, USA},
  \bibinfo{pages}{185--200}.
\newblock
\showISBNx{978-1-4503-4988-8}
\href{https://doi.org/10.1145/3062341.3062363}{doi:\nolinkurl{10.1145/3062341.3062363}}


\bibitem[Haulund et~al\mbox{.}(2017)]%
        {haulund17}
\bibfield{author}{\bibinfo{person}{Tue Haulund},
  \bibinfo{person}{Torben~{\AE}gidius Mogensen}, {and} \bibinfo{person}{Robert
  Gl{\"u}ck}.} \bibinfo{year}{2017}\natexlab{}.
\newblock \showarticletitle{Implementing {{Reversible Object-Oriented Language
  Features}} on {{ Reversible Machines}}}. In
  \bibinfo{booktitle}{\emph{Reversible {{Computation}}}},
  \bibfield{editor}{\bibinfo{person}{Iain Phillips} {and}
  \bibinfo{person}{Hafizur Rahaman}} (Eds.). \bibinfo{publisher}{Springer
  International Publishing}, \bibinfo{address}{Cham}, \bibinfo{pages}{66--73}.
\newblock
\showISBNx{978-3-319-59936-6}
\href{https://doi.org/10.1007/978-3-319-59936-6_5}{doi:\nolinkurl{10.1007/978-3-319-59936-6_5}}


\bibitem[{Hay-Schmidt} et~al\mbox{.}(2021)]%
        {hay-schmidt21}
\bibfield{author}{\bibinfo{person}{Lasse {Hay-Schmidt}},
  \bibinfo{person}{Robert Gl{\"u}ck}, \bibinfo{person}{Martin~Holm Cservenka},
  {and} \bibinfo{person}{Tue Haulund}.} \bibinfo{year}{2021}\natexlab{}.
\newblock \showarticletitle{Towards a {{Unified Language Architecture}} for
  {{Reversible Object-Oriented Programming}}}. In
  \bibinfo{booktitle}{\emph{Reversible {{Computation}}}},
  \bibfield{editor}{\bibinfo{person}{Shigeru Yamashita} {and}
  \bibinfo{person}{Tetsuo Yokoyama}} (Eds.). \bibinfo{publisher}{Springer
  International Publishing}, \bibinfo{address}{Cham}, \bibinfo{pages}{96--106}.
\newblock
\showISBNx{978-3-030-79837-6}
\href{https://doi.org/10.1007/978-3-030-79837-6_6}{doi:\nolinkurl{10.1007/978-3-030-79837-6_6}}


\bibitem[Heunen and Karvonen(2015)]%
        {heunen15}
\bibfield{author}{\bibinfo{person}{Chris Heunen} {and} \bibinfo{person}{Martti
  Karvonen}.} \bibinfo{year}{2015}\natexlab{}.
\newblock \showarticletitle{Reversible {{Monadic Computing}}}.
\newblock \bibinfo{journal}{\emph{Electronic Notes in Theoretical Computer
  Science}}  \bibinfo{volume}{319} (\bibinfo{date}{Dec.} \bibinfo{year}{2015}),
  \bibinfo{pages}{217--237}.
\newblock
\showISSN{1571-0661}
\href{https://doi.org/10.1016/j.entcs.2015.12.014}{doi:\nolinkurl{10.1016/j.entcs.2015.12.014}}


\bibitem[Hoey et~al\mbox{.}(2018)]%
        {hoey18}
\bibfield{author}{\bibinfo{person}{James Hoey}, \bibinfo{person}{Irek
  Ulidowski}, {and} \bibinfo{person}{Shoji Yuen}.}
  \bibinfo{year}{2018}\natexlab{}.
\newblock \bibinfo{title}{Reversing {{Parallel Programs}} with {{Blocks}} and
  {{Procedures}}}.
\newblock
\href{https://doi.org/10.48550/arXiv.1808.08651}{doi:\nolinkurl{10.48550/arXiv.1808.08651}}


\bibitem[H{\"o}gl and Rath(2006)]%
        {hogl06}
\bibfield{author}{\bibinfo{person}{Hubert H{\"o}gl} {and}
  \bibinfo{person}{Dominic Rath}.} \bibinfo{year}{2006}\natexlab{}.
\newblock \showarticletitle{Open On-Chip Debugger--Openocd--}.
\newblock \bibinfo{journal}{\emph{Fakultat fur Informatik, Tech. Rep}}
  (\bibinfo{year}{2006}).
\newblock


\bibitem[Holter et~al\mbox{.}(2024)]%
        {holter24}
\bibfield{author}{\bibinfo{person}{Karoliine Holter},
  \bibinfo{person}{Juhan~Oskar Hennoste}, \bibinfo{person}{Patrick Lam},
  \bibinfo{person}{Simmo Saan}, {and} \bibinfo{person}{Vesal Vojdani}.}
  \bibinfo{year}{2024}\natexlab{}.
\newblock \showarticletitle{Abstract {{Debuggers}}: {{Exploring Program
  Behaviors}} Using {{ Static Analysis Results}}}. In
  \bibinfo{booktitle}{\emph{Proceedings of the 2024 {{ACM SIGPLAN International
  Symposium}} on {{New Ideas}}, {{New Paradigms}}, and {{Reflections}} on {{
  Programming}} and {{Software}}}} \emph{(\bibinfo{series}{Onward! '24})}.
  \bibinfo{publisher}{Association for Computing Machinery},
  \bibinfo{address}{New York, NY, USA}, \bibinfo{pages}{130--146}.
\newblock
\showISBNx{9798400712159}
\href{https://doi.org/10.1145/3689492.3690053}{doi:\nolinkurl{10.1145/3689492.3690053}}


\bibitem[Iqbal et~al\mbox{.}(2015)]%
        {iqbal15}
\bibfield{author}{\bibinfo{person}{Muhammad~Zohaib Iqbal},
  \bibinfo{person}{Andrea Arcuri}, {and} \bibinfo{person}{Lionel Briand}.}
  \bibinfo{year}{2015}\natexlab{}.
\newblock \showarticletitle{Environment Modeling and Simulation for Automated
  Testing of Soft Real-Time Embedded Software}.
\newblock \bibinfo{journal}{\emph{Software \& Systems Modeling}}
  \bibinfo{volume}{14}, \bibinfo{number}{1} (\bibinfo{date}{Feb.}
  \bibinfo{year}{2015}), \bibinfo{pages}{483--524}.
\newblock
\showISSN{1619-1374}
\href{https://doi.org/10.1007/s10270-013-0328-6}{doi:\nolinkurl{10.1007/s10270-013-0328-6}}


\bibitem[Jhala and Majumdar(2009)]%
        {jhala09}
\bibfield{author}{\bibinfo{person}{Ranjit Jhala} {and} \bibinfo{person}{Rupak
  Majumdar}.} \bibinfo{year}{2009}\natexlab{}.
\newblock \showarticletitle{Software Model Checking}.
\newblock \bibinfo{journal}{\emph{ACM Comput. Surv.}} \bibinfo{volume}{41},
  \bibinfo{number}{4} (\bibinfo{date}{Oct.} \bibinfo{year}{2009}),
  \bibinfo{pages}{21:1--21:54}.
\newblock
\showISSN{0360-0300}
\href{https://doi.org/10.1145/1592434.1592438}{doi:\nolinkurl{10.1145/1592434.1592438}}


\bibitem[Kahlon et~al\mbox{.}(2009)]%
        {kahlon09}
\bibfield{author}{\bibinfo{person}{Vineet Kahlon}, \bibinfo{person}{Chao Wang},
  {and} \bibinfo{person}{Aarti Gupta}.} \bibinfo{year}{2009}\natexlab{}.
\newblock \showarticletitle{Monotonic {{Partial Order Reduction}}: {{An Optimal
  Symbolic Partial Order Reduction Technique}}}. In
  \bibinfo{booktitle}{\emph{Computer {{Aided Verification}}}},
  \bibfield{editor}{\bibinfo{person}{Ahmed Bouajjani} {and}
  \bibinfo{person}{Oded Maler}} (Eds.). \bibinfo{publisher}{Springer},
  \bibinfo{address}{Berlin, Heidelberg}, \bibinfo{pages}{398--413}.
\newblock
\showISBNx{978-3-642-02658-4}
\href{https://doi.org/10.1007/978-3-642-02658-4_31}{doi:\nolinkurl{10.1007/978-3-642-02658-4_31}}


\bibitem[Kelly(1981)]%
        {kelly81}
\bibfield{author}{\bibinfo{person}{F.P. Kelly}.}
  \bibinfo{year}{1981}\natexlab{}.
\newblock \showarticletitle{Reversibility and Stochastic Networks /
  {{F}}.{{P}}. {{Kelly}}}.
\newblock \bibinfo{journal}{\emph{SERBIULA (sistema Librum 2.0)}}
  \bibinfo{volume}{76} (\bibinfo{date}{June} \bibinfo{year}{1981}).
\newblock
\href{https://doi.org/10.2307/2287860}{doi:\nolinkurl{10.2307/2287860}}


\bibitem[Kery et~al\mbox{.}(2017)]%
        {kery17}
\bibfield{author}{\bibinfo{person}{Mary~Beth Kery}, \bibinfo{person}{Amber
  Horvath}, {and} \bibinfo{person}{Brad Myers}.}
  \bibinfo{year}{2017}\natexlab{}.
\newblock \showarticletitle{Variolite: {{Supporting Exploratory Programming}}
  by {{Data Scientists}}}. In \bibinfo{booktitle}{\emph{Proceedings of the 2017
  {{CHI Conference}} on {{Human Factors}} in {{Computing Systems}}}}
  \emph{(\bibinfo{series}{{{CHI}} '17})}. \bibinfo{publisher}{Association for
  Computing Machinery}, \bibinfo{address}{New York, NY, USA},
  \bibinfo{pages}{1265--1276}.
\newblock
\showISBNx{978-1-4503-4655-9}
\href{https://doi.org/10.1145/3025453.3025626}{doi:\nolinkurl{10.1145/3025453.3025626}}


\bibitem[Kery and Myers(2017)]%
        {kery17a}
\bibfield{author}{\bibinfo{person}{Mary~Beth Kery} {and}
  \bibinfo{person}{Brad~A. Myers}.} \bibinfo{year}{2017}\natexlab{}.
\newblock \showarticletitle{Exploring Exploratory Programming}. In
  \bibinfo{booktitle}{\emph{2017 {{IEEE Symposium}} on {{Visual Languages}} and
  {{ Human-Centric Computing}} ({{VL}}/{{HCC}})}}. \bibinfo{pages}{25--29}.
\newblock
\showISSN{1943-6106}
\href{https://doi.org/10.1109/VLHCC.2017.8103446}{doi:\nolinkurl{10.1109/VLHCC.2017.8103446}}


\bibitem[Kery et~al\mbox{.}(2018)]%
        {kery18}
\bibfield{author}{\bibinfo{person}{Mary~Beth Kery}, \bibinfo{person}{Marissa
  Radensky}, \bibinfo{person}{Mahima Arya}, \bibinfo{person}{Bonnie~E. John},
  {and} \bibinfo{person}{Brad~A. Myers}.} \bibinfo{year}{2018}\natexlab{}.
\newblock \showarticletitle{The {{Story}} in the {{Notebook}}: {{Exploratory
  Data Science}} Using a {{Literate Programming Tool}}}. In
  \bibinfo{booktitle}{\emph{Proceedings of the 2018 {{CHI Conference}} on
  {{Human Factors}} in {{Computing Systems}}}} \emph{(\bibinfo{series}{{{CHI}}
  '18})}. \bibinfo{publisher}{Association for Computing Machinery},
  \bibinfo{address}{New York, NY, USA}, \bibinfo{pages}{1--11}.
\newblock
\showISBNx{978-1-4503-5620-6}
\href{https://doi.org/10.1145/3173574.3173748}{doi:\nolinkurl{10.1145/3173574.3173748}}


\bibitem[King(1976)]%
        {king76}
\bibfield{author}{\bibinfo{person}{James~C. King}.}
  \bibinfo{year}{1976}\natexlab{}.
\newblock \showarticletitle{Symbolic Execution and Program Testing}.
\newblock \bibinfo{journal}{\emph{Commun. ACM}} \bibinfo{volume}{19},
  \bibinfo{number}{7} (\bibinfo{date}{July} \bibinfo{year}{1976}),
  \bibinfo{pages}{385--394}.
\newblock
\showISSN{0001-0782}
\href{https://doi.org/10.1145/360248.360252}{doi:\nolinkurl{10.1145/360248.360252}}


\bibitem[Kurshan et~al\mbox{.}(1998)]%
        {kurshan98}
\bibfield{author}{\bibinfo{person}{R. Kurshan}, \bibinfo{person}{V. Levin},
  \bibinfo{person}{M. Minea}, \bibinfo{person}{D. Peled}, {and}
  \bibinfo{person}{H. Yenigün}.} \bibinfo{year}{1998}\natexlab{}.
\newblock \showarticletitle{Static Partial Order Reduction}. In
  \bibinfo{booktitle}{\emph{Tools and {{Algorithms}} for the {{Construction}}
  and {{ Analysis}} of {{Systems}}}},
  \bibfield{editor}{\bibinfo{person}{Bernhard Steffen}} (Ed.).
  \bibinfo{publisher}{Springer}, \bibinfo{address}{Berlin, Heidelberg},
  \bibinfo{pages}{345--357}.
\newblock
\showISBNx{978-3-540-69753-4}
\href{https://doi.org/10.1007/BFb0054182}{doi:\nolinkurl{10.1007/BFb0054182}}


\bibitem[Lami et~al\mbox{.}(2024)]%
        {lami24}
\bibfield{author}{\bibinfo{person}{Pietro Lami}, \bibinfo{person}{Ivan Lanese},
  {and} \bibinfo{person}{Jean-Bernard Stefani}.}
  \bibinfo{year}{2024}\natexlab{}.
\newblock \showarticletitle{A {{Small-Step Semantics}} for~{{Janus}}}. In
  \bibinfo{booktitle}{\emph{Reversible {{Computation}}}},
  \bibfield{editor}{\bibinfo{person}{Torben~{\AE}gidius Mogensen} {and}
  \bibinfo{person}{{\L}ukasz Mikulski}} (Eds.). \bibinfo{publisher}{Springer
  Nature Switzerland}, \bibinfo{address}{Cham}, \bibinfo{pages}{105--123}.
\newblock
\showISBNx{978-3-031-62076-8}
\href{https://doi.org/10.1007/978-3-031-62076-8_8}{doi:\nolinkurl{10.1007/978-3-031-62076-8_8}}


\bibitem[Lanese(2018)]%
        {lanese18a}
\bibfield{author}{\bibinfo{person}{Ivan Lanese}.}
  \bibinfo{year}{2018}\natexlab{}.
\newblock \showarticletitle{From {{Reversible Semantics}} to {{Reversible
  Debugging}}}. In \bibinfo{booktitle}{\emph{Reversible {{Computation}}}}
  \emph{(\bibinfo{series}{Lecture {{Notes}} in {{Computer Science}}})},
  \bibfield{editor}{\bibinfo{person}{Jarkko Kari} {and} \bibinfo{person}{Irek
  Ulidowski}} (Eds.). \bibinfo{publisher}{Springer International Publishing},
  \bibinfo{address}{Cham}, \bibinfo{pages}{34--46}.
\newblock
\showISBNx{978-3-319-99498-7}
\href{https://doi.org/10.1007/978-3-319-99498-7_2}{doi:\nolinkurl{10.1007/978-3-319-99498-7_2}}


\bibitem[Lanese et~al\mbox{.}(2014)]%
        {lanese14}
\bibfield{author}{\bibinfo{person}{Ivan Lanese}, \bibinfo{person}{Claudio
  Mezzina}, {and} \bibinfo{person}{Francesco Tiezzi}.}
  \bibinfo{year}{2014}\natexlab{}.
\newblock \showarticletitle{Causal-{{Consistent Reversibility}}}.
\newblock  (\bibinfo{date}{Nov.} \bibinfo{year}{2014}).
\newblock


\bibitem[Lanese et~al\mbox{.}(2018)]%
        {lanese18}
\bibfield{author}{\bibinfo{person}{Ivan Lanese}, \bibinfo{person}{Naoki
  Nishida}, \bibinfo{person}{Adri{\'a}n Palacios}, {and}
  \bibinfo{person}{Germ{\'a}n Vidal}.} \bibinfo{year}{2018}\natexlab{}.
\newblock \showarticletitle{{{CauDEr}}: {{A Causal-Consistent Reversible
  Debugger}} for {{ Erlang}}}. In \bibinfo{booktitle}{\emph{Functional and
  {{Logic Programming}}}} \emph{(\bibinfo{series}{Lecture {{Notes}} in
  {{Computer Science}}})}, \bibfield{editor}{\bibinfo{person}{John~P.
  Gallagher} {and} \bibinfo{person}{Martin Sulzmann}} (Eds.).
  \bibinfo{publisher}{Springer International Publishing},
  \bibinfo{address}{Cham}, \bibinfo{pages}{247--263}.
\newblock
\showISBNx{978-3-319-90686-7}
\href{https://doi.org/10.1007/978-3-319-90686-7_16}{doi:\nolinkurl{10.1007/978-3-319-90686-7_16}}


\bibitem[Laursen et~al\mbox{.}(2018)]%
        {laursen18}
\bibfield{author}{\bibinfo{person}{Johan~Sund Laursen},
  \bibinfo{person}{Lars-Peter Ellekilde}, {and} \bibinfo{person}{Ulrik~Pagh
  Schultz}.} \bibinfo{year}{2018}\natexlab{}.
\newblock \showarticletitle{Modelling Reversible Execution of Robotic
  Assembly}.
\newblock \bibinfo{journal}{\emph{Robotica}} \bibinfo{volume}{36},
  \bibinfo{number}{5} (\bibinfo{date}{May} \bibinfo{year}{2018}),
  \bibinfo{pages}{625--654}.
\newblock
\showISSN{0263-5747, 1469-8668}
\href{https://doi.org/10.1017/S0263574717000613}{doi:\nolinkurl{10.1017/S0263574717000613}}


\bibitem[Lauwaerts et~al\mbox{.}(2022)]%
        {lauwaerts22}
\bibfield{author}{\bibinfo{person}{Tom Lauwaerts},
  \bibinfo{person}{Carlos~Rojas Castillo}, \bibinfo{person}{Robbert~Gurdeep
  Singh}, \bibinfo{person}{Matteo Marra}, \bibinfo{person}{Christophe
  Scholliers}, {and} \bibinfo{person}{Elisa Gonzalez~Boix}.}
  \bibinfo{year}{2022}\natexlab{}.
\newblock \showarticletitle{Event-{{Based Out-of-Place Debugging}}}. In
  \bibinfo{booktitle}{\emph{Proceedings of the 19th {{International
  Conference}} on {{ Managed Programming Languages}} and {{Runtimes}}}}
  \emph{(\bibinfo{series}{{{MPLR}} '22})}. \bibinfo{publisher}{Association for
  Computing Machinery}, \bibinfo{address}{New York, NY, USA},
  \bibinfo{pages}{85--97}.
\newblock
\showISBNx{978-1-4503-9696-7}
\href{https://doi.org/10.1145/3546918.3546920}{doi:\nolinkurl{10.1145/3546918.3546920}}


\bibitem[Lauwaerts et~al\mbox{.}(2024)]%
        {lauwaerts24a}
\bibfield{author}{\bibinfo{person}{Tom Lauwaerts},
  \bibinfo{person}{Robbert~Gurdeep Singh}, {and} \bibinfo{person}{Christophe
  Scholliers}.} \bibinfo{year}{2024}\natexlab{}.
\newblock \showarticletitle{{{WARDuino}}: {{An}} Embedded {{WebAssembly}}
  Virtual Machine}.
\newblock \bibinfo{journal}{\emph{Journal of Computer Languages}}
  (\bibinfo{date}{Feb.} \bibinfo{year}{2024}), \bibinfo{pages}{101268}.
\newblock
\showISSN{2590-1184}
\href{https://doi.org/10.1016/j.cola.2024.101268}{doi:\nolinkurl{10.1016/j.cola.2024.101268}}


\bibitem[Leeman(1986)]%
        {leeman86}
\bibfield{author}{\bibinfo{person}{George~B. Leeman}.}
  \bibinfo{year}{1986}\natexlab{}.
\newblock \showarticletitle{A Formal Approach to Undo Operations in Programming
  Languages}.
\newblock \bibinfo{journal}{\emph{ACM Transactions on Programming Languages and
  Systems}} \bibinfo{volume}{8}, \bibinfo{number}{1} (\bibinfo{date}{Jan.}
  \bibinfo{year}{1986}), \bibinfo{pages}{50--87}.
\newblock
\showISSN{0164-0925}
\href{https://doi.org/10.1145/5001.5005}{doi:\nolinkurl{10.1145/5001.5005}}


\bibitem[Lewis(2003)]%
        {lewis03}
\bibfield{author}{\bibinfo{person}{Bil Lewis}.}
  \bibinfo{year}{2003}\natexlab{}.
\newblock \bibinfo{title}{Debugging {{Backwards}} in {{Time}}}.
\newblock
\href{https://doi.org/10.48550/arXiv.cs/0310016}{doi:\nolinkurl{10.48550/arXiv.cs/0310016}}


\bibitem[Li et~al\mbox{.}(2009)]%
        {li09}
\bibfield{author}{\bibinfo{person}{Hongwei Li}, \bibinfo{person}{Yaping Xu},
  \bibinfo{person}{Fangsheng Wu}, {and} \bibinfo{person}{Changhong Yin}.}
  \bibinfo{year}{2009}\natexlab{}.
\newblock \showarticletitle{Research of ``{{Stub}}'' Remote Debugging
  Technique}. In \bibinfo{booktitle}{\emph{2009 4th {{International
  Conference}} on {{Computer Science}} \& {{Education}}}}.
  \bibinfo{pages}{990--994}.
\newblock
\href{https://doi.org/10.1109/ICCSE.2009.5228140}{doi:\nolinkurl{10.1109/ICCSE.2009.5228140}}


\bibitem[Li and Li(2012)]%
        {li12}
\bibfield{author}{\bibinfo{person}{Wei Li} {and} \bibinfo{person}{Ning Li}.}
  \bibinfo{year}{2012}\natexlab{}.
\newblock \showarticletitle{A Formal Semantics for Program Debugging}.
\newblock \bibinfo{journal}{\emph{Science China Information Sciences}}
  \bibinfo{volume}{55}, \bibinfo{number}{1} (\bibinfo{date}{Jan.}
  \bibinfo{year}{2012}), \bibinfo{pages}{133--148}.
\newblock
\showISSN{1869-1919}
\href{https://doi.org/10.1007/s11432-011-4530-2}{doi:\nolinkurl{10.1007/s11432-011-4530-2}}


\bibitem[Lieberman(1997)]%
        {lieberman97}
\bibfield{author}{\bibinfo{person}{Henry Lieberman}.}
  \bibinfo{year}{1997}\natexlab{}.
\newblock \showarticletitle{{{ZStep}} 95: {{A}} Reversible, Animated Source
  Code Stepper}.
\newblock \bibinfo{journal}{\emph{Software Visualization: Programming as a
  Multimedia Experience}} (\bibinfo{year}{1997}).
\newblock


\bibitem[Lutz and Derby(1986)]%
        {lutz86}
\bibfield{author}{\bibinfo{person}{Christopher Lutz} {and}
  \bibinfo{person}{Howard Derby}.} \bibinfo{year}{1986}\natexlab{}.
\newblock \showarticletitle{Janus: A Time-Reversible Language}.
\newblock \bibinfo{journal}{\emph{Letter to R. Landauer}}  \bibinfo{volume}{2}
  (\bibinfo{year}{1986}).
\newblock


\bibitem[Maloney et~al\mbox{.}(2010)]%
        {maloney10}
\bibfield{author}{\bibinfo{person}{John Maloney}, \bibinfo{person}{Mitchel
  Resnick}, \bibinfo{person}{Natalie Rusk}, \bibinfo{person}{Brian Silverman},
  {and} \bibinfo{person}{Evelyn Eastmond}.} \bibinfo{year}{2010}\natexlab{}.
\newblock \showarticletitle{The {{Scratch Programming Language}} and
  {{Environment}}}.
\newblock \bibinfo{journal}{\emph{ACM Transactions on Computing Education}}
  \bibinfo{volume}{10}, \bibinfo{number}{4} (\bibinfo{date}{Nov.}
  \bibinfo{year}{2010}), \bibinfo{pages}{1--15}.
\newblock
\showISSN{1946-6226}
\href{https://doi.org/10.1145/1868358.1868363}{doi:\nolinkurl{10.1145/1868358.1868363}}


\bibitem[Marques et~al\mbox{.}(2022)]%
        {marques22}
\bibfield{author}{\bibinfo{person}{Filipe Marques}, \bibinfo{person}{Jos{\'e}
  Fragoso~Santos}, \bibinfo{person}{Nuno Santos}, {and} \bibinfo{person}{Pedro
  Ad{\~a}o}.} \bibinfo{year}{2022}\natexlab{}.
\newblock \showarticletitle{Concolic {{Execution}} for {{WebAssembly}}}. In
  \bibinfo{booktitle}{\emph{36th {{European Conference}} on {{Object-Oriented
  Programming}} ({{ECOOP}} 2022)}} \emph{(\bibinfo{series}{Leibniz
  {{International Proceedings}} in {{Informatics}} ({{LIPIcs }})},
  Vol.~\bibinfo{volume}{222})}, \bibfield{editor}{\bibinfo{person}{Karim Ali}
  {and} \bibinfo{person}{Jan Vitek}} (Eds.). \bibinfo{publisher}{Schloss
  Dagstuhl -- Leibniz-Zentrum f{\"u}r Informatik}, \bibinfo{address}{Dagstuhl,
  Germany}, \bibinfo{pages}{11:1--11:29}.
\newblock
\showISBNx{978-3-95977-225-9}
\showISSN{1868-8969}
\href{https://doi.org/10.4230/LIPIcs.ECOOP.2022.11}{doi:\nolinkurl{10.4230/LIPIcs.ECOOP.2022.11}}


\bibitem[Marra et~al\mbox{.}(2021)]%
        {marra21:practical}
\bibfield{author}{\bibinfo{person}{Matteo Marra}, \bibinfo{person}{Guillermo
  Polito}, {and} \bibinfo{person}{Elisa~Gonzalez Boix}.}
  \bibinfo{year}{2021}\natexlab{}.
\newblock \showarticletitle{Practical {{Online Debugging}} of {{Spark-like
  Applications}}}. In \bibinfo{booktitle}{\emph{2021 {{IEEE}} 21st
  {{International Conference}} on {{Software Quality}}, {{Reliability}} and
  {{Security}} ({{QRS}})}}. \bibinfo{pages}{620--631}.
\newblock
\showISSN{2693-9177}
\href{https://doi.org/10.1109/QRS54544.2021.00072}{doi:\nolinkurl{10.1109/QRS54544.2021.00072}}


\bibitem[Marra et~al\mbox{.}(2018)]%
        {marra18}
\bibfield{author}{\bibinfo{person}{Matteo Marra}, \bibinfo{person}{Guillermo
  Polito}, {and} \bibinfo{person}{Elisa Gonzalez~Boix}.}
  \bibinfo{year}{2018}\natexlab{}.
\newblock \showarticletitle{Out-{{Of-Place}} Debugging: A Debugging
  Architecture to Reduce Debugging Interference}.
\newblock \bibinfo{journal}{\emph{The Art, Science, and Engineering of
  Programming}} \bibinfo{volume}{3}, \bibinfo{number}{2} (\bibinfo{date}{Nov.}
  \bibinfo{year}{2018}), \bibinfo{pages}{3:1--3:29}.
\newblock
\showISSN{2473-7321}
\href{https://doi.org/10.22152/programming-journal.org/2019/3/3}{doi:\nolinkurl{10.22152/programming-journal.org/2019/3/3}}


\bibitem[Matsuda and Wang(2020)]%
        {matsuda20}
\bibfield{author}{\bibinfo{person}{Kazutaka Matsuda} {and}
  \bibinfo{person}{Meng Wang}.} \bibinfo{year}{2020}\natexlab{}.
\newblock \showarticletitle{Sparcl: A Language for Partially-Invertible
  Computation}.
\newblock \bibinfo{journal}{\emph{Proceedings of the ACM on Programming
  Languages}} \bibinfo{volume}{4}, \bibinfo{number}{ICFP} (\bibinfo{date}{Aug.}
  \bibinfo{year}{2020}), \bibinfo{pages}{118:1--118:31}.
\newblock
\href{https://doi.org/10.1145/3409000}{doi:\nolinkurl{10.1145/3409000}}


\bibitem[McDowell and Helmbold(1989)]%
        {mcdowell89}
\bibfield{author}{\bibinfo{person}{Charles~E. McDowell} {and}
  \bibinfo{person}{David~P. Helmbold}.} \bibinfo{year}{1989}\natexlab{}.
\newblock \showarticletitle{Debugging Concurrent Programs}.
\newblock \bibinfo{journal}{\emph{ACM Comput. Surv.}} \bibinfo{volume}{21},
  \bibinfo{number}{4} (\bibinfo{date}{Dec.} \bibinfo{year}{1989}),
  \bibinfo{pages}{593--622}.
\newblock
\showISSN{0360-0300}
\href{https://doi.org/10.1145/76894.76897}{doi:\nolinkurl{10.1145/76894.76897}}


\bibitem[Mezzina et~al\mbox{.}(2020)]%
        {mezzina20}
\bibfield{author}{\bibinfo{person}{Claudio~Antares Mezzina},
  \bibinfo{person}{Rudolf Schlatte}, \bibinfo{person}{Robert Gl{\"u}ck},
  \bibinfo{person}{Tue Haulund}, \bibinfo{person}{James Hoey},
  \bibinfo{person}{Martin Holm~Cservenka}, \bibinfo{person}{Ivan Lanese},
  \bibinfo{person}{Torben~{\AE}. Mogensen}, \bibinfo{person}{Harun Siljak},
  \bibinfo{person}{Ulrik~P. Schultz}, {and} \bibinfo{person}{Irek Ulidowski}.}
  \bibinfo{year}{2020}\natexlab{}.
\newblock \showarticletitle{Software and {{Reversible Systems}}: {{A~Survey}}
  of {{Recent Activities}}}.
\newblock In \bibinfo{booktitle}{\emph{Reversible {{Computation}}: {{Extending
  Horizons}} of {{ Computing}}: {{Selected Results}} of the {{COST Action
  IC1405}}}}, \bibfield{editor}{\bibinfo{person}{Irek Ulidowski},
  \bibinfo{person}{Ivan Lanese}, \bibinfo{person}{Ulrik~Pagh Schultz}, {and}
  \bibinfo{person}{Carla Ferreira}} (Eds.). \bibinfo{publisher}{Springer
  International Publishing}, \bibinfo{address}{Cham}, \bibinfo{pages}{41--59}.
\newblock
\showISBNx{978-3-030-47361-7}
\href{https://doi.org/10.1007/978-3-030-47361-7_2}{doi:\nolinkurl{10.1007/978-3-030-47361-7_2}}


\bibitem[O'Callahan et~al\mbox{.}(2017)]%
        {ocallahan17}
\bibfield{author}{\bibinfo{person}{Robert O'Callahan}, \bibinfo{person}{Chris
  Jones}, \bibinfo{person}{Nathan Froyd}, \bibinfo{person}{Kyle Huey},
  \bibinfo{person}{Albert Noll}, {and} \bibinfo{person}{Nimrod Partush}.}
  \bibinfo{year}{2017}\natexlab{}.
\newblock \showarticletitle{Engineering {{Record}} and {{Replay}} for
  {{Deployability}}}. In \bibinfo{booktitle}{\emph{2017 {{USENIX Annual
  Technical Conference}} ({{USENIX ATC}} 17)}}. \bibinfo{pages}{377--389}.
\newblock
\showISBNx{978-1-931971-38-6}


\bibitem[Pasquier et~al\mbox{.}(2023a)]%
        {pasquier23a}
\bibfield{author}{\bibinfo{person}{Matthias Pasquier}, \bibinfo{person}{Ciprian
  Teodorov}, \bibinfo{person}{Fr{\'e}d{\'e }ric Jouault},
  \bibinfo{person}{Matthias Brun}, {and} \bibinfo{person}{Lo{\"i}c Lagadec}.}
  \bibinfo{year}{2023}\natexlab{a}.
\newblock \showarticletitle{Debugging {{Paxos}} in the {{UML Multiverse}}}. In
  \bibinfo{booktitle}{\emph{2023 {{ACM}}/{{IEEE International Conference}} on
  {{Model Driven Engineering Languages}} and {{Systems Companion}} ({{
  MODELS-C}})}}. \bibinfo{pages}{811--820}.
\newblock
\href{https://doi.org/10.1109/MODELS-C59198.2023.00130}{doi:\nolinkurl{10.1109/MODELS-C59198.2023.00130}}


\bibitem[Pasquier et~al\mbox{.}(2023b)]%
        {pasquier23}
\bibfield{author}{\bibinfo{person}{Matthias Pasquier}, \bibinfo{person}{Ciprian
  Teodorov}, \bibinfo{person}{Fr{\'e}d{\'e }ric Jouault},
  \bibinfo{person}{Matthias Brun}, \bibinfo{person}{Luka Le~Roux}, {and}
  \bibinfo{person}{Lo{\"i}c Lagadec}.} \bibinfo{year}{2023}\natexlab{b}.
\newblock \showarticletitle{Temporal {{Breakpoints}} for {{Multiverse
  Debugging}}}. In \bibinfo{booktitle}{\emph{Proceedings of the 16th {{ACM
  SIGPLAN International Conference} } on {{Software Language Engineering}}}}
  \emph{(\bibinfo{series}{{{SLE}} 2023})}. \bibinfo{publisher}{Association for
  Computing Machinery}, \bibinfo{address}{New York, NY, USA},
  \bibinfo{pages}{125--137}.
\newblock
\showISBNx{9798400703966}
\href{https://doi.org/10.1145/3623476.3623526}{doi:\nolinkurl{10.1145/3623476.3623526}}


\bibitem[Pasquier et~al\mbox{.}(2022)]%
        {pasquier22}
\bibfield{author}{\bibinfo{person}{Matthias Pasquier}, \bibinfo{person}{Ciprian
  Teodorov}, \bibinfo{person}{Fr{\'e}d{\'e }ric Jouault},
  \bibinfo{person}{Matthias Brun}, \bibinfo{person}{Luka~Le Roux}, {and}
  \bibinfo{person}{Lo{\"i}c Lagadec}.} \bibinfo{year}{2022}\natexlab{}.
\newblock \showarticletitle{Practical Multiverse Debugging through User-Defined
  Reductions: Application to {{UML}} Models}. In
  \bibinfo{booktitle}{\emph{Proceedings of the 25th {{International
  Conference}} on {{Model Driven Engineering Languages}} and {{Systems}}}}
  \emph{(\bibinfo{series}{{{MODELS}} '22})}. \bibinfo{publisher}{Association
  for Computing Machinery}, \bibinfo{address}{New York, NY, USA},
  \bibinfo{pages}{87--97}.
\newblock
\showISBNx{978-1-4503-9466-6}
\href{https://doi.org/10.1145/3550355.3552447}{doi:\nolinkurl{10.1145/3550355.3552447}}


\bibitem[Perez and Granger(2007)]%
        {perez07}
\bibfield{author}{\bibinfo{person}{Fernando Perez} {and}
  \bibinfo{person}{Brian~E. Granger}.} \bibinfo{year}{2007}\natexlab{}.
\newblock \showarticletitle{{{IPython}}: {{A System}} for {{Interactive
  Scientific Computing}}}.
\newblock \bibinfo{journal}{\emph{Computing in Science \& Engineering}}
  \bibinfo{volume}{9}, \bibinfo{number}{3} (\bibinfo{date}{May}
  \bibinfo{year}{2007}), \bibinfo{pages}{21--29}.
\newblock
\showISSN{1558-366X}
\href{https://doi.org/10.1109/MCSE.2007.53}{doi:\nolinkurl{10.1109/MCSE.2007.53}}


\bibitem[Pothier et~al\mbox{.}(2007)]%
        {pothier07}
\bibfield{author}{\bibinfo{person}{Guillaume Pothier},
  \bibinfo{person}{{\'E}ric Tanter}, {and} \bibinfo{person}{Jos{\'e} Piquer}.}
  \bibinfo{year}{2007}\natexlab{}.
\newblock \showarticletitle{Scalable Omniscient Debugging}.
\newblock \bibinfo{journal}{\emph{SIGPLAN Not.}} \bibinfo{volume}{42},
  \bibinfo{number}{10} (\bibinfo{date}{Oct.} \bibinfo{year}{2007}),
  \bibinfo{pages}{535--552}.
\newblock
\showISSN{0362-1340}
\href{https://doi.org/10.1145/1297105.1297067}{doi:\nolinkurl{10.1145/1297105.1297067}}


\bibitem[P{\"o}tsch et~al\mbox{.}(2017)]%
        {potsch17}
\bibfield{author}{\bibinfo{person}{Albert P{\"o}tsch}, \bibinfo{person}{Florian
  Haslhofer}, {and} \bibinfo{person}{Andreas Springer}.}
  \bibinfo{year}{2017}\natexlab{}.
\newblock \showarticletitle{Advanced Remote Debugging of {{LoRa-enabled IoT}}
  Sensor Nodes}. In \bibinfo{booktitle}{\emph{Proceedings of the {{Seventh
  International Conference}} on the {{Internet}} of {{Things}}}}
  \emph{(\bibinfo{series}{{{IoT}} '17})}. \bibinfo{publisher}{Association for
  Computing Machinery}, \bibinfo{address}{New York, NY, USA},
  \bibinfo{pages}{1--2}.
\newblock
\showISBNx{978-1-4503-5318-2}
\href{https://doi.org/10.1145/3131542.3140259}{doi:\nolinkurl{10.1145/3131542.3140259}}


\bibitem[Ronsse and De~Bosschere(1999)]%
        {ronsse99}
\bibfield{author}{\bibinfo{person}{Michiel Ronsse} {and} \bibinfo{person}{Koen
  De~Bosschere}.} \bibinfo{year}{1999}\natexlab{}.
\newblock \showarticletitle{{{RecPlay}}: A Fully Integrated Practical
  Record/Replay System}.
\newblock \bibinfo{journal}{\emph{ACM Transactions on Computer Systems}}
  \bibinfo{volume}{17}, \bibinfo{number}{2} (\bibinfo{date}{May}
  \bibinfo{year}{1999}), \bibinfo{pages}{133--152}.
\newblock
\showISSN{0734-2071}
\href{https://doi.org/10.1145/312203.312214}{doi:\nolinkurl{10.1145/312203.312214}}


\bibitem[Rosenberg(1996)]%
        {rosenberg96}
\bibfield{author}{\bibinfo{person}{Jonathan~B. Rosenberg}.}
  \bibinfo{year}{1996}\natexlab{}.
\newblock \bibinfo{booktitle}{\emph{How Debuggers Work: Algorithms, Data
  Structures, and Architecture}}.
\newblock \bibinfo{publisher}{John Wiley \& Sons, Inc.},
  \bibinfo{address}{USA}.
\newblock
\showISBNx{978-0-471-14966-8}


\bibitem[Rossberg(2019)]%
        {rossberg19}
\bibfield{author}{\bibinfo{person}{Andreas Rossberg}.}
  \bibinfo{year}{2019}\natexlab{}.
\newblock \bibinfo{title}{{{WebAssembly}} ({{Release}} 1.0)}.
\newblock \bibinfo{howpublished}{https://webassembly.github.io/spec/}.
\newblock


\bibitem[Rossberg(2023)]%
        {rossberg23}
\bibfield{author}{\bibinfo{person}{Andreas Rossberg}.}
  \bibinfo{year}{2023}\natexlab{}.
\newblock \bibinfo{title}{{{WebAssembly}} ({{Release}} 2.0)}.
\newblock \bibinfo{howpublished}{https://webassembly.github.io/spec/}.
\newblock


\bibitem[Saeedi and Markov(2013)]%
        {saeedi13}
\bibfield{author}{\bibinfo{person}{Mehdi Saeedi} {and} \bibinfo{person}{Igor~L.
  Markov}.} \bibinfo{year}{2013}\natexlab{}.
\newblock \showarticletitle{Synthesis and Optimization of Reversible
  Circuits---a Survey}.
\newblock \bibinfo{journal}{\emph{ACM Comput. Surv.}} \bibinfo{volume}{45},
  \bibinfo{number}{2} (\bibinfo{date}{March} \bibinfo{year}{2013}),
  \bibinfo{pages}{21:1--21:34}.
\newblock
\showISSN{0360-0300}
\href{https://doi.org/10.1145/2431211.2431220}{doi:\nolinkurl{10.1145/2431211.2431220}}


\bibitem[Schordan et~al\mbox{.}(2016)]%
        {schordan16}
\bibfield{author}{\bibinfo{person}{Markus Schordan}, \bibinfo{person}{Tomas
  Oppelstrup}, \bibinfo{person}{David Jefferson}, \bibinfo{person}{Peter~D.
  Barnes}, {and} \bibinfo{person}{Dan Quinlan}.}
  \bibinfo{year}{2016}\natexlab{}.
\newblock \showarticletitle{Automatic {{Generation}} of {{Reversible C}}++
  {{Code}} and {{Its Performance}} in a {{Scalable Kinetic Monte-Carlo
  Application}}}. In \bibinfo{booktitle}{\emph{Proceedings of the 2016 {{ACM
  SIGSIM Conference}} on {{ Principles}} of {{Advanced Discrete Simulation}}}}
  \emph{(\bibinfo{series}{{{SIGSIM-PADS}} '16})}.
  \bibinfo{publisher}{Association for Computing Machinery},
  \bibinfo{address}{New York, NY, USA}, \bibinfo{pages}{111--122}.
\newblock
\showISBNx{978-1-4503-3742-7}
\href{https://doi.org/10.1145/2901378.2901394}{doi:\nolinkurl{10.1145/2901378.2901394}}


\bibitem[Schultz(2020)]%
        {schultz20}
\bibfield{author}{\bibinfo{person}{Ulrik~Pagh Schultz}.}
  \bibinfo{year}{2020}\natexlab{}.
\newblock \showarticletitle{Reversible {{Control}} of {{Robots}}}.
\newblock In \bibinfo{booktitle}{\emph{Reversible {{Computation}}: {{Extending
  Horizons}} of {{ Computing}}: {{Selected Results}} of the {{COST Action
  IC1405}}}}, \bibfield{editor}{\bibinfo{person}{Irek Ulidowski},
  \bibinfo{person}{Ivan Lanese}, \bibinfo{person}{Ulrik~Pagh Schultz}, {and}
  \bibinfo{person}{Carla Ferreira}} (Eds.). \bibinfo{publisher}{Springer
  International Publishing}, \bibinfo{address}{Cham},
  \bibinfo{pages}{177--186}.
\newblock
\showISBNx{978-3-030-47361-7}
\href{https://doi.org/10.1007/978-3-030-47361-7_8}{doi:\nolinkurl{10.1007/978-3-030-47361-7_8}}


\bibitem[Schultz and Axelsen(2016)]%
        {schultz16}
\bibfield{author}{\bibinfo{person}{Ulrik~Pagh Schultz} {and}
  \bibinfo{person}{Holger~Bock Axelsen}.} \bibinfo{year}{2016}\natexlab{}.
\newblock \showarticletitle{Elements of a {{Reversible Object-Oriented
  Language}}}. In \bibinfo{booktitle}{\emph{Reversible {{Computation}}}},
  \bibfield{editor}{\bibinfo{person}{Simon Devitt} {and} \bibinfo{person}{Ivan
  Lanese}} (Eds.). \bibinfo{publisher}{Springer International Publishing},
  \bibinfo{address}{Cham}, \bibinfo{pages}{153--159}.
\newblock
\showISBNx{978-3-319-40578-0}
\href{https://doi.org/10.1007/978-3-319-40578-0_10}{doi:\nolinkurl{10.1007/978-3-319-40578-0_10}}


\bibitem[Sen and Agha(2006)]%
        {sen06}
\bibfield{author}{\bibinfo{person}{Koushik Sen} {and} \bibinfo{person}{Gul
  Agha}.} \bibinfo{year}{2006}\natexlab{}.
\newblock \showarticletitle{Automated {{Systematic Testing}} of {{Open
  Distributed Programs}}}. In \bibinfo{booktitle}{\emph{Fundamental
  {{Approaches}} to {{Software Engineering}}}},
  \bibfield{editor}{\bibinfo{person}{Luciano Baresi} {and}
  \bibinfo{person}{Reiko Heckel}} (Eds.). \bibinfo{publisher}{Springer},
  \bibinfo{address}{Berlin, Heidelberg}, \bibinfo{pages}{339--356}.
\newblock
\showISBNx{978-3-540-33094-3}
\href{https://doi.org/10.1007/11693017_25}{doi:\nolinkurl{10.1007/11693017_25}}


\bibitem[Skvar{\v c}~Bo{\v z}i{\v c} et~al\mbox{.}(2024)]%
        {skvar-c24}
\bibfield{author}{\bibinfo{person}{Ga{\v s}per Skvar{\v c}~Bo{\v z}i{\v c}},
  \bibinfo{person}{Ibai Irigoyen~Ceberio}, {and} \bibinfo{person}{Albrecht
  Mayer}.} \bibinfo{year}{2024}\natexlab{}.
\newblock \showarticletitle{In-{{Field Debugging}} of {{Automotive
  Microcontrollers}} for {{ Highest System Availability}}}. In
  \bibinfo{booktitle}{\emph{Proceedings of the 2nd {{ACM International
  Workshop}} on {{ Future Debugging Techniques}}}}
  \emph{(\bibinfo{series}{{{DEBT}} 2024})}. \bibinfo{publisher}{Association for
  Computing Machinery}, \bibinfo{address}{New York, NY, USA},
  \bibinfo{pages}{2--8}.
\newblock
\showISBNx{9798400711107}
\href{https://doi.org/10.1145/3678720.3685314}{doi:\nolinkurl{10.1145/3678720.3685314}}


\bibitem[S{\"o}derby and De~Feo(2024)]%
        {soderby24}
\bibfield{author}{\bibinfo{person}{Karl S{\"o}derby} {and} \bibinfo{person}{Ubi
  De~Feo}.} \bibinfo{year}{2024}\natexlab{}.
\newblock \bibinfo{title}{Debugging with the Arduino {{IDE}} 2.0}.
\newblock
  \bibinfo{howpublished}{https://docs.arduino.cc/software/ide-v2/tutorials/ide-v2-debugger}.
\newblock


\bibitem[Steegen et~al\mbox{.}(2016)]%
        {steegen16}
\bibfield{author}{\bibinfo{person}{Sara Steegen}, \bibinfo{person}{Francis
  Tuerlinckx}, \bibinfo{person}{Andrew Gelman}, {and} \bibinfo{person}{Wolf
  Vanpaemel}.} \bibinfo{year}{2016}\natexlab{}.
\newblock \showarticletitle{Increasing {{Transparency Through}} a {{Multiverse
  Analysis}}}.
\newblock \bibinfo{journal}{\emph{Perspectives on Psychological Science}}
  \bibinfo{volume}{11}, \bibinfo{number}{5} (\bibinfo{date}{Sept.}
  \bibinfo{year}{2016}), \bibinfo{pages}{702--712}.
\newblock
\showISSN{1745-6916}
\href{https://doi.org/10.1177/1745691616658637}{doi:\nolinkurl{10.1177/1745691616658637}}


\bibitem[Steevens and Lauwaerts(2025)]%
        {steevens25:mio}
\bibfield{author}{\bibinfo{person}{Maarten Steevens} {and} \bibinfo{person}{Tom
  Lauwaerts}.} \bibinfo{year}{2025}\natexlab{}.
\newblock \bibinfo{booktitle}{\emph{TOPLLab/MIO: 0.1.4}}.
\newblock
\href{https://doi.org/10.5281/zenodo.15838624}{doi:\nolinkurl{10.5281/zenodo.15838624}}


\bibitem[Steinert et~al\mbox{.}(2012)]%
        {steinert12}
\bibfield{author}{\bibinfo{person}{Bastian Steinert}, \bibinfo{person}{Damien
  Cassou}, {and} \bibinfo{person}{Robert Hirschfeld}.}
  \bibinfo{year}{2012}\natexlab{}.
\newblock \showarticletitle{{{CoExist}}: Overcoming Aversion to Change}.
\newblock \bibinfo{journal}{\emph{SIGPLAN Not.}} \bibinfo{volume}{48},
  \bibinfo{number}{2} (\bibinfo{date}{Oct.} \bibinfo{year}{2012}),
  \bibinfo{pages}{107--118}.
\newblock
\showISSN{0362-1340}
\href{https://doi.org/10.1145/2480360.2384591}{doi:\nolinkurl{10.1145/2480360.2384591}}


\bibitem[Strijbol et~al\mbox{.}(2024)]%
        {strijbol24}
\bibfield{author}{\bibinfo{person}{Niko Strijbol}, \bibinfo{person}{Robbe
  De~Proft}, \bibinfo{person}{Klaas Goethals}, \bibinfo{person}{Bart Mesuere},
  \bibinfo{person}{Peter Dawyndt}, {and} \bibinfo{person}{Christophe
  Scholliers}.} \bibinfo{year}{2024}\natexlab{}.
\newblock \showarticletitle{Blink: {{An}} Educational Software Debugger for
  {{Scratch}}}.
\newblock \bibinfo{journal}{\emph{SoftwareX}}  \bibinfo{volume}{25}
  (\bibinfo{date}{Feb.} \bibinfo{year}{2024}), \bibinfo{pages}{101617}.
\newblock
\showISSN{2352-7110}
\href{https://doi.org/10.1016/j.softx.2023.101617}{doi:\nolinkurl{10.1016/j.softx.2023.101617}}


\bibitem[Torres~Lopez et~al\mbox{.}(2017)]%
        {torres17}
\bibfield{author}{\bibinfo{person}{Carmen Torres~Lopez},
  \bibinfo{person}{Elisa~Gonzalez Boix}, \bibinfo{person}{Christophe
  Scholliers}, \bibinfo{person}{Stefan Marr}, {and} \bibinfo{person}{Hanspeter
  M{\"o}ssenb{\"o}ck}.} \bibinfo{year}{2017}\natexlab{}.
\newblock \showarticletitle{A Principled Approach towards Debugging
  Communicating Event-Loops}. In \bibinfo{booktitle}{\emph{Proceedings of the
  7th {{ACM SIGPLAN International Workshop}} on {{Programming Based}} on
  {{Actors}}, {{Agents}}, and {{ Decentralized Control}}}}
  \emph{(\bibinfo{series}{{{AGERE}} 2017})}. \bibinfo{publisher}{Association
  for Computing Machinery}, \bibinfo{address}{New York, NY, USA},
  \bibinfo{pages}{41--49}.
\newblock
\showISBNx{978-1-4503-5516-2}
\href{https://doi.org/10.1145/3141834.3141839}{doi:\nolinkurl{10.1145/3141834.3141839}}


\bibitem[Torres~Lopez et~al\mbox{.}(2019)]%
        {torres19}
\bibfield{author}{\bibinfo{person}{Carmen Torres~Lopez},
  \bibinfo{person}{Robbert Gurdeep~Singh}, \bibinfo{person}{Stefan Marr},
  \bibinfo{person}{Elisa Gonzalez~Boix}, {and} \bibinfo{person}{Christophe
  Scholliers}.} \bibinfo{year}{2019}\natexlab{}.
\newblock \showarticletitle{Multiverse {{Debugging}}: {{Non-Deterministic
  Debugging}} for {{ Non-Deterministic Programs}} ({{Brave New Idea Paper}})}.
  In
  \bibinfo{booktitle}{\emph{{{DROPS-IDN}}/v2/Document/10.4230/{{LIPIcs}}.{{ECOOP}}.2019.27}}.
  \bibinfo{publisher}{Schloss Dagstuhl -- Leibniz-Zentrum f{\"u}r Informatik}.
\newblock
\href{https://doi.org/10.4230/LIPIcs.ECOOP.2019.27}{doi:\nolinkurl{10.4230/LIPIcs.ECOOP.2019.27}}


\bibitem[Valmari(1998)]%
        {valmari98}
\bibfield{author}{\bibinfo{person}{Antti Valmari}.}
  \bibinfo{year}{1998}\natexlab{}.
\newblock \showarticletitle{The State Explosion Problem}.
\newblock In \bibinfo{booktitle}{\emph{Lectures on {{Petri Nets I}}: {{Basic
  Models}}: {{Advances}} in {{Petri Nets}}}},
  \bibfield{editor}{\bibinfo{person}{Wolfgang Reisig} {and}
  \bibinfo{person}{Grzegorz Rozenberg}} (Eds.). \bibinfo{publisher}{Springer},
  \bibinfo{address}{Berlin, Heidelberg}, \bibinfo{pages}{429--528}.
\newblock
\showISBNx{978-3-540-49442-3}
\href{https://doi.org/10.1007/3-540-65306-6_21}{doi:\nolinkurl{10.1007/3-540-65306-6_21}}


\bibitem[Weiss(1975)]%
        {weiss75}
\bibfield{author}{\bibinfo{person}{Gideon Weiss}.}
  \bibinfo{year}{1975}\natexlab{}.
\newblock \showarticletitle{Time-Reversibility of Linear Stochastic Processes}.
\newblock \bibinfo{journal}{\emph{Journal of Applied Probability}}
  \bibinfo{volume}{12}, \bibinfo{number}{4} (\bibinfo{date}{Dec.}
  \bibinfo{year}{1975}), \bibinfo{pages}{831--836}.
\newblock
\showISSN{0021-9002, 1475-6072}
\href{https://doi.org/10.2307/3212735}{doi:\nolinkurl{10.2307/3212735}}


\bibitem[Yokoyama et~al\mbox{.}(2008)]%
        {yokoyama08}
\bibfield{author}{\bibinfo{person}{Tetsuo Yokoyama},
  \bibinfo{person}{Holger~Bock Axelsen}, {and} \bibinfo{person}{Robert
  Gl{\"u}ck}.} \bibinfo{year}{2008}\natexlab{}.
\newblock \showarticletitle{Principles of a Reversible Programming Language}.
  In \bibinfo{booktitle}{\emph{Proceedings of the 5th Conference on
  {{Computing}} Frontiers}} \emph{(\bibinfo{series}{{{CF}} '08})}.
  \bibinfo{publisher}{Association for Computing Machinery},
  \bibinfo{address}{New York, NY, USA}, \bibinfo{pages}{43--54}.
\newblock
\showISBNx{978-1-60558-077-7}
\href{https://doi.org/10.1145/1366230.1366239}{doi:\nolinkurl{10.1145/1366230.1366239}}


\bibitem[Yokoyama et~al\mbox{.}(2012)]%
        {yokoyama12}
\bibfield{author}{\bibinfo{person}{Tetsuo Yokoyama},
  \bibinfo{person}{Holger~Bock Axelsen}, {and} \bibinfo{person}{Robert
  Gl{\"u}ck}.} \bibinfo{year}{2012}\natexlab{}.
\newblock \showarticletitle{Towards a {{Reversible Functional Language}}}. In
  \bibinfo{booktitle}{\emph{Reversible {{Computation}}}},
  \bibfield{editor}{\bibinfo{person}{Alexis De~Vos} {and}
  \bibinfo{person}{Robert Wille}} (Eds.). \bibinfo{publisher}{Springer},
  \bibinfo{address}{Berlin, Heidelberg}, \bibinfo{pages}{14--29}.
\newblock
\showISBNx{978-3-642-29517-1}
\href{https://doi.org/10.1007/978-3-642-29517-1_2}{doi:\nolinkurl{10.1007/978-3-642-29517-1_2}}


\bibitem[Zelkowitz(1973)]%
        {zelkowitz73}
\bibfield{author}{\bibinfo{person}{M.~V. Zelkowitz}.}
  \bibinfo{year}{1973}\natexlab{}.
\newblock \showarticletitle{Reversible Execution}.
\newblock \bibinfo{journal}{\emph{Commun. ACM}} \bibinfo{volume}{16},
  \bibinfo{number}{9} (\bibinfo{date}{Sept.} \bibinfo{year}{1973}),
  \bibinfo{pages}{566}.
\newblock
\showISSN{0001-0782}
\href{https://doi.org/10.1145/362342.362360}{doi:\nolinkurl{10.1145/362342.362360}}


\bibitem[Zhu(2001)]%
        {zhu01}
\bibfield{author}{\bibinfo{person}{Ming-Yuan Zhu}.}
  \bibinfo{year}{2001}\natexlab{}.
\newblock \showarticletitle{Formal Specifications of Debuggers}.
\newblock \bibinfo{journal}{\emph{SIGPLAN Not.}} \bibinfo{volume}{36},
  \bibinfo{number}{9} (\bibinfo{date}{Sept.} \bibinfo{year}{2001}),
  \bibinfo{pages}{54--63}.
\newblock
\showISSN{0362-1340}
\href{https://doi.org/10.1145/609769.609778}{doi:\nolinkurl{10.1145/609769.609778}}


\bibitem[Zhu and Wang(1992)]%
        {zhu92}
\bibfield{author}{\bibinfo{person}{M.-Y. Zhu} {and} \bibinfo{person}{C.-W.
  Wang}.} \bibinfo{year}{1992}\natexlab{}.
\newblock \showarticletitle{Program Derivation in {{PowerEpsilon}}}. In
  \bibinfo{booktitle}{\emph{1992 {{Proceedings}}. {{The Sixteenth Annual
  International Computer Software}} and {{Applications Conference}}}}.
  \bibinfo{publisher}{IEEE Computer Society},
  \bibinfo{pages}{206,207,208,209,210,211--206,207,208,209,210,211}.
\newblock
\href{https://doi.org/10.1109/CMPSAC.1992.217567}{doi:\nolinkurl{10.1109/CMPSAC.1992.217567}}


\bibitem[Zhu and Wang(1991)]%
        {zhu91}
\bibfield{author}{\bibinfo{person}{Ming-Yuan Zhu} {and}
  \bibinfo{person}{{\relax CW} Wang}.} \bibinfo{year}{1991}\natexlab{}.
\newblock \bibinfo{booktitle}{\emph{A Higher-Order Lambda Calculus:
  {{PowerEpsilon}}}}.
\newblock \bibinfo{type}{{T}echnical {R}eport}. \bibinfo{institution}{Technical
  report, Beijing Institute of Systems Engineering, Beijing}.
\newblock


\end{thebibliography}

\newpage
\appendix

\section{Auxiliary Debugger Rules}\label{app:rules}

In this appendix, we present the auxiliary debugger rules for the multiverse debugger for WebAssembly, omitted from \cref{sec:multiverse-debugger} in the main text for brevity.
These are the rules for the step forward operations on primitive calls, and the run variant of the \textsc{step-mock} rule.

\begin{figure}[ht]
        \begin{mathpar}
                \inferrule[(\textsc{step-prim-in})]
       	            { 
                        K_n = \{ s ;v^*; v^*_0 (call \; j) \} \\
                        P(j) = p \\
                        p \in P^{In} \\
                        mocks(j, v^*_0) = \varepsilon \\
                        K_n \hookrightarrow_{i} K_{n+1} \\
                    }
                    { \langle \textsc{pause}, step, mocks, K_n \; | \; S^* \rangle
                      \hookrightarrow_{d,i}
                  \langle \textsc{pause}, \varnothing, mocks, K_{n+1} \; | \; S^* \cdot \{K_{n+1} , r_{nop}\} \rangle }

                \inferrule[(\textsc{step-prim-out})]
       	            { 
                        K_n = \{ s ;v^*; v^*_0 (call \; j) \} \\
                        P(j) = p \\
                        p \in P^{Out} \\
                        p(v^*_0) = \{ \textsf{ret } v, \textsf{cps } r \} \\
                        K_{n+1} =\{ s ;v^*; v \} \\
                    }
                    { \langle \textsc{pause}, step, mocks, K_n \; | \; S^* \rangle
                      \hookrightarrow_{d,i}
                  \langle \textsc{pause}, \varnothing, mocks, K_{n+1} \; | \; S^* \cdot \{K_{n+1} , r\} \rangle }
	\end{mathpar}
        \caption{The \emph{step forwards} rules for input and output primitives in the multiverse debugger for WebAssembly, without input mocking. Addition to \cref{fig:forwards-prim}.}
	\label{fig:forwards-prim-step}
\end{figure}

\begin{figure}[ht]
	\begin{mathpar}
                \inferrule[(\textsc{run-mock})]
       	            { 
                        K_n = \{ s ;v^*; v^*_0 (call \; j) \} \\
                        P(j) = p \\
                        p \in P^{In} \\
                        mock(j, v^*_0) = v \\
                        K'_{n+1} = \{ s';v'^*;v \} \\
                    }
                    { \langle \textsc{play}, \varnothing, mocks, K_n \; | \; S^* \rangle
                      \hookrightarrow_{d,i}
                  \langle \textsc{play}, \varnothing, mocks, K'_{n+1} \; | \; S^* \cdot \{K'_{n+1}, r_{nop}\} \rangle }

	\end{mathpar}
    \caption{The register and unregister rules for input mocking in the MIO multiverse debugger, as well as the \textsc{run-mock} variant. Addition to \cref{fig:mocking} from \cref{sec:mocking}.}
	\label{fig:mocking-additional}
\end{figure}

\section{Proofs and Auxiliary Lemmas}\label{app:proofs}

In this appendix, we present the lemmas and proofs for the multiverse debugger semantics for WebAssembly, omitted from \cref{sec:correctness}.
The first lemma states that the mocking of input values will not introduce states in the multiverse debugger that cannot be observed by the underlying language semantics.
Since the input values accepted by the \textsc{register-mock} rule must be part of the codomain of the primitive, this will always be the case.

\begin{lemma}[Mocking non-interference]\label{lemma:mocking-non-interference}
    Given a debugging state $dbg$ and $dbg \hookrightarrow_{d,i} dbg'$, which uses the \textsc{step-mock} rule, and $K$ in $dbg$, and $K'$ in $dbg'$, it holds that
    \[
        dbg \hookrightarrow_{d,i} dbg' \Rightarrow K \hookrightarrow_{i} K'
    \]
\end{lemma}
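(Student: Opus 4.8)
The plan is to prove the lemma by inverting the \textsc{step-mock} rule and then exhibiting a matching application of the \textsc{input-prim} rule from \cref{fig:language}. First I would invert the assumed step $dbg \hookrightarrow_{d,i} dbg'$: since it uses \textsc{step-mock}, its premises give that $K = \{s; v^*; v_0^*\,(call\;j)\}$ with $P(j) = p$, $p \in P^{In}$, $mocks(j, v_0^*) = v$, and $K' = \{s'; v'^*; v\}$. Because an input primitive call only consumes its arguments and pushes the return value, neither the global store nor the local values are modified, so $s' = s$ and $v'^* = v^*$; hence $K' = \{s; v^*; v\}$, exactly the right-hand side of the \textsc{input-prim} conclusion.

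The remaining obligation is to discharge the side condition $v \in \lfloor p(v_0^*)_{ret}\rfloor$ required by \textsc{input-prim}. This is where the real content lies, and I would establish it through a mock-validity invariant maintained across the debugging session: for every reachable $dbg$, each entry $(j, v_0^*) \mapsto v$ in its $mocks$ map satisfies $v \in \lfloor p \rfloor$ with $P(j) = p$. The invariant follows by a straightforward induction on the length of the session from $dbg_{start}$: the base case holds vacuously since $dbg_{start}$ has $mocks = \varnothing$; in the inductive step the only rule that inserts an entry is \textsc{register-mock}, whose premise $v \in \lfloor p \rfloor$ guarantees the new binding is valid, while \textsc{unregister-mock} only deletes bindings and every other rule leaves $mocks$ untouched.

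With the invariant in hand, the value $v = mocks(j, v_0^*)$ obtained from the inversion satisfies $v \in \lfloor p \rfloor$. Since the range of an input primitive is, by the requirements of \cref{sec:iorequirements}, independent of its arguments, $\lfloor p \rfloor$ and $\lfloor p(v_0^*)_{ret}\rfloor$ denote the same set, so $v \in \lfloor p(v_0^*)_{ret}\rfloor$. I can therefore instantiate \textsc{input-prim} at $K$ with exactly this $v$, yielding $K = \{s; v^*; v_0^*\,(call\;j)\} \hookrightarrow_i \{s; v^*; v\} = K'$, which is the desired conclusion.

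The main obstacle I anticipate is the codomain bookkeeping rather than the rule inversion itself. The \textsc{register-mock} rule carries the side condition $v \in \lfloor p \rfloor$, whereas \textsc{input-prim} is stated with $v \in \lfloor p(v_0^*)_{ret}\rfloor$, so the proof must justify that these coincide (via the argument-independence of input ranges) and, more importantly, must carry the validity of mocked values as a session-level invariant rather than reading it off a single step. If one wished to avoid the invariant, the lemma would need to be restated with an explicit well-formedness hypothesis on $dbg$; the invariant route keeps the statement clean and is the approach I would take, particularly since the same invariant is reusable in the soundness proof where \textsc{step-mock} reappears.
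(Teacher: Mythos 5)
Your proposal is correct and follows essentially the same route as the paper, which disposes of the lemma in one sentence by observing that \textsc{register-mock} only admits values in the primitive's codomain, so the mocked value is one that \textsc{input-prim} could have chosen non-deterministically. Your version simply makes explicit what the paper leaves implicit --- the session-level invariant on the $mocks$ map and the identification of $\lfloor p \rfloor$ with $\lfloor p(v_0^*)_{ret}\rfloor$ --- both of which are sound elaborations rather than departures.
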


\begin{proof}
    Since the \textsc{register-mock} rule only adds a new value to the $mock$ map when the value is in the codomain of the primitive, the value produced by the \textsc{step-mock} can also be chosen by the non-deterministic rule \textsc{input-prim}.
\end{proof}

A second lemma crucial to the soundness of the debugger, states that for any debugging state, there is a path in the underlying language semantics from the start to every snapshot in the snapshot list.

\begin{lemma}[Snapshot soundness]\label{lemma:snapshot-soundness}
    For any debugging state $dbg$ with program state $K_m$, and snapshots $S^*$,  it holds that
    \[
        dbg_{start} \hookrightarrow^*_{d,i} \{rs,msg,mocks,K_m,S^*\} \Rightarrow \forall \{K_n , r\} \in S^* : K_0 \hookrightarrow_i^* K_n
    \]
\end{lemma}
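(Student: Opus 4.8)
The plan is to prove Lemma~\ref{lemma:snapshot-soundness} by induction on the number of debugging steps in the session $dbg_{start} \hookrightarrow^*_{d,i} dbg$. The invariant I want to maintain is exactly the statement: every snapshot $\{K_n, r\}$ currently in the snapshot list $S^*$ is reachable from $K_0$ in the underlying semantics, i.e.\ $K_0 \hookrightarrow^*_i K_n$. The base case is $dbg = dbg_{start}$, whose snapshot list is the singleton $\{K_0, r_{nop}\}$; here $K_0 \hookrightarrow^*_i K_0$ holds trivially by the reflexive (zero-step) closure. For the inductive step I assume the invariant holds for some $dbg$ and examine each debugger rule that can produce $dbg'$, checking that the invariant is preserved.

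\textbf{The rule-by-rule analysis} splits naturally according to how each rule modifies $S^*$. The rules \textsc{pause}, \textsc{play}, \textsc{register-mock}, \textsc{unregister-mock}, \textsc{run}, and \textsc{step-forwards} leave $S^*$ unchanged, so the invariant carries over verbatim. The interesting forward rules are those that \emph{append} a snapshot: \textsc{run-prim-in}, \textsc{run-prim-out}, \textsc{step-prim-in}, \textsc{step-prim-out}, \textsc{step-mock}, and \textsc{run-mock}. For the non-mocking primitive rules, each performs a genuine underlying transition $K_m \hookrightarrow_i K_{m+1}$ and appends $\{K_{m+1}, r\}$; since the old list already satisfied the invariant and $K_m$ was the current program state (which by a small auxiliary argument is itself reachable from $K_0$), I extend the witnessing path by one step to obtain $K_0 \hookrightarrow^*_i K_{m+1}$. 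For the mocking rules \textsc{step-mock} and \textsc{run-mock}, the appended state $K'_{m+1}$ is not produced by a literal underlying step, so here I invoke \Cref{lemma:mocking-non-interference}, which guarantees $K_m \hookrightarrow_i K'_{m+1}$; composing with the reachability of $K_m$ again yields reachability of the new snapshot.

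\textbf{The subtle cases} are the backward rules \textsc{step-back} and \textsc{step-back-compensate}, since they \emph{remove} or preserve snapshots rather than adding them. For \textsc{step-back}, no snapshot is removed—the list $S^* \cdot \{K_n, r\}$ is unchanged—so the invariant is immediate. For \textsc{step-back-compensate}, the last snapshot $\{K_m, r'\}$ is dropped, leaving $S^* \cdot \{K_n, r\}$; since removing an element from a list that satisfies a universally quantified property over its elements preserves that property, the invariant holds for the smaller list with no new obligation. A clean way to handle the recurring need for ``the current program state is reachable from $K_0$'' is to strengthen the induction hypothesis slightly: I maintain both that every snapshot state is reachable \emph{and} that the current state $K_m$ is reachable, which is essentially \Cref{theorem:debugger-soundness} specialized to the running configuration.

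\textbf{The main obstacle} I anticipate is making the interaction between the backward rules and the append-only reasoning fully rigorous, specifically justifying why the current state $K_m$ appearing in the premise of an append rule is itself reachable from $K_0$. After a \textsc{step-back-compensate}, the current state becomes $K_{m-1}$ reached by replaying $K_n \hookrightarrow_i^{m-n-1} K_{m-1}$ from a restored snapshot; I must confirm that this replayed state is reachable from $K_0$, which follows by composing the snapshot's witnessing path $K_0 \hookrightarrow^*_i K_n$ (available from the inductive hypothesis) with the replay transitions. Threading this current-state reachability claim through the induction—rather than treating it as a separate fact—is the delicate bookkeeping that ties the whole proof together, and I expect it to be where most of the care is required.
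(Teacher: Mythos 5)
Your proof follows essentially the same route as the paper's: induction over the debugging steps, the same partition of rules into those that leave $S^*$ unchanged, those that append a snapshot, and \textsc{step-back-compensate} which only removes one, and the same appeal to the mocking non-interference lemma for the \textsc{step-mock}/\textsc{run-mock} cases. The one substantive difference is that you strengthen the induction invariant to also carry reachability of the \emph{current} program state $K_m$ from $K_0$, and you explicitly re-establish it after \textsc{step-back-compensate} by composing the restored snapshot's witnessing path with the replay transitions $K_n \hookrightarrow_i^{m-n-1} K_{m-1}$. The paper's proof skips this step: when it handles the appending rules it concludes $K_0 \hookrightarrow_i^* K_{m+1}$ from the single step $K_m \hookrightarrow_i K_{m+1}$, tacitly assuming $K_0 \hookrightarrow_i^* K_m$ even though the stated invariant only covers states stored in the snapshot list, and the current state after a \textsc{step-back} need not be one of them. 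Your strengthened invariant closes that gap cleanly and also avoids any appearance of circularity with the debugger-soundness theorem (which is itself proved using this lemma); in that respect your version is the more careful of the two.
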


\begin{proof}
    By induction over the snapshots in the steps in $dbg_{start} \hookrightarrow^*_{d,i} \{rs,msg,mocks,K_a,S^*\}$.
    \begin{description}
        \item[Base case] We have $S^* = \{K_0, r_{nop}\}$, and the lemma holds trivially since $K_0 \hookrightarrow_i^* K_0$.

        \item[Induction case] By the induction hypothesis, $dbg_{start} \hookrightarrow^*_{d,i} \{rs',msg',mocks',K_m,S'^*\}$, and $\forall \{K_n , r'\} \in S'^* : K_0 \hookrightarrow_i^* K_n$.
            Now we prove the theorem still holds after: $$\{rs',msg',mocks',K_m,S'^*\} \hookrightarrow_{d,i} \{rs,msg,mocks,K_{a},S^*\}$$
            The possible steps fall in five cases.

    \begin{itemize}
        \item For the rules that do not change the snapshot list, \textsc{run}, \textsc{step-forwards}, \textsc{pause}, \textsc{play}, \textsc{register-mock}, \textsc{unregister-mock}, or \textsc{step-back}, the theorem holds trivially.
        \item For the rules \textsc{run-prim-in} and \textsc{step-prim-in}, $K_a = K_{m+1}$, and the rules extend the snapshot list with $\{K_{m+1},r_{nop}\}$. We know by the assumptions of the rule that $K_m \hookrightarrow_i K_{m+1}$, so the theorem holds.
        \item For the rules \textsc{run-prim-out} and \textsc{step-prim-out} $K_a = K_{m+1}$, and the rules extend the snapshot list with $\{K_{m+1},r\}$. Both rules satisfy the assumptions for the underlying language rule \textsc{output-prim}, and the state $K_{m+1}$ is exactly the same as the state reached by \textsc{output-prim}. So we have $K_m \hookrightarrow_i K_{m+1}$, and the theorem holds.
        \item The rule \textsc{step-mock} adds $\{K_{m+1},r_{nop}\}$ to the snapshot list, $K_a = K_{m+1}$, and we know that $K_m \hookrightarrow_i K_{m+1}$ by \cref{lemma:mocking-non-interference}, so the theorem holds.
        \item The \textsc{step-back-compensate} rule only removes a snapshot from the snapshot list, so by the induction hypothesis, the theorem holds.
    \end{itemize}
    \end{description}
\end{proof}

Now we give the proof for debugger soundness, where the snapshot soundness lemma will be crucial. % followed by the auxiliary lemmas for snapshot soundness (\cref{lemma:snapshot-soundness}), checkpoint existence (\cref{lemma:checkpoint-existence}), and deterministic path (\cref{lemma:deterministic-path}).

\newtheorem*{theorem*}{Theorem}

\begin{theorem*}[Debugger soundness]
    \theoremdebuggersoundness
\end{theorem*}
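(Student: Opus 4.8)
The plan is to prove the statement by induction on the length of the debugging session, i.e. on the number of debugger steps in $dbg_{start} \hookrightarrow^*_{d,i} dbg$. The central tool will be the snapshot soundness lemma (\cref{lemma:snapshot-soundness}), which guarantees that every snapshot $\{K_n, r\}$ occurring in the snapshot list of a reachable debugging state is itself reachable in the underlying semantics via $K_0 \hookrightarrow^*_i K_n$. For the base case, where the session has length zero, we have $dbg = dbg_{start}$ and hence $K_n = K_0$, so $K_0 \hookrightarrow^*_i K_0$ holds reflexively.

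For the inductive step, I would assume the last transition is $dbg' \hookrightarrow_{d,i} dbg$, where $dbg'$ contains program state $K_m$ and the induction hypothesis gives $K_0 \hookrightarrow^*_i K_m$; the argument then proceeds by case analysis on the rule used for this last transition. The rules \textsc{pause}, \textsc{play}, \textsc{register-mock}, and \textsc{unregister-mock} leave the program state unchanged, so the program state of $dbg$ equals $K_m$ and the conclusion follows immediately from the hypothesis. For the forward-stepping rules \textsc{run}, \textsc{step-forwards}, \textsc{run-prim-in}, \textsc{step-prim-in}, \textsc{run-prim-out}, and \textsc{step-prim-out}, each rule's premises establish a single genuine underlying step $K_m \hookrightarrow_i K_{m+1}$ (the primitive rules reproduce exactly the states reached by \textsc{input-prim} and \textsc{output-prim}), so appending this step to the hypothesis yields $K_0 \hookrightarrow^*_i K_{m+1}$.

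The two non-trivial cases are mocking and stepping backwards. For \textsc{step-mock}, the program state advances from $K_m$ to $K'_{m+1}$, and I would invoke the mocking non-interference lemma (\cref{lemma:mocking-non-interference}) to obtain $K_m \hookrightarrow_i K'_{m+1}$, since any mocked return value lies in the codomain of the primitive and could equally have been produced by the non-deterministic \textsc{input-prim} rule; composing with the hypothesis closes this case. The main obstacle is the backward case, covered by \textsc{step-back} and \textsc{step-back-compensate}. Here the resulting program state $K_{m-1}$ of $dbg$ is \emph{not} produced by stepping forward from the source state $K_m$, but by restoring the program state from a snapshot $\{K_n, r\}$ already present in the snapshot list and replaying $K_n \hookrightarrow_i^{m-n-1} K_{m-1}$. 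Because the induction hypothesis only tells us that the source state $K_m$ is reachable—the wrong state for our purposes—the hypothesis is of no direct use, and this is precisely why the snapshot soundness lemma is needed: it supplies $K_0 \hookrightarrow^*_i K_n$ for the restored snapshot state directly.

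Finally, concatenating the replayed steps with this path yields $K_0 \hookrightarrow^*_i K_n \hookrightarrow_i^{m-n-1} K_{m-1}$, and hence the desired forward execution to the program state of $dbg$. I expect the only delicate point here to be observing that the replay transitions are honest $\hookrightarrow_i$ steps of the underlying semantics: this holds because snapshots are taken at every primitive call, so no primitive occurs strictly between two consecutive snapshots, making the replayed segment deterministic and free of non-deterministic or external effects. The compensating action $r'$ applied by \textsc{step-back-compensate} affects only the external environment and not the WebAssembly configuration, so it is irrelevant to this reachability argument and would instead be handled separately by \cref{theorem:compensate-soundness}.
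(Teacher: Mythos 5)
Your proposal is correct and follows essentially the same route as the paper's proof: induction on the length of the debugging session, with the trivial cases for state-preserving and forward rules, \cref{lemma:mocking-non-interference} for \textsc{step-mock}, and \cref{lemma:snapshot-soundness} to supply reachability of the restored snapshot state in the backward cases before appending the replayed segment $K_n \hookrightarrow_i^{m-n-1} K_{m-1}$. The only cosmetic difference is that you anchor the base case at a session of length zero whereas the paper treats a single step, which does not affect the argument.
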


\begin{proof}
    By induction over the steps in the path $dbg_{start} \hookrightarrow^*_{d,i} dbg$.

    \begin{description}
        \item[Base case] We have $ dbg_{start} =  \langle \textsc{pause}, msg, \lambda z .  \lambda y . \lambda x . \varepsilon, K_0 \; | \; \{ K_0 , r_{nop} \} \rangle$, and the length of the path is $1$.
    The rules \textsc{run}, \textsc{pause}, \textsc{run-prim-in}, \textsc{run-prim-out}, do not apply since the execution state is not \textsc{play}.
    Similarly, the \textsc{step-back} and \textsc{step-back-compensate}, do not apply since the index label for $K$ is zero, and \textsc{step-mock} does not apply because the mocking map is empty.
    The rules \textsc{play}, \textsc{register-mock}, and \textsc{unregister-mock} do not change the state $K_0$, and $K_0 \hookrightarrow^*_i K_0$ holds for length $0$.
    The \textsc{step-forwards} and the \textsc{step-prim-in} rules use the underlying language semantics to step to $K_1$.
    Finally, the requirements for the \textsc{output-prim} in the underlying language semantics are also met by the \textsc{step-prim-out} rule.
    The \textsc{step-prim-out} rule moves the state to $K_{1} = \{s,v^*,v\}$, which is exactly the same state reached by the \textsc{output-prim} rule in the underlying language semantics.
    So the theorem holds for the base case.

     \item[Induction case] We have a debugging state $dbg'$ with WebAssembly state $K'$, we know that $dbg_{start} \hookrightarrow^*_{d,i} dbg'$ holds, and there is a step $dbg' \hookrightarrow_{d,i} dbg$.
    Since $dbg'$ can have any execution state, any message, and any mocking map, we need to consider all possible cases.
    For the rules which do not change the state $K$, the \textsc{play}, \textsc{pause}, \textsc{register-mock}, and \textsc{unregister-mock} rules, and the theorem holds trivially.
    For the \textsc{run}, \textsc{step-forwards}, \textsc{run-prim-in}, \textsc{step-prim-in}, by the induction hypothesis we know that $K_{0} \hookrightarrow^*_i K'$, and the rules take the step $K' \hookrightarrow_i K$, so the theorem holds. % by the same reasoning as in the base case.
    If the mocking map returns a mocked value, the \textsc{step-mock} rule applies, and given the induction hypothesis and \cref{lemma:mocking-non-interference}, the theorem holds.
    However, stepping backwards is more complex.
    In case the final step uses \textsc{step-back}, the rule jumps to a state $K_n$ from the snapshot list.
    By \cref{lemma:snapshot-soundness}, we know that $K_0 \hookrightarrow_i^* K_n$.
    Since in the assumptions of the \textsc{step-back} rule, we know that $K_n \hookrightarrow^{m-n-1}_i K_{m-1}$, the theorem holds.
    The case for the \textsc{step-back-compensate} rule is identical.
    \end{description}
\end{proof}

\begin{theorem*}[Debugger completeness]
    \theoremdebuggercompleteness
\end{theorem*}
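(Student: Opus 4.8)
The plan is to prove completeness by induction on the length of the underlying execution $K_0 \hookrightarrow^*_i K_n$, constructing a matching debugging session message-by-message so that it ends in a (paused) configuration whose program state is exactly $K_n$. For the base case the path has length zero, so $K_n = K_0$; the trivial session consisting of $dbg_{start}$ alone already contains $K_0$, since $dbg_{start}$ holds $K_0$ by definition and $dbg_{start} \hookrightarrow^*_{d,i} dbg_{start}$ holds vacuously.

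For the inductive step, suppose $K_0 \hookrightarrow^*_i K_{n-1} \hookrightarrow_i K_n$, and assume by the induction hypothesis a session $dbg_{start} \hookrightarrow^*_{d,i} dbg'$ where $dbg'$ contains $K_{n-1}$ and sits in the \textsc{pause} state. I would extend this session by a short, fixed block of messages chosen according to which underlying rule justifies the final transition $K_{n-1} \hookrightarrow_i K_n$. Crucially, since a debugging session is driven by an \emph{arbitrary} sequence of incoming messages, I am free to inject exactly the messages I need; and because $dbg_{start}$ is paused while every forward-step rule and \textsc{register-mock} preserve the paused execution state, no extra $pause$ or $play$ transitions are required to keep the step rules applicable.

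The case analysis on the final underlying transition is as follows. If the next instruction is not a primitive call (the \textsf{non-prim} case), I send a $step$ message and apply \textsc{step-forwards}, whose premise is precisely the same transition $K_{n-1} \hookrightarrow_i K_n$; the resulting configuration is paused and contains $K_n$. If the transition is an \textsc{output-prim} step, I again send a $step$ message and apply \textsc{step-prim-out}; as already noted in the soundness proof, the state reached by \textsc{step-prim-out} coincides exactly with the one produced by \textsc{output-prim}, so the debugger lands on $K_n$, the only difference being that the debugger additionally records the compensating action in its snapshot list, which completeness does not constrain.

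The only genuinely delicate case — and the one I expect to be the main obstacle — is the non-deterministic input primitive. Here \textsc{input-prim} may have chosen \emph{any} value $v \in \lfloor p(v^*_0)_{ret} \rfloor$, and a plain forward step in the debugger could in principle branch to a different value. To force the debugger onto the intended branch I first send a $mock(j, v^*_0, v)$ message and apply \textsc{register-mock}; this is always admissible because the side condition $v \in \lfloor p \rfloor$ demanded by \textsc{register-mock} is exactly the codomain membership guaranteed by the \textsc{input-prim} rule. I then send a $step$ message and apply \textsc{step-mock}, which deterministically places the mocked value $v$ on the stack and thereby reaches the same $K_n$. This argument is essentially the converse of \cref{lemma:mocking-non-interference}: whereas that lemma shows every mocked step is realizable in the underlying semantics, here I use the matching codomain condition to show that every underlying input choice can be realized as a mocked step, which is what makes the debugger complete with respect to the non-deterministic base language.
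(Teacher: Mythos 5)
Your proposal is correct and follows essentially the same route as the paper's own proof: a step-by-step simulation of the underlying execution with the identical case split (\textsc{step-forwards} for non-primitive steps, \textsc{step-prim-out} for output primitives, and \textsc{register-mock} followed by \textsc{step-mock} to pin down the non-deterministic input choice). You merely make the induction and the bookkeeping about remaining in the \textsc{pause} state explicit, which the paper leaves implicit.
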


\begin{proof}
    For any step $K \hookrightarrow_i K'$ in the path $K_{0} \hookrightarrow^*_i K'$, either we can apply the \textsc{step-forward} or \textsc{step-prim-out} rules to the debugging state $dbg$ with state $K$.
    Or, $K$ is a call to an input primitive, in which case $K \hookrightarrow_i K'$ is non-deterministic.
    However, since we know the return value $v$ in $K'$, we can apply the \textsc{register-mock} rule, after which, the \textsc{step-mock} rule is applicable.
    This rule will move the state to $K'' = {s;v^*;v}$, which is the same as $K'$.
    So the theorem holds for all steps in the path $K_{0} \hookrightarrow^*_i K'$.
\end{proof}

Finally, we give the proof for compensation soundness (\cref{theorem:compensate-soundness}). %, and the needed lemmas.
But first, for completeness, we provide the definition of external effects equivalence for a series of debugging rules and a series of rules in the underlying language semantics.

\begin{definition}[External effects equivalence]\label{def:external-effects}
    Let $t$ be a series of rules in the debugging semantics, and $q$ a series of rules in the underlying language semantics.
    When for each \textsc{step-prim-out} with $p$ in $external(t)$, either the next \textsc{step-back-compensate} in $external(t)$ uses $p_{cps}$, or there is an \textsc{output-prim} with $p$ in $external(q)$, we say that

    $$external(t) \equiv external(q)$$
\end{definition}

\begin{theorem*}[Compensation soundness]
    \theoremcompensatesoundness
\end{theorem*}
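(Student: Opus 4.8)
The plan is to follow the informal argument sketched just before the statement: read a debugging session as a walk in the multiverse tree and collapse it to the unique simple path from the root to $K_n$. First I would make precise that the debugging configurations form a rooted tree. Nodes are the reachable program states $K_n$ (loops are unrolled, so states never coalesce), each forward-stepping rule (\textsc{run}, \textsc{step-forwards}, \textsc{run-prim-in/out}, \textsc{step-prim-in/out}, \textsc{step-mock}, \textsc{run-mock}) traverses a tree edge downward, and \textsc{step-back}/\textsc{step-back-compensate} traverse an edge upward; the rules \textsc{play}, \textsc{pause}, \textsc{register-mock}, and \textsc{unregister-mock} leave $K_n$ fixed and, by the definition of $external$, contribute nothing to either side. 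The structural invariant I would establish by induction on the debugging steps is that the snapshot list $S^*$ of any reachable $dbg$ is exactly the sequence of $\{K,\,p_{cps}\}$ recorded at the primitive-call points along the current root-to-$K_n$ path. This invariant is what makes the compensation discipline last-in-first-out and guarantees that \textsc{step-back-compensate} always applies the compensating action of the most recently executed output primitive.

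With the tree structure in hand I would invoke the standard decomposition of a walk in a tree: any walk from the root to $K_n$ equals the simple root-to-$K_n$ path with finitely many closed sub-walks (excursions returning to an already-visited node) spliced in. For the external bookkeeping only \textsc{step-prim-out} (forward) and \textsc{step-back-compensate} (backward) steps are relevant, since input primitives carry $r_{nop}$ and the replay performed inside a backward step contains no primitive calls and hence no external effect. I would then prove the central lemma that every closed sub-walk is a null operation on the external state, by induction on the nesting depth of excursions. An excursion into a child $w$ has the shape ``traverse edge $v\!\to\!w$ forward, perform a closed sub-walk at $w$, traverse $w\!\to\!v$ backward''; by the LIFO invariant the inner closed walk leaves the snapshot stack unchanged, so if $v\!\to\!w$ is an output primitive its snapshot remains on top and the closing backward step fires \textsc{step-back-compensate} with precisely that $p_{cps}$, matching the opening \textsc{step-prim-out} in the sense of \Cref{def:external-effects}. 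This pairs every forward output inside a closed walk with its compensation.

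Finally I would observe that after deleting all such matched pairs only the forward-output steps of the simple path survive, and that this path is literally a forward execution $K_0 \hookrightarrow^*_i K_n$. Each surviving \textsc{step-prim-out} on the path therefore corresponds to an \textsc{output-prim} in $external(K_0 \hookrightarrow^*_i K_n)$, while mocked input steps are harmless: by \Cref{lemma:mocking-non-interference} the mocked value lies in the codomain of the primitive and is reproducible by \textsc{input-prim}, and input primitives produce no external effect. Assembling these facts yields $external(dbg_{start} \hookrightarrow^*_{d,i} dbg) \equiv external(K_0 \hookrightarrow^*_i K_n)$.

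I expect the main obstacle to be the null-closed-walk lemma, and specifically the care needed to show that the compensation closing an excursion uses exactly the $p_{cps}$ of its matching forward output rather than some intervening one. Between a forward output and its compensation the walk may execute further nested output/compensation pairs, so the matching is genuinely LIFO and ``the next \textsc{step-back-compensate}'' of \Cref{def:external-effects} must be read as the one paired with the output under this stack discipline; carrying the snapshot-stack invariant through the induction on nesting depth is what discharges this cleanly.
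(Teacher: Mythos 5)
Your proposal follows essentially the same route as the paper's proof: view the debugging session as a walk in the multiverse tree, reduce it to the simple root-to-$K_n$ path by excising closed sub-walks, argue that closed walks have null external effect because each forward output is paired with its compensation, and identify the residual path's outputs with those of $K_0 \hookrightarrow^*_i K_n$. The one difference is that you make explicit (via the snapshot-stack LIFO invariant and induction on excursion nesting depth) the matching of each \textsc{step-prim-out} with the \textsc{step-back-compensate} that undoes it, a point the paper's proof asserts more informally; this is a welcome tightening rather than a different argument.
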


\begin{proof}
    The multiverse tree is a connected acyclic graph, where each edge is a step in the underlying language semantics.
    Any debugging session $dbg_{start} \hookrightarrow^*_{d,i} dbg$ can be seen as a walk over the multiverse tree, where edges can be visited more than once, and walking over an edge has a direction.
    By debugger soundness (\cref{theorem:debugger-soundness}), we know that for any debugging session there is a path in the underlying language semantics $K_0 \hookrightarrow^*_{i} K_n$.
    The debugging session constructed in the proof for the debugger completeness (\cref{theorem:debugger-completeness}), shows that for any path in the underlying language semantics, there is a debugging session $P$ that ends in $K_n$, but does not use the \textsc{step-back} or \textsc{step-back-compensate} rules.
    This walk $P$ corresponds to the path from $K_0$ to $K_n$ in the multiverse tree, which only visits each edge once.
    This means that: $$external(P) = external(K_0 \hookrightarrow^*_{i} K_n)$$

Now we can show that any walk over the multiverse tree that ends in state $dbg$ can be reduced to the path $P$ by only removing closed walks.
Take a state $dbg'$ on the path from $dbg_{start}$ to $dbg$.
Say that step $s$ is the first step in the debugging session that ends in the state $dbg'$, and $s'$ is the last step to end in the state $dbg'$.
Then the steps between $s$ and $s'$ must form a closed walk, and we know that no step will come back to $dbg'$.
This holds for each state on the path, and therefore the entire session can be reduced to a path.
Removing a closed walk has no effect on the external world, since each forward visited of an edge will have a corresponding backward visit in the walk.
In other words, the effect on the environment by a closed walk is equivalent to the empty list.
This means that:

$$external(P) = external(dbg_{start} \hookrightarrow^*_{d,i} dbg)$$

Therefore, the theorem holds.
\end{proof}

\end{document}